\newcommand{\opt}{{\sc OPT}}
\newcommand{\mypara}[1]{\medskip \noindent {\bf #1}}
\begin{document}
\title{Augmentation based Approximation Algorithms \\ for Flexible Network Design}
\author{
     Chandra Chekuri\thanks{Dept.\ of Computer Science, Univ.\ of Illinois,
    Urbana-Champaign, Urbana, IL 61801. \texttt{chekuri@illinois.edu}. Supported in part by NSF grants
     CCF-1910149 and CCF-1907937.}
   \and
  Rhea Jain\thanks{Dept.\ of Computer Science, Univ.\ of Illinois,
    Urbana-Champaign, Urbana, IL 61801. \texttt{rheaj3@illinois.edu}. Supported in part by NSF grant
    CCF-1910149.}}

\date{\today}

\maketitle

\begin{abstract}
Adjiashvili \cite{Adjiashvili13} introduced network design in a non-uniform fault model: the edge set of a given graph is partitioned into \emph{safe} and \emph{unsafe} edges. A vertex pair $(s,t)$ is $(p,q)$-flex-connected if $s$ and $t$ have $p$ edge-connectivity even after the removal/failure of any $q$ unsafe edges. The goal is to choose a min-cost subgraph $H$ of a given graph $G$ such that $H$ has desired flex-connectivity for a given set of vertex pairs. This model generalizes the well-studied edge-connectivity based network design, however, even special cases are provably much harder to approximate \cite{AdjiashviliHM22,AdjiashviliHMS20}.

The approximability of network design in this model has been mainly studied for two settings of interest: (i) single pair setting under the names FTP and FTF (fault tolerant path and fault tolerant flow) \cite{AdjiashviliHMS20} (ii) spanning setting under the name FGC (flexible graph connectivity) \cite{AdjiashviliHM20,AdjiashviliHM22,BoydCHI22}. There have been several positive results in these papers. However, despite similarity to the well-known network design problems, this new model has been challenging to design approximation algorithms for, especially when $p,q \ge 2$. We obtain two results that advance our understanding of algorithm design in this model.
\begin{itemize}
  \item We obtain a $5$-approximation for the $(2,2)$-flex-connectivity for a single pair $(s,t)$. Previously no non-trivial approximation was known for this setting.
  \item We obtain $O(p)$ approximation for $(p,2)$ and $(p,3)$-FGC for any $p \ge 1$, and for $(p,4)$-FGC for any even $p$. We obtain an $O(q)$-approximation for $(2,q)$-FGC for any $q \ge 1$. Previously only a  $O(q \log n)$-approximation was known for these settings \cite{BoydCHI22}. 
\end{itemize}
Our results are obtained via the augmentation framework where we identify a structured way to use the well-known $2$-approximation for covering uncrossable families of cuts. Our analysis also proves corresponding integrality gap bounds on an LP relaxation that we formulate.
\end{abstract}

\newpage

\section{Introduction}
Network design is an important area of research in discrete and
combinatorial optimization that is motivated by both practical and
theoretical considerations. A broad subclass involves finding a
minimum cost subgraph $H$ of a given graph $G$ that satisfies some
desired connectivity requirements\footnote{We are mainly interested in
  edge-induced subgraphs. Thus, if $G=(V,E)$, $H=(V,F)$ for some
  $F \subseteq E$.}. Special cases include well-studied classical
problems such as the shortest $s$-$t$ path problem, minimum spanning
tree, Steiner tree, and Steiner forest. Generalizations to higher
connectivity requirements such as $k$-edge-connected spanning subgraph
($k$-ECSS) and the survivable network design (EC-SNDP) are important
problems in applications to fault-tolerant network design. Many of
these problems, including Steiner tree, are NP-Hard and APX-hard to
approximate. EC-SNDP (also referred to as the Steiner network problem)
is the following: given connectivity requirements specified by
$r: V \times V \rightarrow \mathbb{Z}_+$ for each pair of vertices
$(u,v)$, find a min-cost subgraph $H$ of $G$ such that each pair
$(u,v)$ is $r(u,v)$-edge-connected in $H$.  EC-SNDP, which captures
several connectivity problems as special cases, admits a
$2$-approximation \cite{Jain01}. Network design problems have been instrumental in the
development of several fundamental and advanced techniques in the
design of approximation algorithms --- see
\cite{Vazirani-approx,WilliamsonS-approx,LauRS-book,KortsarzN10,GuptaK11}.

Over the years, several models have been proposed to capture
robustness of a network to faults. In this paper we are interested in
a specific model that was first proposed by Adjiashvili
\cite{Adjiashvili13}. Recently there has been important algorithmic
progress
\cite{AdjiashviliHM20,AdjiashviliHM22,AdjiashviliHMS20,BoydCHI22} and
we elaborate on the results in these papers after describing the model
and relevant technical background. In this model the input is an edge-weighted
undirected graph $G=(V,E)$ where the edge set $E$ is partitioned to
\emph{safe} edges $\calS$ and \emph{unsafe} edges $\calU$. The
assumption, as the names suggest, is that unsafe edges can fail while
safe edges cannot fail. We say that a vertex-pair $(s,t)$ is
$(p,q)$-flex-connected\footnote{We borrow the terminology from the
  term \emph{flexible graph connectivity} coined in
  \cite{AdjiashviliHM22} and used also in \cite{BoydCHI22}.}  in a
subgraph $H$ of $G$ if $s$ and $t$ are $p$-edge-connected after
deleting from $H$ any subset of at most $q$ unsafe edges. Equivalently,
we require that any cut $\delta(S)$ that separates $s$ from $t$
contains $p$ safe edges or $(p+q)$ edges in total. Network design in
this fault model takes the following form: given $G=(V,E)$ with
$E = \calS \uplus \calU$ and edge costs
$c: E \rightarrow \mathbb{R}_+$, find a min-cost subgraph $H$ such
that $H$ satisfies flex-connectivity given by specification for some
vertex pairs. We observe that this model generalizes standard
edge-connectivity problems. If all edges are safe, that is
$E = \calS$, then asking for $(p,0)$-flex-connectivity for vertex pair
$(s,t)$ is same as asking for $p$-edge-connectivity between $s$ and
$t$. One can model edge-connectivity in an alternate manner; if all
edges are unsafe, that is $E = \calU$, then asking for
$(1,q-1)$-flex-connectivity for $(s,t)$ is same as asking for
$q$-edge-connectivity.  However, $(p,q)$-flex-connectivity is more
general and complex.

In analogy with EC-SNDP, we define the Flex-SNDP problem: the input is
the graph as above with safe/unsafe partition of the edge set, and a
$(p_{u,v},q_{u,v})$-flex-connectivity requirement for each pair
$(u,v)$ of vertices in $G$.  The goal is to find a min-cost subgraph
$H$ of $G$ such that each $(u,v)$ is
$(p_{u,v},q_{u,v})$-flex-connected in $H$. We let $(p,q)$-Flex-SNDP
denote the special case when the requirement for each $(u,v)$ is
either $(p,q)$ or $(0,0)$.  Two special cases will be of main concern
in this paper.  The first is the spanning case which
requires $(p,q)$-flex-connectivity for all pairs of vertices. We refer
to this problem as $(p,q)$-FGC to be consistent with the terminology
introduced by  Boyd et al.\ in \cite{BoydCHI22}. They generalized the
FGC problem (corresponding to $p=1,q=1$) introduced in
\cite{AdjiashviliHM20,AdjiashviliHM22}.  The other special case is
when the requirement is for a single pair $(s,t)$ --- this was the
original motivation for the model in \cite{Adjiashvili13}. We use the
term $(p,q)$-Flex-ST to denote the single pair problem. Previously
$(1,k)$-Flex-ST was referred to as FTP (Fault-Tolerant Path) in
\cite{Adjiashvili13,AdjiashviliHMS20} and the $(k,1)$-Flex-ST problem
was referred to as FTF (Fault-Tolerant Flow) in \cite{AdjiashviliHMS20}; these problems were
studied in directed graphs but it was noted in \cite{AdjiashviliHMS20}
that the undirected FTP reduces to directed FTP. It is also not hard
to see that undirected FTF can be reduced to directed FTF at a slight
loss in the approximation.  We use the term $(p,q)$-Flex-Steiner to
refer to an instance where there is a $(p,q)$ requirement for every
pair of vertices from some given set $T \subseteq V$ of terminal
vertices. Note that when $T=V$ we have $(p,q)$-FGC and when
$T = \{s,t\}$ we have the $(p,q)$-Flex-ST problem.

\mypara{Motivation and recent work:} The model of Adjiashvili is
motivated by the desire to incorporate non-uniformity in robust
network design. This aims to bridge the gap between the standard connectivity problems
that are clean and tractable and the models which
are more general but tend to be less tractable --- we refer
the reader to \cite{Adjiashvili13,AdjiashviliHMS20,AdjiashviliHM22} for further discussion.

Apart from the practical motivation, we find the theoretical aspects
of the model compelling for several reasons that we outline below.
Adjiashvili, Hommelsheim and M\"uhlenthaler \cite{AdjiashviliHM20}
introduced FGC, pointed out that it generalizes the well-known MST
and 2-ECSS problems, and derived a constant factor approximation.
Boyd et al. \cite{BoydCHI22} obtained several results on
$(p,q)$-FGC. They improved the approximation ratio for FGC to
$2$. They showed a $\min\{(q+1),O(\log n)\}$-approximation for $(1,q)$-FGC and a
$4$-approximation for $(p,1)$-FGC. They obtained an
$O(q \log n)$-approximation for $(p,q)$-FGC for any $q$.  Importantly,
they show several strong connections between flexible graph
connectivity and \emph{capacitated} network design which has been
studied in several works \cite{GoemansGPSTW94,CarrFLP00,ChakCKK15,ChakKLN13}. Capacitated network design
generalizes the standard edge connectivity setting by allowing each edge
$e$ to have an integer capacity $u_e \ge 1$. One can reduce
capacitated network design to standard edge-connectivity network
design by replacing each edge $e$ by $u_e$ parallel edges with
capacity $1$ each. This blows up the approximation factor by
$\max_e u_e$, which is acceptable when this quantity is small. When
$\max_e u_e$ can be large, the complexity of capacitated network
design varies.
While the single pair problem becomes hard to
approximate to almost polynomial factors \cite{ChakKLN13}, the spanning case
(when one seeks to find a min-cost subgraph that has connectivity at
least a given quantity $R$) admits an $O(\log n)$-approximation
\cite{ChakCKK15}. Boyd et al. show that $(1,k)$-Flex-SNDP and $(k,1)$-Flex-SNDP
can be reduced to Cap-SNDP such that the maximum capacity is $k$;
this implies that $(1,k)$-Flex-SNDP and $(k,1)$-Flex-SNDP admit an
$O(k)$ approximation.  Although these ratios are not necessarily tight
in all cases, they provide a first-order and easy approach to solve
the $(1,k)$ and $(k,1)$ cases. Boyd et al.\ also show that an
important technique in network design, namely the augmentation
approach based on covering uncrossable families, can be applied in
some cases; they obtain a $4$-approximation for $(k,1)$-FGC via this approach.

Adjiashvili et al.\ \cite{AdjiashviliHMS20} considered the single pair
setting, mainly in directed graphs: specifically $(1,k)$-Flex-ST and
$(k,1)$-Flex-ST problems (referred to as FTP and FTF in
\cite{AdjiashviliHMS20}). They obtained a $k$ approximation for
$(1,k)$-Flex-ST, slightly improving the $(k+1)$-approximation that can
obtained via an LP relaxation; they prove that the undirected graph
case reduces to the directed case. They prove that $(1,k)$-Flex-ST in
directed graphs is at least as hard as directed Steiner tree which
implies poly-logarithmic factor inapproximability
\cite{HalperinK03}. For $(k,1)$-Flex-ST they obtain a
$(k+1)$-approximation; when $k$ is a fixed constant, they obtain a
$2$-approximation via an involved dynamic programming based approach.
They prove that $(k,1)$-Flex-ST in directed and undirected graphs is
at least as hard to approximate as directed Steiner forest (which has
almost polynomial factor hardness \cite{DodisK99}).  The hardness
results are when $k$ is part of the input and large, and show that
approximability of network design in this model is substantially
different from the edge-connectivity model.

The results discussed so far, the practical motivation, and the fact
that a natural LP relaxation requires $n^{O(q)}$-time to solve,
suggest that it is fruitful to focus on the approximability of
$(p,q)$-flex-connectivity network design when $p,q$ are small
constants. It is natural to conjecture that $(p,q)$-Flex-SNDP admits
an $f(p,q)$-approximation for a non-negative integer valued function
$f: \mathbb{Z}_+ \times \mathbb{Z}_+ \rightarrow \mathbb{Z}_+$.  Note
that for $(1,k)$ and $(k,1)$ cases this is true. A
weaker version of the conjecture is that $(p,q)$-Flex-SNDP admits an
$f(p,q)\text{polylog}(n)$-approximation.  For $(p,q)$-FGC we already
had an $O(q \log n)$-approximation and it is natural to conjecture an
$O(q)$-approximation, or even a constant factor approximation. As far
as we know there are no hardness results or lower bounds on the LP
integrality gap that rule out these conjectures. From a theoretical
point of view the model presents novel and interesting challenges for
algorithm design.

\subsection{Our contribution}
Despite the possibility of an $f(p,q)$-approximation for
$(p,q)$-Flex-SNDP we did not have any non-trivial approximation for
$(2,2)$-Flex-ST problem! We note that $(2,2)$-FGC admits an
$O(\log n)$-approximation.  Thus $(2,2)$-Flex-ST is the first
interesting case where both $p, q > 1$ and is not a spanning
problem.  Our first result is the following.

\begin{theorem}
  \label{thm:intro-st}
  There is a $5$-approximation for $(2,2)$-Flex-ST problem.
\end{theorem}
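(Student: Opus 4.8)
The plan is to use the \emph{augmentation framework}: construct the solution as $H_0 \subseteq H_1 \subseteq H_2$, where $H_0$ is computed exactly and $H_1,H_2$ are obtained by two augmentation steps, each of which we reduce to covering an \emph{uncrossable} family of cuts; since the Williamson--Goemans--Mihail--Vazirani primal-dual algorithm covers an uncrossable family within twice the optimum, the total cost is at most $(1+2+2)\cdot\opt = 5\cdot\opt$. Carried out against a suitable LP relaxation (the cut-covering LP with the ``delete any $q$ unsafe edges'' constraints, separable in $n^{O(q)}$ time), the same accounting also yields an integrality gap of at most $5$.

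We first record a convenient reformulation. A subgraph $H$ is $(2,2)$-feasible for $(s,t)$ iff every $s$--$t$ cut $\delta_H(S)$ has $|\delta_H(S)\cap\calS|\ge 2$ or $|\delta_H(S)|\ge 4$; we reach this through the nested requirements $|\delta_H(S)|\ge 2$, then $(2,1)$-feasibility ($|\delta_H(S)\cap\calS|\ge 2$ or $|\delta_H(S)|\ge 3$), then $(2,2)$-feasibility itself. Round $0$: take $H_0$ a minimum-cost subgraph that $2$-edge-connects $s$ and $t$ (a minimum-cost integral $s$--$t$ flow of value $2$, hence polynomial and exact); since a $(2,2)$-feasible subgraph $2$-edge-connects $s,t$, $c(H_0)\le\opt$ (indeed $c(H_0)$ is at most the LP optimum). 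Round $1$: augment $H_0$ to a $(2,1)$-feasible $H_1$; given the first requirement, the only violated $s$--$t$ cuts have exactly two edges and at most one safe edge, and any single new edge crossing such a cut repairs it. Round $2$: augment $H_1$ to a $(2,2)$-feasible $H_2$; given $(2,1)$-feasibility, the only violated $s$--$t$ cuts have exactly three edges and at most one safe edge, and again any single new edge repairs such a cut. If each of the two augmentations can be solved within a factor $2$, then $H_2$ is $(2,2)$-feasible with $c(H_2)\le 5\cdot\opt$, because the minimum cost of each augmentation is at most $c(\opt)$ (adding all edges of $\opt$ certainly achieves it), and the two augmentation families have the ``a single new edge repairs each member'' property, which is exactly what the uncrossable-family $2$-approximation needs.

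The main obstacle — and the technical heart of the theorem — is to fit Rounds $1$ and $2$ into the uncrossable-family framework. In the spanning (FGC) setting the residual family of deficient minimum cuts is uncrossable because a crossing pair $A,B$ can be uncrossed either as $\{A\cap B, A\cup B\}$ or as $\{A\setminus B, B\setminus A\}$; for a single pair $(s,t)$ only the first move is available (neither $A\setminus B$ nor $B\setminus A$ separates $s$ from $t$), so the family must in fact be closed under intersection and union, and the naive residual families are not — two deficient $3$-edge cuts can intersect in an already-satisfied $2$-safe-edge cut and union to an already-satisfied $4$-edge cut. The structured fix is to exploit that the relevant minimum cuts form a distributive lattice and that the tight cuts already satisfied by two safe edges form a sublattice: one shows that the only way the deficient family fails to be closed under $\cap$ and $\cup$ is across such an all-safe tight cut, so that after splitting the graph along these cuts (equivalently, contracting the pieces lying between consecutive all-safe tight cuts) the deficient family on each piece becomes a genuine ring family — hence uncrossable — with the ``single new edge repairs it'' property preserved, and the covering costs over the pieces sum to at most the global optimum. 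Making this decomposition precise, and verifying that the splitting loses no feasible augmentation so that the comparison to $\opt$ (and to the LP) still goes through, is the crux; the remaining pieces are the standard primal-dual analysis of covering uncrossable families and the routine check that the reformulated requirements compose to $(2,2)$-feasibility.
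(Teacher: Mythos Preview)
Your budget $1+2+2=5$ and the two-round augmentation outline are reasonable, but the crux you yourself flag is a genuine gap, and the proposed fix does not work as stated. First, with your weaker starting points the ``bad'' intersections are not all-safe tight cuts: after Round~1 (merely $(2,1)$-feasible $H_1$), two violated $3$-edge cuts $A,B$ can have $|\delta_{H_1}(A\cap B)|=3$ with two safe and one unsafe edge (so $A\cap B$ is satisfied but not ``all-safe''), while $A\cup B$ is violated; your splitting rule never sees this cut. Even in Round~1 the family is already non-closed under $\cap$: take $s$ joined to $x_1,x_2$ by safe edges and $x_1,x_2$ joined to $t$ by unsafe edges; then $\{s,x_1\}$ and $\{s,x_2\}$ are violated $2$-edge cuts but $\{s\}$ is the unique all-safe tight cut, and ``splitting along $\{s\}$'' leaves the two crossing violated cuts intact in the same piece. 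Second, even granting a decomposition into pieces, you give no argument that the per-piece optima sum to at most $\opt$; an optimal augmenting edge can cross several all-safe tight cuts and hence several pieces, so the costs need not add up.

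The paper takes a different route that sidesteps both issues. It starts from a \emph{stronger} base than $(2,1)$-feasibility: a $2$-approximate Cap-SNDP solution (safe capacity $2$, unsafe capacity $1$, demand $4$), costing $2\opt$, which forces every violated cut to have \emph{exactly one} safe edge and two unsafe edges. This single-safe-edge structure is then exploited: an integral max $s$--$t$ flow of value $4$ saturates every such safe edge with flow $2$, so any three of the four flow-paths cover all relevant safe edges; the violated cuts whose unique safe edge lies on a fixed $s$--$t$ path form a ring family (shown via a path-traversal argument), and each of the three ring families can be covered \emph{exactly}. The count is thus $2\opt + 3\cdot\opt = 5\opt$, with the decomposition into three explicit ring subfamilies replacing your vague lattice splitting.
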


A natural question is whether one can obtain a non-trivial
approximation for $(2,2)$-Flex-Steiner problem. We do not yet have a
direct approach for it. The preceding result easily implies an
$O(h)$-approximation where $h$ is the number of terminals.
Based on past work on vertex connectivity network
design \cite{ChakrabortyCK08,ChuzhoyK08,ChekuriK08,Korula-thesis}, we propose a simple and natural algorithm that we conjecture
yields an $O(\log |h|)$-approximation for $(2,2)$-Flex-Steiner
problem. We discuss this in Section~\ref{sec:steiner}.

Our second set of results are for $(p,q)$-FGC. Recall that Boyd et
al.\ \cite{BoydCHI22} obtained a $4$-approximation for $(k,1)$-FGC,
a $\min\{(k+1), O(\log n)\}$-approximation for $(1,k)$-FGC, and a
$O(q \log n)$-approximation for $(p,q)$-FGC. Thus we did not have a
constant factor approximation for $(2,2)$-FGC.  We prove several
results that, as a corollary, yield constant factor approximation for
small values of $p,q$.

\begin{theorem}
\label{thm:introfgc}
For any $q \ge 0$ there is a $(2q+2)$-approximation for $(2,q)$-FGC.
For any $p \ge 1$  there is a $(2p+4)$-approximation for $(p,2)$-FGC,  and a
$(4p+4)$-approximation for $(p,3)$-FGC.  Moreover, for
all \emph{even} $p \ge 2$ there is an $(6p+4)$-approximation for $(p,4)$-FGC.
\end{theorem}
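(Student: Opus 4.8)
The plan is to use the augmentation framework, building up flex-connectivity one unit at a time, and at each stage reducing the remaining requirement to an \emph{uncrossable} cut-covering problem so that the $2$-approximation for covering uncrossable families can be invoked. Recall that a subgraph $H$ is $(p,q)$-flex-connected (spanning) iff every cut $\delta(S)$ has either $p$ safe edges or $p+q$ edges total. I would first formulate the natural LP relaxation, with a variable $x_e$ per edge and, for every vertex subset $S$, the constraint that the fractional safe weight crossing $S$ is at least $p$ \emph{or} the total fractional weight crossing $S$ is at least $p+q$ (encoded, as in prior work, via an auxiliary choice that selects for each $S$ which of the two bounds to enforce, or equivalently via a knapsack-cover-style set of valid inequalities). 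The strategy is iterative: maintain a current subgraph $H_i$, identify the family of ``deficient'' cuts that still violate the requirement, and show this family (or a suitable sub-family) is uncrossable, so a near-optimal augmentation can be bought at cost $2\,\text{\opt}$ per round.

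For the $(2,q)$ case I would argue as follows. First ensure $2$-edge-connectivity (ignoring safe/unsafe), which costs $2\,\text{\opt}$ via the standard $2$-ECSS augmentation (two rounds of covering uncrossable families starting from a spanning tree, or one direct application). Once the graph is $2$-edge-connected, the only cuts that can still be deficient are those $\delta(S)$ with exactly two edges total, of which fewer than $2$ are safe --- i.e.\ $2$-edge cuts containing at least one unsafe edge; for these we must add edges until each such cut acquires $q$ more edges in total (to reach $2+q$) or $2$ safe edges. The key combinatorial claim is that the family of such deficient $2$-cuts, appropriately defined, is uncrossable, so one round of covering costs $2\,\text{\opt}$; iterating $q$ times (each round lowers the maximum deficiency of a $2$-cut) gives the $(2q+2)$-bound. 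The crux here is verifying uncrossability in the presence of the ``or'' in the requirement --- when we merge two deficient sets $S,S'$, we need the resulting cuts $\delta(S\cup S')$ and $\delta(S\cap S')$ (or $\delta(S\setminus S'), \delta(S'\setminus S)$) to inherit deficiency, which uses the submodularity of $|\delta(\cdot)|$ together with the structural fact that we are past the global $2$-edge-connectivity barrier.

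For $(p,2)$, $(p,3)$, and $(p,4)$ with $p$ even, I would run a two-phase scheme: Phase~1 builds a $p$-edge-connected spanning subgraph (cost $2\,\text{\opt}$ via Jain's iterative rounding or the standard $p$-ECSS augmentation), and Phase~2 repairs the remaining ``unsafe'' deficiency. After Phase~1, a cut $\delta(S)$ is deficient only if $p \le |\delta(S)| < p+q$ and it has fewer than $p$ safe edges; the residual requirement on such a cut is to add $(p+q)-|\delta(S)|$ more edges (at most $q-1$ of them). The plan for $(p,2)$ is that there is at most one missing edge per deficient cut, and the family of these cuts is uncrossable, giving one extra round and hence $2p+4$; for $(p,3)$ one needs two rounds (hence $4p+4$), peeling off deficiency $\le 2$ then $\le 1$; for $(p,4)$ with $p$ even one uses the parity of $p$ to pair up edges and reduce to two rounds of a modified uncrossable cover on a ``halved'' instance (three rounds total on top of the $p$-ECSS, giving $6p+4$). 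The main obstacle throughout---and where I expect the real work to lie---is establishing uncrossability of the residual-deficient-cut families: the ``$p$ safe OR $p+q$ total'' disjunction means a merged set's cut could satisfy the total-edge bound while failing the safe bound or vice versa, so the uncrossability proof must do a careful case analysis tracking both safe-edge counts and total-edge counts across $S\cap S'$ and $S\cup S'$, and it is precisely this case analysis that forces the restriction to small $q$ (and to even $p$ when $q=4$). The LP integrality-gap consequence then follows because every augmentation round is charged against a feasible fractional solution of value $\text{\opt}_{\mathrm{LP}}$.
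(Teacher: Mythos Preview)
Your outline for $(2,q)$-FGC is essentially the paper's approach: start from $2$-ECSS and augment from $(2,i)$ to $(2,i+1)$ for $i=0,\ldots,q-1$, showing at each step that the family of violated cuts (those with exactly $i+2$ edges and at most one safe edge) is uncrossable via submodularity and posimodularity applied to both the total and the safe edge counts.

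For general $p$, however, there is a genuine gap. You assert that after building a $p$-edge-connected subgraph the family of deficient cuts is uncrossable and can be covered in one round; this is false already for the augmentation from $(3,1)$ to $(3,2)$, where one can exhibit violated cuts $A,B$ (each with two safe and two unsafe edges) such that none of $A\cup B$, $A\cap B$, $A\setminus B$, $B\setminus A$ is violated. Your arithmetic reflects that something is off: $p$-ECSS at cost $2\,\opt$ plus ``one extra round'' at cost $2\,\opt$ gives $4$, not $2p+4$, and the claim that ``there is at most one missing edge per deficient cut'' after $p$-ECSS is also incorrect (a cut with $p$ unsafe edges and no safe edges needs two more for $(p,2)$). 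The missing idea is a further decomposition of each augmentation step: when going from $(p,q)$ to $(p,q+1)$, stratify the violated cuts by the \emph{exact} number $i\in\{0,\ldots,p-1\}$ of safe edges they contain, and cover the resulting families $\calC_0,\calC_1,\ldots,\calC_{p-1}$ \emph{sequentially}, so that when handling $\calC_i$ every cut of size $p+q$ with fewer than $i$ safe edges has already been fixed. It is each $\calC_i$---not the whole family---that one proves uncrossable, and only for $q\le 2$ (and for $q=3$ when $p$ is even, via a ceiling $\lceil(2p-i)/2\rceil$ in a counting bound at $i=p-1$; this is where parity enters, not any ``pair up and halve'' construction). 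This gives $p$ uncrossable covers per augmentation stage, costing $2p\,\opt$, and the numbers then line up: $4+2p$ for $(p,2)$ starting from the known $4$-approximation for $(p,1)$-FGC, then $4+4p$ for $(p,3)$ and $4+6p$ for $(p,4)$.
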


We explicitly formulate an LP relaxation for $(p,q)$-Flex-Steiner
combining ideas implicit in
\cite{AdjiashviliHM22,AdjiashviliHMS20,BoydCHI22} and capacitated
network design where knapsack cover inequalities play an important
role \cite{CarrFLP00,ChakCKK15}. Although we do not use the LP
relaxation directly, our approximation bounds can be shown with
respect to this relaxation.  Proving lower bounds on the integrality
gap of this relaxation for various special cases of $(p,q)$-Flex-SNDP
could yield insights into the hardness of the problem. For instance we
show that the LP gap is $\Omega(k)$ for $(1,k)$-Flex-ST (see Section~\ref{sec:integrality_gap_1k}).
We provide examples showing the need for new ideas to extend our
current results for $(p,q)$-FGC.

\mypara{Barriers and techniques:} The non-uniform nature of flexible
connectivity model makes it technically challenging.  The reduction to
capacitated network design does not extend when $p, q \ge 2$.
Although we can formulate an LP relaxation, it does not have clean
structural properties that edge-connectivity network design
enjoys. Thus, known techniques such as primal-dual and iterated
rounding cannot be applied directly. Another technique in network
design is the \emph{augmentation} approach which has proven to be
quite successful --- in fact, this is the approach that was used for
EC-SNDP prior to the iterated rounding approach.  Moreover, it is
one of the main techniques that we have in more complex settings such
as vertex connectivity network design \cite{KortsarzN05,Nutov12} and node-weighted network design \cite{Nutov10,ChekuriEV12}. In the
augmentation approach we start with a initial set of edges $F_0$ that
partially satisfy the connectivity constraints. We then augment $F_0$
with a set $F$ in the graph $G-F_0$; the augmentation is typically
done to increase the connectivity by one unit for pairs that are not
yet satisfied. We repeat this process in several stages until all
connectivity requirements are met. A crucial advantage of the
augmentation approach is that it allows one to reduce a
higher-connectivity problem to a series of problems that solve
a potentially simpler $\{0,1\}$-connectivity problem.  An important tool in this
area is a $2$-approximation for covering an \emph{uncrossable}
function (a formal definition is given in Section~\ref{sec:prelim})
\cite{WilliamsonGMV95}.

Boyd et al.\ \cite{BoydCHI22} used the augmentation approach to obtain
a $4$-approximation for $(p,1)$-FGC. They first obtain a
$2$-approximate solution to the $(p,0)$ problem, which is the same as
the $p$-ECSS problem. In the second step, they augment the solution to
obtain a feasible solution to $(p,1)$-FGC. An advantage of augmenting
from $(p,q)$-FGC to $(p,q+1)$-FGC is that one can ignore the
distinction between safe and unsafe edges in the augmentation step.
\cite{BoydCHI22} shows that the augmentation problem from $(p,0)$ to
$(p,1)$ results in a nice problem: the cuts to be covered are
uncrossable.  One can try this approach for $(p,q)$-Flex-SNDP wherein we
incrementally augment from $(p,q-1)$ to $(p,q)$ by defining an
appropriate set of cuts to cover. It is not too hard to show that the
augmentation problem from $(2,q-1)$-FGC to $(2,q)$-FGC is uncrossable
for any $q$.  However, the main difficulty is that for most values of
$p,q$, the augmentation problem does \emph{not} lead to an uncrossable
family of cuts. We overcome this difficulty by decomposing the family
of cuts to be covered in the augmentation problem into a sequence of
cleverly chosen subfamilies such that each subfamily is
uncrossable. The process is very problem specific: our approach for
$(2,2)$-Flex-ST is very different from that for $(p,q)$-FGC although
they share the high-level approach. There is antecedent for such
complexity in vertex-connectivity network design; $\{0,1,2\}$-VC-SNDP
admits a $2$-approximation \cite{FleischerJW06,CheriyanVV06} while
higher connectivity requires different techniques
\cite{ChakrabortyCK08,ChuzhoyK08,ChuzhoyK12,Nutov12}, and
the approximability and integrality  gaps are not fully resolved yet.  We hope that our
results will spur additional insights and future work in flexible
network design.

\mypara{Organization:} The rest of the paper is organized as
follows. Section~\ref{sec:prelim} discussed the LP relaxation and some
of the technical tools needed for the augmentation
framework. Section~\ref{sec:22st} describes the $5$-approximation for
$(2,2)$-Flex-ST. Section~\ref{sec:fgc} contains our results for
$(p,q)$-FGC.

\section{Preliminaries}
\label{sec:prelim}
Throughout the paper we will assume that we are given an undirected graph
$G=(V,E)$. Unless stated otherwise, we
will assume that $E$ is partitioned into safe $\calS$ and unsafe
$\calU$ edges. We are interested in edge-induced subgraphs of a
graph. Thus, when we say that $H$ is a subgraph of $G=(V,E)$ we
implicitly assume that $H=(V,F)$ for some $F \subseteq E$. For any
subset of edges $F \subseteq E$ and any set $S \subseteq V$
we use the notation $\delta_F(S)$ to denote the set of edges in $F$ that
have exactly one end point in $S$.  We may drop $F$ if it is clear from the
context.

\mypara{LP Relaxation:} We describe an LP relaxation for
$(p,q)$-Flex-Steiner problem.  It is straight forward to generalize it for Flex-SNDP but we do not explicitly describe it here.  Recall that we are given set of terminals $T \subseteq V$ and the goal is to
choose a min-cost subset of the edges $F$ such that in the subgraph
$H=(V,F)$, $u$ and $v$ are $(p,q)$-flex-connected for any $u,v \in
T$. Let $\calC = \{S \subset V \mid S \cap T \neq T \}$ be the set of
all vertex sets that separate some terminal pair. For a set of edges $F$ to be feasible for the given $(p,q)$-Flex-Steiner instance, we require that for all $S \in \calC$, $|\delta_F(S) \setminus B| \ge p$ for any $B \subseteq \calU$ with $|B| \le q$. We can write cut covering constraints expressing this condition, but these constraints are not adequate by themselves. To improve this LP, we consider the connection to capacitated network design: we give each safe edge a capacity of $p+q$, each unsafe edge a capacity of $p$, and require $p(p+q)$ connectivity for the terminal pairs; it is not difficult to verify that this is a valid constraint. These two sets of constraints yield the following LP relaxation with variables $x_e \in [0,1]$, $e \in E$. 

\begin{align*}
    \min \sum_{e \in E} c_ex_e &\\
    \text{subject to} \sum_{e \in \delta(S) - B} x_e &\geq p  &S \in \calC, B \subseteq \calU, |B| \le q \\
    (p+q)\sum_{e \in S \cap \calS} x_e + p \sum_{e \in S \cap \calU}
  x_e &\geq p(p+q) & S \in \calC \\
    x_e & \ge [0,1] &e \in E
\end{align*}

It is not hard to see that the preceding LP admits a separation oracle
that runs in $n^{O(q)}$ time; for each $B \subseteq \calU, |B| \le q$
we remove $B$ and check that for each $s,t \in T$, the $s$-$t$ min-cut
value in the graph $G-B$ with edge-capacities given by $x$ is at least
$p$. For the capacitated constraints, we can give every safe edge $e$ a
weight of $(p+q)x_e$, every unsafe edge $e$ a weight of $p(x_e)$, and check that the
$s$-$t$ min-cut value is at least $p(p+q)$ for each $s, t \in
T$. Hence the LP can be solved in $n^{O(q)}$ time.  Our algorithms do
not directly use the preceding LP relaxation, however, all of our
approximation bounds can be shown with respect to the lower bound
provided by it.

For the special case of $(p,q)$-FGC we can show that the LP can be
solved in polynomial time without any dependence on $q$. We borrow
ideas from \cite{BoydCHI22,ChakCKK15}. We defer the proof to the appendix (Section~\ref{sec:appendix_prelim_proofs}).

\begin{lemma}
\label{lemma:lp_polytime_fgc}
  The LP relaxation can be solved in polynomial time for $(p, q)$-FGC.
\end{lemma}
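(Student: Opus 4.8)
The plan is to reduce the separation problem for the LP relaxation of $(p,q)$-FGC to a polynomial number of minimum-cut (equivalently, maximum-flow) computations, thereby showing the LP can be solved in polynomial time via the ellipsoid method. There are two families of constraints, and the capacitated constraints (the second family) are already easy: for a fixed fractional solution $x$, give each safe edge capacity $(p+q)x_e$ and each unsafe edge capacity $px_e$, and check that every global minimum cut has value at least $p(p+q)$; since this is the spanning case ($T=V$), it suffices to compute one global min-cut, which takes polynomial time. So the real work is in separating the first family, namely the constraints $\sum_{e \in \delta(S)-B} x_e \ge p$ over all $S$ and all $B \subseteq \calU$ with $|B| \le q$, without paying the naive $n^{O(q)}$ enumeration over $B$.

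The key idea, borrowed from \cite{BoydCHI22,ChakCKK15}, is that for a fixed cut $S$ the adversary picks $B$ greedily: it removes the $q$ unsafe edges of largest $x$-value crossing $S$. Hence the constraint for $S$ is violated iff $\sum_{e \in \delta(S) \cap \calS} x_e + (\text{sum of the }|\delta(S)\cap\calU| - q\text{ smallest }x\text{-values of unsafe edges in }\delta(S)) < p$. The standard trick to capture "drop the $q$ heaviest unsafe edges" via a flow computation is to guess a threshold. Concretely, I would do the following: for each unsafe edge $f$ (playing the role of the lightest surviving unsafe edge in $\delta(S)$, or more carefully, each candidate threshold value $\tau \in \{x_f : f \in \calU\} \cup \{0\}$), build an auxiliary capacitated graph in which every safe edge $e$ keeps capacity $x_e$, every unsafe edge with $x_f \le \tau$ keeps capacity $x_f$, and every unsafe edge with $x_f > \tau$ is given capacity $\min(x_f, \tau)$ but additionally we "pay" for removing it — this is handled by a parametric/penalty formulation. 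A cleaner variant: since at most $q$ unsafe edges are dropped and $q$ is a constant for the running-time bound we want (polynomial, not $n^{O(q)}$), we actually must avoid enumerating which edges are dropped; the way \cite{ChakCKK15} handles knapsack-cover-type separation is to observe the min over $B$ of $\sum_{e\in\delta(S)-B} x_e$ decomposes per cut, and one searches over the single scalar "threshold" $\tau$ such that all dropped edges have $x$-value $\ge \tau$, then counts how many such edges a min-cut can afford to drop. For each guessed $\tau$, set up a min-cut instance where safe edges have capacity $x_e$, unsafe edges with $x_f < \tau$ have capacity $x_f$, and unsafe edges with $x_f \ge \tau$ have capacity $0$ but contribute a "credit"; one checks whether there is a cut using fewer than $q$ of the credited edges whose total surviving capacity is below $p$. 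This is a polynomially-sized min-cut / min-cut-with-bounded-cardinality computation for each of the $O(|\calU|)$ thresholds, giving polynomial total time independent of $q$.

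I would then assemble these pieces: run the global min-cut check for the capacitated constraints; for each threshold $\tau$ run the flow-based check for the first family; if all pass, $x$ is feasible, otherwise a violated inequality is produced, and the ellipsoid method yields an exact optimum in polynomial time. The main obstacle I anticipate is getting the threshold-based min-cut gadget exactly right so that it provably captures $\min_{|B|\le q} \sum_{e\in\delta(S)-B} x_e$ over \emph{all} cuts $S$ simultaneously — in particular handling ties at the threshold and ensuring the "at most $q$ dropped edges" cardinality bound is enforced without enumeration. This is precisely the knapsack-cover separation technology from capacitated network design, and the FGC case is favorable because the spanning requirement means we only ever need global min-cuts rather than separating every terminal pair; I expect the proof to be a careful but routine adaptation of \cite[the separation argument]{ChakCKK15} combined with the safe/unsafe capacity assignment of \cite{BoydCHI22}.
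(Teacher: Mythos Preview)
Your separation oracle for the capacitated (second) family is correct and matches the paper. The gap is in your separation for the first family. The subroutine you need---``find a global cut whose surviving capacity is below $p$ while containing at most $q$ of the zero-capacity (credited) unsafe edges''---is a cardinality-constrained min-cut, and this is not known to be solvable in time polynomial in $q$; the Lagrangian/threshold relaxation you sketch does not enforce the hard bound $|B|\le q$. Concretely, if for a guessed $\tau$ you zero out all unsafe edges with $x_e\ge\tau$ and then compute a global min-cut, the minimizing cut may well cross far more than $q$ zeroed edges and hence fail to witness any violated LP constraint, while the genuinely violated cut need not be the minimizer for that $\tau$. The identity ``sum of top $q$ values $=\min_{\tau\ge 0}\bigl(q\tau+\sum_e (x_e-\tau)_+\bigr)$'' translates feasibility of a \emph{fixed} cut $S$ into ``there exists $\tau$ with $g_\tau(S)\ge p+q\tau$'', but infeasibility of $x$ becomes ``there exists $S$ such that \emph{for all} $\tau$, $g_\tau(S)<p+q\tau$'', and the quantifier alternation blocks a single min-cut per threshold from deciding this.

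The paper's argument avoids this entirely by exploiting the \emph{interaction} between the two families. First verify the capacitated constraints (one global min-cut). Now suppose some first-family constraint is violated, witnessed by a cut $S$ and a set $B\subseteq\calU$, $|B|\le q$, with $\sum_{e\in\delta(S)\setminus B}x_e<p$. Then under the capacitated weights ($p+q$ on safe, $p$ on unsafe) the weight of $\delta(S)$ is at most $(p+q)\cdot p$ from the safe part, at most $p\cdot p$ from the unsafe part outside $B$, and at most $p\cdot q$ from $B$, totalling at most $2p(p+q)$. Since the capacitated min-cut is already $\ge p(p+q)$, $S$ is a $2$-approximate min-cut. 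By Karger's theorem there are $O(n^4)$ such cuts and they can be enumerated in polynomial time; for each candidate $S$ you then greedily pick $B$ as the $q$ unsafe edges in $\delta(S)$ of largest $x$-value and check the inequality. This is precisely the technique in \cite{ChakCKK15} that you cite---near-min-cut enumeration, not a threshold gadget---so the ``routine adaptation'' you anticipate is in fact a different and simpler argument than the one you outlined.
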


\begin{remark}
  Boyd et al.\ \cite{BoydCHI22} obtained several results for
  $(p,q)$-FGC. Several of them can be shown with
  respect to the lower bound provided by the LP relaxation above. We
  defer details to a future version. One can also write a similar LP
  relaxation for the directed version of problems.  In some settings
  it may be advantageous to use the directed formulation even for
  undirected problems.
\end{remark}

\mypara{Augmentation:} The results of this paper rely on the
augmentation framework. Suppose $G=(V,E),T \subseteq V$ is an
instance of $(p,q)$-Flex-Steiner.
We observe that  $(p,0)$-Flex-Steiner instance can be solved via
$2$-approximation to EC-SNDP hence we are interested in $q \ge 1$.
Let $F_1$ be a feasible solution for the $(p,q-1)$-Flex-Steiner instance on $G,T$.
This implies that for any cut $S$ that separates two terminals we have
$|\delta_{F_1 \cap \calS}(S)| \ge p$ or $|\delta_{F_1}(S)| \ge p+q-1$.
We would like to augment $F_1$ to obtain a feasible solution to
satisfy the $(p,q)$ requirement. Define a function $f: 2^{|V|} \to \{0, 1\}$ where
$f(S) = 1$ iff (i) $S$ separates terminals and (ii) $|\delta_{F_1 \cap
  \calS}(S)| < p$ \emph{and} $|\delta_{F_1}(S)| = p+q-1$.  We call $S$ a \emph{violated} cut with respect to $F_1$. Since
$F_1$ satisfies $(p, q-1)$ requirement, if $|\delta_{F_1
  \cap \calS}| < p$ it must be the case that $|\delta_{F_1}(S)| \ge p+q-1$. The following lemma is simple.

\begin{lemma}
  Suppose $F_2 \subseteq E \setminus F_1$ is a feasible cover for $f$, that is,
  $\delta_{F_2}(S) \ge f(S)$ for all $S$. Then $F_1 \cup F_2$ is a feasible solution
  to $(p,q)$-flex-Steiner for the terminal set $T$.
\end{lemma}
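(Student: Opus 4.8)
The plan is a direct case analysis on the value of $f(S)$ for an arbitrary vertex set $S$ that separates two terminals; we must show that $F := F_1 \cup F_2$ satisfies the $(p,q)$ requirement on $S$, i.e.\ that $|\delta_{F \cap \calS}(S)| \ge p$ or $|\delta_F(S)| \ge p+q$. Two facts are used throughout: since $F_1 \subseteq F$, adding $F_2$ cannot decrease the number of (safe or total) edges crossing any cut; and since $F_1$ and $F_2$ are disjoint, $|\delta_F(S)| = |\delta_{F_1}(S)| + |\delta_{F_2}(S)|$.

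First suppose $f(S) = 1$. By definition of $f$ this means $|\delta_{F_1 \cap \calS}(S)| < p$ and $|\delta_{F_1}(S)| = p+q-1$. Since $F_2$ is a feasible cover for $f$, $|\delta_{F_2}(S)| \ge f(S) = 1$, and by disjointness $|\delta_F(S)| = |\delta_{F_1}(S)| + |\delta_{F_2}(S)| \ge (p+q-1) + 1 = p+q$, so $S$ is satisfied.

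Next suppose $f(S) = 0$. Since $S$ separates terminals, the failure of $f(S) = 1$ must come from its second clause, i.e.\ $|\delta_{F_1 \cap \calS}(S)| \ge p$ or $|\delta_{F_1}(S)| \ne p+q-1$. On the other hand, feasibility of $F_1$ for the $(p,q-1)$-Flex-Steiner instance gives $|\delta_{F_1 \cap \calS}(S)| \ge p$ or $|\delta_{F_1}(S)| \ge p+q-1$. Combining these, in every case either $|\delta_{F_1 \cap \calS}(S)| \ge p$ or $|\delta_{F_1}(S)| \ge p+q$; since $F_1 \subseteq F$, the same holds with $F_1$ replaced by $F$, so $S$ is satisfied. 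This exhausts all cases, and since $S$ was an arbitrary terminal-separating cut, $F$ is feasible for $(p,q)$-Flex-Steiner.

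There is no genuine obstacle here; the only point needing care is the use of the disjointness $F_2 \subseteq E \setminus F_1$ in the $f(S)=1$ case — if the augmenting edge crossing $S$ were already present in $F_1$, the crossing count would not rise above $p+q-1$ — which is precisely why the lemma is stated with $F_2 \subseteq E \setminus F_1$.
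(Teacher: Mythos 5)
Your proof is correct; the paper omits the argument entirely (stating only that ``the following lemma is simple''), and your case analysis on $f(S)$ --- using disjointness of $F_1$ and $F_2$ for the violated cuts and the failure of the violation condition together with $(p,q-1)$-feasibility for the rest --- is exactly the intended straightforward argument.
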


The augmentation problem is then to find a min-cost subset of edges to
cover $f$ in $G-F_1$.  The key observation is that the augmentation
problem does not distinguish between safe and unsafe edges and hence
we can rely on traditional connectivity augmentation ideas.  Note that
$f$ is a $\{0,1\}$ function.  We prove a stronger lemma, showing that
the LP relaxation for the original instance provides a valid
cut-covering relaxation for the augmentation problem. The proof is deferred to the appendix (Section~\ref{sec:appendix_prelim_proofs}).

\begin{lemma}
\label{lemma:augmentation_lp}
  Let $x \in [0,1]^{|E|}$ be a feasible LP solution for a given
  instance of $(p,q)$-Flex-Steiner. Let $F_1$ be a feasible solution
  that satisfies $(p,q-1)$ requirements for the terminal.  Then, for
  any violated cut $S \subseteq V$ in $(V, F_1)$, we have
  $\sum_{e \in \delta(S) \setminus F_1} x_e \geq 1$.
\end{lemma}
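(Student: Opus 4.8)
The plan is to show that for every violated cut $S$ there is a single ``bad'' set $B\subseteq \calU$ with $|B|\le q$ whose cut-covering constraint in the LP already forces $\sum_{e\in\delta(S)\setminus F_1}x_e\ge 1$; the capacitated constraints of the relaxation will not even be needed.

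First I would record the relevant counts. Put $a:=|\delta_{F_1\cap\calS}(S)|$. Since $S$ is violated we have $a\le p-1$ and $|\delta_{F_1}(S)|=p+q-1$, so the number of unsafe $F_1$-edges crossing $S$ is $|\delta_{F_1\cap\calU}(S)|=(p+q-1)-a\ge q$. Consequently I can pick $B$ to be \emph{any} $q$ of the unsafe $F_1$-edges that cross $S$. Then $B\subseteq\calU$, $|B|=q$, and $S\in\calC$ since $S$ separates two terminals, so the cut-covering constraint of the LP applies to the pair $(S,B)$ and gives $\sum_{e\in\delta(S)\setminus B}x_e\ge p$.

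Next I would estimate the contribution of the $F_1$-edges left on the cut. Because $B\subseteq F_1$, the set $\delta(S)\setminus B$ is the disjoint union of $\delta(S)\setminus F_1$ and $(\delta(S)\cap F_1)\setminus B$. The latter contains the $a$ safe $F_1$-edges on $S$ together with the $(p+q-1-a)-q=p-1-a\ge 0$ unsafe $F_1$-edges on $S$ not placed into $B$, hence exactly $p-1$ edges in total. Since $x_e\le 1$ for each $e$, this gives $\sum_{e\in(\delta(S)\cap F_1)\setminus B}x_e\le p-1$; subtracting this from the constraint above yields $\sum_{e\in\delta(S)\setminus F_1}x_e\ge p-(p-1)=1$, which is the claim.

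I do not expect a genuine obstacle: once the right $B$ is identified the argument is a one-line counting estimate, and in particular no crossing or uncrossability argument is required. The only step that needs a moment's care is the inequality $|\delta_{F_1\cap\calU}(S)|\ge q$, which is precisely where both defining properties of a violated cut are used --- the bound $a<p$ and the equality $|\delta_{F_1}(S)|=p+q-1$ (the latter itself being a consequence of $F_1$ being feasible for the $(p,q-1)$ requirements).
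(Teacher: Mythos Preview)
Your argument is correct and is essentially the same as the paper's: pick $B$ to be any $q$ unsafe $F_1$-edges on $\delta(S)$, apply the cut-covering LP constraint for $(S,B)$, and use $x_e\le 1$ on the $p-1$ remaining $F_1$-edges to get the bound. The only difference is cosmetic --- you track the safe-edge count $a$ explicitly whereas the paper simply notes $|\delta_{F_1}(S)\setminus B|<p$ --- but the logic is identical.
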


One can try to augment from $(p,q)$ to
$(p+1,q)$-flex-connectivity. However, the resulting augmentation
problem is less well-behaved; we still need to distinguish between
safe and unsafe edges.

\mypara{Uncrossable functions and families:} Uncrossable functions are a general class
of requirement functions that are an important ingredient in network design \cite{WilliamsonGMV95,GoemansW97,GuptaK11,KortsarzN10}.

\begin{definition}
A function $f: 2^{|V|} \to \{0, 1\}$ is \emph{uncrossable} if for
every $A, B \subseteq V$ such that $f(A) = f(B) = 1$, one of the
following is true: (i) $f(A \cup B) = f(A \cap B) = 1$, (ii)  $f(A - B) = f(B - A) = 1$.
A family of cuts $\calC \subset 2^V$ is an uncrossable \emph{family} if the indicator function
$f_{\calC}:2^V \rightarrow \{0,1\}$ with $f(S) = 1$ iff $S \in \calC$, is uncrossable.
\end{definition}

\begin{definition}
  Let $G=(V,E)$ be a graph and let
  $f: 2^{|V|} \to \{0, 1\}$ be a requirement function. Let $A \subseteq E$. A set
  $S$ is violated with respect to $A,f$ if $f(S) = 1$ and $\delta_A(S) = \emptyset$.
  The residual requirement function of $f$ with respect to $A$, denoted by $f_A$ is
  indicator function of the violated sets of $f$ with respect to $A$.
\end{definition}

An important result in network design is a $2$-approximation for the problem of
min-cost covering of an uncrossable function $f$ by the edge set of a graph \cite{WilliamsonGMV95}; this was
later generalized to covering all skew-supermodular requirement functions \cite{Jain01}.
This requires a computational assumption on $f$ and we encapsulate it below.

\begin{theorem}[\cite{WilliamsonGMV95}]
  \label{thm:uncrossable}
  Let $G=(V,E)$ be an edge-weighted graph and let $f:2^V \rightarrow \{0,1\}$ be an uncrossable
  function. Suppose there is an efficient oracle that for any $A \subseteq E$ outputs all the
  minimal violated sets of $f$ with respect to $A$. Then there is an efficient $2$-approximation for the
  problem of finding a minimum cost subset of edges that covers $f$.
\end{theorem}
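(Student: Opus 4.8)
The plan is to reproduce the classical primal--dual algorithm of Williamson, Goemans, Mihail and Vazirani~\cite{WilliamsonGMV95}; the oracle hypothesized in the statement is exactly what is needed to run and analyze it. Work with the natural cut-covering LP relaxation, $\min\{\sum_{e} c_e x_e : \sum_{e \in \delta(S)} x_e \ge f(S) \text{ for all } S \subseteq V,\ x \ge 0\}$, whose dual has a variable $y_S \ge 0$ for each set $S$ and a constraint $\sum_{S : e \in \delta(S)} y_S \le c_e$ for each edge $e$. The algorithm maintains a partial solution $A$ (initially empty) and a dual-feasible $y$ (initially zero), and proceeds in phases. In a phase it calls the oracle to obtain the family $\mathcal{M}$ of minimal violated sets of $f$ with respect to the current $A$; if $\mathcal{M}=\emptyset$ the forward stage ends. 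Otherwise it raises $y_S$ uniformly for all $S \in \mathcal{M}$ until the dual constraint of some edge $e \notin A$ becomes tight, adds $e$ to $A$, and continues. Once $A$ is feasible, a reverse-delete stage scans the edges in the reverse of the order in which they were added and removes any edge whose deletion still leaves a feasible set (testable by a further oracle call on $A-e$); let $A'$ be the output. Each phase adds an edge, so there are at most $|E|$ phases, and the procedure is efficient.

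For the approximation guarantee we use two observations. First, for uncrossable $f$ the minimal violated sets with respect to any $A$ are pairwise disjoint: if minimal violated $S, S'$ crossed, then since $\delta_A(S) = \delta_A(S') = \emptyset$, the submodular and posimodular inequalities for the cut function force $\delta_A(S\cap S'), \delta_A(S\cup S'), \delta_A(S\setminus S'), \delta_A(S'\setminus S)$ all to be empty, while uncrossability makes $f$ equal to $1$ on one of $\{S\cap S', S\cup S'\}$ or one of $\{S\setminus S', S'\setminus S\}$; that set is then violated and properly contained in $S$ or $S'$, contradicting minimality. Second, every $e \in A'$ stays tight, i.e.\ $c_e = \sum_{S : e \in \delta(S)} y_S$ for the final $y$: $e$ is tight when added, and once $e \in A$ no set $S$ with $e \in \delta(S)$ is ever violated again, so $y_S$ no longer grows for any such $S$. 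Hence $\sum_{e \in A'} c_e = \sum_{e \in A'} \sum_{S : e \in \delta(S)} y_S = \sum_S y_S\,|\delta_{A'}(S)|$, and since $y$ is dual-feasible with $y_S > 0$ only when $f(S)=1$, weak duality and feasibility of the covering relaxation give $\sum_S y_S = \sum_S f(S) y_S \le \opt$. So it suffices to prove the averaged degree bound $\sum_S y_S\,|\delta_{A'}(S)| \le 2 \sum_S y_S$. Writing $y_S = \sum_{i : S \in \mathcal{M}_i} \varepsilon_i$, where $\mathcal{M}_i$ is the active family of phase $i$ and $\varepsilon_i$ its dual increment, this reduces to the per-phase inequality $\sum_{S \in \mathcal{M}_i} |\delta_{A'}(S)| \le 2 |\mathcal{M}_i|$.

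This per-phase inequality is the crux and the step I expect to be the main obstacle; it is where uncrossability is used a second time. Since $A'$ is feasible but reverse-delete leaves no removable edge, each $e \in A'$ has a witness set $W_e$ with $f(W_e) = 1$ and $\delta_{A'}(W_e) = \{e\}$. If two witnesses $W_e, W_{e'}$ cross, the same submodularity/posimodularity-plus-uncrossability argument lets us replace one of them (choosing the intersection/union pair or the set-difference pair according to how the endpoints of $e$ and $e'$ are distributed) by a proper subset that is still a single-edge witness for the same edge; because the potential $\sum_e |W_e|^2$ strictly decreases at each replacement, this uncrossing process terminates at a laminar family of witnesses. Using this laminar family together with the disjointness of $\mathcal{M}_i$, one builds an auxiliary forest whose node set consists of the contracted sets of $\mathcal{M}_i$ and the regions into which the laminar family partitions the rest of $V$, with one forest edge for each edge of $A'$ incident to an $\mathcal{M}_i$-node; laminarity and the single-edge witness property make this graph acyclic. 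The technical heart is then to show that in this forest every node of degree at most one is one of the $\mathcal{M}_i$-nodes --- intuitively, a low-degree region that is not one of the phase-$i$ minimal violated sets would either expose a removable edge of $A'$ or reveal an uncovered violated set, contradicting minimality of $A'$ or feasibility. Granting this, a handshake count in a forest on $N$ vertices (degrees sum to less than $2N$, while each non-$\mathcal{M}_i$ node contributes at least $2$) gives $\sum_{S \in \mathcal{M}_i} |\delta_{A'}(S)| = \sum_{v \in \mathcal{M}_i} \deg(v) < 2|\mathcal{M}_i|$. Combining with the cost identity and weak duality yields $\sum_{e \in A'} c_e \le 2\sum_S y_S \le 2\,\opt$, proving the theorem.
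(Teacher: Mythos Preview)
The paper does not prove this theorem at all: it is stated as a cited result from \cite{WilliamsonGMV95} and used as a black box, so there is no ``paper's own proof'' to compare against. Your proposal is a faithful outline of the original primal--dual argument from that reference, and the main steps (disjointness of minimal violated sets, tightness of selected edges, laminar witness family via uncrossing, and the forest handshake bound) are correctly identified; as a sketch of the classical proof it is essentially correct, though for a full write-up you would need to flesh out the step you yourself flag as the crux --- namely, that every degree-$\le 1$ node of the auxiliary forest corresponds to a set in $\mathcal{M}_i$.
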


A special case of uncrossable family of sets is a \emph{ring family}.
We say that an uncrossable family of sets $\calC \subseteq 2^V$ is a
ring family if the following conditions hold: (i) if $A,B \in \calC$
and $A,B$ properly intersect\footnote{$A, B$ properly intersect if $A \cap B \neq \emptyset$ and $A-B, B-A \neq \emptyset$.} then $A \cap B$ and $A \cup B$ are in
$\calC$ and (ii) there is a unique minimal set in $\calC$.  We observe
that if we have an uncrossable family $\calC$ such that there is a
vertex $s$ contained in every $A \in \calC$ then $\calC$ is
automatically a ring family.  Theorem~\ref{thm:uncrossable} can be
strengthened for this case. There is an optimum algorithm to find a min-cost
cover of a ring family --- see \cite{Nutov10,Nutov12,Frank1979}

\paragraph{Enumerating small cuts in a graph:} In order to use
Theorem~\ref{thm:uncrossable} in the augmentation framework, we need to
able to first prove that a family of cuts that we intend to cover is
uncrossable.  Second, we need to be able efficiently find all the
minimal violated sets of the family with respect to a given set of
edges $A$. Consider the augmentation problem from $(p,q)$ to $(p,q+1)$,
and the requirement function $f$ that is induced by it. Recall that
$f(S) = 1$ iff $|\delta_{F_1}(S)| = p+q$ and
$|\delta_{F_1 \cap \calS}| < p$. Thus, for any fixed $p,q$ we can
enumerate all violated sets in $n^{O(p+q)}$ time by trying all
possible cuts in $F_1$ with $p+q$ edges.  Consider the graph
$G'=G-F_1$ and any set of edges $A \subseteq E \setminus F_1$. It
follows that given $A$ and $F_1$ that is feasible for $(p,q)$, we can
find the set of \emph{all} violated cuts in $G-F_1$ with respect to
$A,f$ in $n^{O(p+q)}$ time. In the context of $(p,q)$-FGC we need a more sophisticated process since
we do not limit ourselves to fixed $p,q$. The following lemma from \cite{BoydCHI22} is useful
for this.

\begin{lemma}[\cite{BoydCHI22}]
  Let $G=(V,E)$ be an instance of $(p,q)$-FGC where $p,q \ge 1$. Let $F_1 \subseteq E$ be a
  feasible solution to $(p,q-1)$-FGC on $G$. Let $\calC$ be the set of all cuts that
  are violated with respect to the augmentation function $f$. Then, there are $O(n^{4})$
  such cuts and they can  be enumerated in polynomial time.
\end{lemma}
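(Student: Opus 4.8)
The plan is to realize the family $\calC$ of violated cuts as a subfamily of the \emph{near-minimum} cuts of a carefully capacitated copy of the graph $(V,F_1)$, and then invoke the classical fact that an $n$-vertex capacitated graph has only $O(n^4)$ cuts of capacity strictly below twice the minimum cut, all of which can be listed in polynomial time (Karger's cut-counting bound; Nagamochi--Nishimura--Ibaraki and Vazirani--Yannakakis for the enumeration). Recall that, since $T=V$ for FGC, a cut $S$ (with $S\neq\emptyset$ and $S\neq V$) lies in $\calC$ exactly when $|\delta_{F_1\cap\calS}(S)|\le p-1$ and $|\delta_{F_1}(S)|=p+q-1$. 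Set $w:=\lceil (p+q-1)/p\rceil$, and let $H$ be the graph $(V,F_1)$ in which each safe edge is given capacity $w$ and each unsafe edge capacity $1$; for a cut $S$ write $a(S):=|\delta_{F_1\cap\calS}(S)|$ and $b(S):=|\delta_{F_1\cap\calU}(S)|$, so that $\mathrm{cap}_H(S)=w\,a(S)+b(S)$.

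The first step is to lower bound the minimum cut value $\lambda$ of $H$. Because $F_1$ is feasible for $(p,q-1)$-FGC, every cut $S$ either has $a(S)\ge p$, in which case $\mathrm{cap}_H(S)\ge w\,a(S)\ge wp\ge p+q-1$ by the choice of $w$; or has $a(S)\le p-1$ and $a(S)+b(S)\ge p+q-1$, in which case $\mathrm{cap}_H(S)\ge a(S)+b(S)\ge p+q-1$ since $w\ge 1$. Hence $\lambda\ge p+q-1$.

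The second step is to upper bound the capacity of a violated cut. If $S\in\calC$ then $a(S)\le p-1$ and $a(S)+b(S)=p+q-1$, so $\mathrm{cap}_H(S)=(w-1)\,a(S)+(p+q-1)\le (w-1)(p-1)+(p+q-1)$; and since $w-1<(p+q-1)/p$ we get $(w-1)(p-1)<\frac{p-1}{p}(p+q-1)<p+q-1$. Therefore every violated cut has $\mathrm{cap}_H(S)<2(p+q-1)\le 2\lambda$, i.e.\ capacity at most the integer $2(p+q-1)-1$. Consequently $\calC$ is contained in the set of cuts of $H$ of capacity at most $2(p+q-1)-1$, which by the near-minimum-cut bound has size $O(n^4)$ and is enumerable in polynomial time; scanning that list and retaining exactly those $S$ with $|\delta_{F_1}(S)|=p+q-1$ and $|\delta_{F_1\cap\calS}(S)|\le p-1$ recovers $\calC$ in polynomial time and shows $|\calC|=O(n^4)$.

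The one delicate point, and where I expect the main obstacle, is the choice of capacities. Unit capacities fail when $q\gg p$: a cut consisting of $p$ safe edges and no unsafe edge then has capacity only $p$, far below the violated cuts' scale $p+q-1$, so the violated cuts need not be near-minimum. Heavily overweighting the safe edges fails too, since a violated cut may carry as many as $p-1$ safe edges and would be pushed above $2\lambda$. The value $w=\lceil (p+q-1)/p\rceil$ is exactly the threshold that simultaneously forces $\lambda\ge p+q-1$ and keeps every violated cut's capacity strictly below $2(p+q-1)$; checking that these two requirements are jointly satisfiable is the whole of the argument, after which the claim follows from off-the-shelf near-minimum-cut enumeration.
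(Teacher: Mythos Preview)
Your argument is correct and follows essentially the same route as the paper (and the cited source \cite{BoydCHI22}): weight the safe edges so that every cut in the $(p,q-1)$-feasible graph has capacity at least some threshold $\lambda$, show that every violated cut has capacity strictly below $2\lambda$, and then invoke Karger's $O(n^4)$ bound and polynomial-time enumeration of $2$-approximate minimum cuts. The paper uses the weighting $(p+q)$ for safe and $p$ for unsafe edges (see the proof of Lemma~\ref{lemma:lp_polytime_fgc} in the appendix), whereas you use $w=\lceil (p+q-1)/p\rceil$ for safe and $1$ for unsafe; these are rescalings of one another up to the rounding, and both choices make the two required inequalities go through, so the difference is cosmetic.
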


\mypara{Submodularity and posimodularity of the cut function:} It is
well-known that the cut function of an undirected graph is symmetric
and submodular.  Submodularity implies that for all
$A, B, \subseteq V$, $|\delta(A)| + |\delta(B)| \ge |\delta(A \cap B)| + |\delta(A \cup B)|$.
Symmetry and submodularity also implies posimodularity: for all $A, B
\subseteq V$, $|\delta(A)| + |\delta(B)| \ge |\delta(A - B)| + |\delta(B - A)|$.

\section{A $5$-approximation for $(2, 2)$-Flex-ST}
\label{sec:22st}
In this section, we provide a constant factor approximation for
$(2, 2)$-Flex-ST. We are given a graph $G = (V, E)$ with a cost
function on the edges $c: E \to \R_{\geq 0}$, a partition of the edge
set $E = \calS \uplus \calU$ into safe and unsafe edges, and
$s, t \in V$. Our goal is to find the cheapest set of edges such that
every $s$-$t$ cut has either at least two safe edges or at least four
total edges. Let $F^* \subseteq E$ denote the optimal solution. Note
that $F^*$ is a feasible solution to the $(2,1)$-Flex-ST instance for
the same graph $G$ and the same pair $s,t$. The reduction to the
Cap-SNDP problem in $\cite{BoydCHI22}$ implies a $3$-approximation
for $(2,1)$-Flex-ST. It is also worth noting that Adjiashvili,
Hommelsheim, Mühlenthaler, and Schaudt \cite{AdjiashviliHMS20}
describe a $2$-approximation to $(k,1)$-Flex-ST in directed graphs for
any fixed $k$, which can be modified to give a 3-approximation in
undirected graphs\footnote{This is not explicitly stated in
  \cite{AdjiashviliHMS20} but follows from an easy observation by
  bidirecting edges. Unlike the case of $(1,k)$-Flex-ST
  \cite{AdjiashviliHMS20}, the directed and undirected graph cases do
  not seem to equivalent for $(k,1)$-Flex-ST.}.

We follow the augmentation approach outlined earlier. However, instead
of finding a feasible solution to $(2,1)$-Flex-ST, we start with
something slightly stronger via the reduction to capacitated network
design.

\begin{lemma}
\label{st_22_starting_criteria}
There exists a set of edges $F \subseteq E$ such that for any $s$-$t$
separating set $A\subset V$, exactly one of the following is true: (1) $|\delta_{F \cap \calS}(A)| \geq 2$, (2) $|\delta_F(S)| \geq 4$, (3) $\delta_{F}(A)$ has exactly two unsafe and one safe
edge. Furthermore, we can efficiently find such a set $F$ such that
$\cost(F) \leq 2 \cdot \cost(F^*)$.
\end{lemma}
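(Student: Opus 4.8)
The plan is to obtain $F$ via the reduction to capacitated network design (Cap-SNDP) used by Boyd et al., but with a carefully chosen connectivity target so that the resulting subgraph has the stronger "trichotomy" structure claimed. Concretely, build an instance of Cap-SNDP on the same vertex set $V$ with the pair $(s,t)$: give every safe edge capacity $3$ and every unsafe edge capacity $1$, and require $s$-$t$ capacity at least $4$. Using the standard trick of splitting each safe edge into three parallel unit-capacity edges (and keeping unsafe edges as unit-capacity), this becomes a $4$-edge-connectivity augmentation problem between $s$ and $t$ in an auxiliary unit-capacity graph, which admits a $2$-approximation (indeed even an exact polynomial algorithm via min-cost flow / the splitting-off or LP-integrality for single-pair edge connectivity). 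Let $F$ be the image in $G$ of the chosen edges; then $\cost(F) \le 2\cdot \mathrm{opt}_{\text{Cap}} \le 2\cdot\cost(F^*)$, where the last inequality holds because $F^*$ is feasible for this Cap-SNDP instance: any $s$-$t$ cut of $F^*$ has either $\ge 2$ safe edges (capacity $\ge 6 \ge 4$) or $\ge 4$ total edges (capacity $\ge 4$).

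Next I would verify the trichotomy. Fix any $s$-$t$ separating set $A$ and let $a = |\delta_{F\cap\calS}(A)|$ and $b=|\delta_{F\cap\calU}(A)|$. Feasibility of the Cap-SNDP solution gives $3a + b \ge 4$. I claim this forces exactly one of (1)–(3). If $a\ge 2$ we are in case (1). If $a=0$ then $b\ge 4$, case (2). If $a=1$ then $3+b\ge 4$, so $b\ge 1$; if $b\ge 3$ then $a+b\ge 4$ and we are in case (2), and if $b\in\{1,2\}$ we need to rule out $b=1$ to land in case (3). Here I would use that $F^*$-feasibility of the $(2,1)$ problem is \emph{not} automatically inherited; instead I should strengthen the Cap-SNDP target or argue directly. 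The cleaner fix: set the required $s$-$t$ capacity to $\max\{4, \text{value forcing } (2,1)\text{-feasibility}\}$ — note that $(2,1)$-flex-connectivity of a cut means "$\ge 2$ safe or $\ge 3$ total," i.e. $a\ge 2$ or $a+b\ge 3$, which is implied by $2a+b\ge 3$; combined with $3a+b\ge 4$ one checks $a=1 \Rightarrow b\ge 2$ (from $2a+b\ge 3$), so the case $a=1,b=1$ is excluded. Thus I would actually run Cap-SNDP with two connectivity requirements simultaneously (capacities $(2,1)$ with target $3$, and capacities $(3,1)$ with target $4$), take the edge-wise union, and pay at most $2\cdot\cost(F^*)$ since $F^*$ satisfies both. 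This yields exactly cases (1), (2), or (3) — and the three are mutually exclusive by the numeric ranges of $(a,b)$.

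The remaining point is efficiency of finding such an $F$. Since the instance has only the single pair $(s,t)$, after the parallel-edge expansion each capacitated requirement becomes a min-cost $k$-edge-connected subgraph problem for a single pair, which is exactly a min-cost flow problem and is solvable in polynomial time; running it for the $(2,1)$-target-$3$ instance and the $(3,1)$-target-$4$ instance and unioning gives a polynomial algorithm. Alternatively, the standard $2$-approximation for Cap-SNDP with max capacity $O(1)$ suffices for the stated bound. The main obstacle I anticipate is precisely the bookkeeping above: ensuring that the chosen capacity targets simultaneously (a) are implied by $F^*$ so the cost bound holds, and (b) are strong enough to exclude the "bad" cut profile $a=1,b=1$ (and any other profile outside (1)–(3)), while keeping mutual exclusivity of the three cases. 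Once the right pair of Cap-SNDP instances is fixed, the verification is the routine case analysis on $(a,b)$ sketched above.
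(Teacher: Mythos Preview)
Your overall plan (reduce to single-pair Cap-SNDP, solve via min-cost flow after replacing each safe edge by parallel unit-capacity copies, then do a case analysis on $(a,b)=(|\delta_{F\cap\calS}(A)|,\,|\delta_{F\cap\calU}(A)|)$) is exactly the paper's approach. The difference is purely in the choice of capacities, and your choice leaves a genuine gap.

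You correctly observe that with safe-capacity $3$ and target $4$ the profile $(a,b)=(1,1)$ is not ruled out. Your proposed fix, however, does not close this: the constraint $2a+b\ge 3$ with $a=1$ gives only $b\ge 1$, not $b\ge 2$ as you claim, and likewise $3a+b\ge 4$ with $a=1$ gives only $b\ge 1$. So the union of your two Cap-SNDP solutions still permits a cut with one safe and one unsafe edge, which is none of (1)--(3). (Relatedly, your assertion that ``$a\ge 2$ or $a+b\ge 3$'' is implied by $2a+b\ge 3$ is false; $(a,b)=(1,1)$ is again a counterexample.)

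The paper avoids this by giving safe edges capacity~$2$ (not $3$) and keeping the target at $4$. Then $2a+b\ge 4$ directly yields: $a=0\Rightarrow b\ge 4$ (case~(2)); $a=1\Rightarrow b\ge 2$, so either $b\ge 3$ and $a+b\ge 4$ (case~(2)) or $b=2$ (case~(3)); $a\ge 2$ (case~(1)). Moreover $F^*$ is feasible for this instance since a $(2,2)$-flex cut has either $\ge 2$ safe edges (capacity $\ge 4$) or $\ge 4$ total edges (capacity $\ge 4$), and a single min-cost flow computation in the unit-capacity expansion gives a solution of cost $\le 2\cdot\cost(F^*)$. So only one Cap-SNDP instance is needed; the right capacity choice is the missing ingredient in your argument.
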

\begin{proof}
  Consider giving every safe edge a capacity of $2$, every unsafe edge
  a capacity of $1$, and approximating the solution to the
  corresponding Cap-SNDP problem with an $s$-$t$ connectivity
  requirement of $4$ (all other pairs have $0$ requirement). Suppose
  $F \subseteq E$ is a feasible solution to this Cap-SNDP
  instance. Then, one can easily verify that $F$ follows the above
  requirements by casing on the number of safe edges in $\delta(A)$:
  \begin{enumerate}
  \item if |$\delta_{F \cap \calS}(A)| = 0$, since capacity of
    $\delta_F(A)$ is $4$, it contains at least four unsafe edges.
  \item else if |$\delta_{F \cap \calS}(A)| = 1$, then the safe edge
    provides capacity of $2$ and to reach capacity $4$, $\delta_F(A)$
    contains at least two unsafe edges
  \item else |$\delta_{F \cap \calS}(A)| \ge 2$.
\end{enumerate}
We can approximate the Cap-SNDP instance to within a factor of $2$ by
replacing each edge of capacity $2$ by two edges of capacity $1$ and
with the same cost. This reduces the problem to solving an $s$-$t$
minimum-cost flow problem (flow of value $4$ with all edges having
capacity $1$) which can be solved optimally. Thus we obtain the
desired polynomial time $2$-approximation.

Finally, we show that any solution $F' \subseteq E$ to $(2,2)$-Flex-ST
is also a solution to this Cap-SNDP instance. Let $A \subseteq V$ be
an $s$-$t$ cut. Then, $\delta_{F'}(A)$ has two safe edges or four
total. In both cases, $\delta_{F'}(A)$ has a total capacity of at
least four. Thus the $2$-approximation to the Cap-SNDP instance costs
at most twice the optimum solution to $(2,2)$-Flex-ST.
\end{proof}

Let $F \subseteq E$ be obtained via the algorithm in the preceding
lemma, so for every $s$-$t$ cut $A$, $\delta_F(A)$ has at least two
safe edges or at least four total edges or exactly two unsafe and one
safe edge. If any cut has at least two safe or four total edges, it
satisfies the requirement on $(2,2)$-Flex-ST, so we focus our
attention on those cuts with exactly two unsafe and one safe edge. Let
$\calC$ denote the set of all violated cuts containing $s$,
i.e.
$\calC = \{A \subset V: s \in A, t \notin A, |\delta_{F \cap
  \calS}(A)| = 1, |\delta_{F \cap \calU}(A)| = 2\}$. By symmetry, it
suffices to only consider cuts containing $s$, since covering a set
also covers its complement. Thus this is exactly the family of cuts
that we need to cover in the augmentation phase. Since all
violated cuts have exactly three edges, we can assume without loss of
generality that all remaining edges in $E - F$ are unsafe. Therefore,
it suffices to solve the augmentation problem: we want to find the
minimum cost subset $F' \subseteq E \setminus F$ s.t.
$\delta_{F'}(A) \geq 1$ for all $A \in \calC$. Note that for any
$A \in \calC$, $t \in V\setminus A$ and hence for any
$A, B \in \calC$, $A \cup B \neq V$.  This family of violated cuts is
unfortunately not uncrossable, as shown in
Figure~\ref{st_22_counterex}. Instead, we will show that we can find
three sub-families of $\calC$ whose union is $\calC$ and each
sub-family is a ring family.

\begin{figure}
  \begin{center}
    \includegraphics[width = 0.4\linewidth]{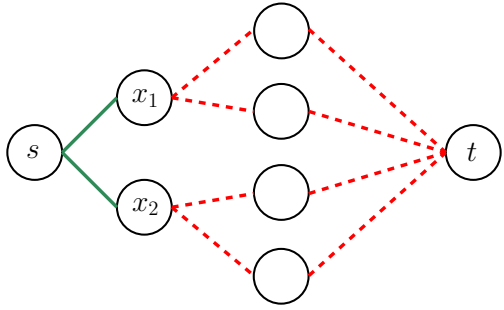}
  \end{center}
  \caption{Example graph where violated cuts are not uncrossable. Red dashed edges are unsafe, while green solid edges are safe. Let
    $A = \{s, x_1\}, B = \{s, x_2\}$. Notice that $A$ and $B$ are
    violated, but $A \cup B = \{s, x_1, x_2\}$ and $A \cap B = \{s\}$
    are not. Also $A-B$ and $B-A$ are not violated, since they don't separate $s$ and $t$.}
  \label{st_22_counterex}
\end{figure}

We begin with a lemma that characterizes when two violated sets $A,B$ do not uncross.

\begin{lemma}
\label{st_22_crossing_criteria}
Suppose $A, B \in \calC$ and the sets properly intersect. Then, either
$A \cup B, A \cap B \in \calC$, or one of $\delta_F(A \cup B)$ and
$\delta_F(A \cap B)$ has exactly two safe edges (and no unsafe) and
the other has exactly four unsafe edges (and no safe). These
structures are demonstrated in Figure~\ref{st_22_crossing_image}.
\end{lemma}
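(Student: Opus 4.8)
The plan is to derive the dichotomy from submodularity of the cut function, combined with the three-way structural classification of $s$-$t$ cuts provided by Lemma~\ref{st_22_starting_criteria}.

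First I would note that $A \cap B$ and $A \cup B$ are both $s$-$t$ separating sets: since $s \in A \cap B$ the set is nonempty, since $t \notin A \cup B$ it is a proper subset of $V$, and $s$ lies in both while $t$ lies in neither. Hence Lemma~\ref{st_22_starting_criteria} applies to each of them, so each of $\delta_F(A \cap B)$ and $\delta_F(A \cup B)$ satisfies exactly one of: (1) it contains at least two safe edges; (2) it contains at least four edges in total; (3) it contains exactly one safe and exactly two unsafe edges --- where (3) is precisely the condition for membership in $\calC$. I will use two immediate consequences: a cut with no safe edge in its boundary must be of type (2), hence has at least four (necessarily unsafe) boundary edges; and a cut with exactly one safe boundary edge is of type (2) or (3), hence has at least two unsafe boundary edges.

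Next I would apply submodularity of the cut function of a fixed subgraph, to both the safe subgraph and the unsafe subgraph: $|\delta_{F\cap\calS}(\cdot)|$ and $|\delta_{F\cap\calU}(\cdot)|$ are each submodular. Since $A, B \in \calC$, each of them contributes one safe and two unsafe boundary edges, so
\[
|\delta_{F\cap\calS}(A\cap B)| + |\delta_{F\cap\calS}(A\cup B)| \le 2 , \qquad
|\delta_{F\cap\calU}(A\cap B)| + |\delta_{F\cap\calU}(A\cup B)| \le 4 .
\]
The proof then finishes with a short case analysis on the unordered pair formed by $|\delta_{F\cap\calS}(A\cap B)|$ and $|\delta_{F\cap\calS}(A\cup B)|$, which by the first inequality is one of $\{0,0\}$, $\{0,1\}$, $\{0,2\}$, $\{1,1\}$. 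The pairs $\{0,0\}$ and $\{0,1\}$ are impossible: by the consequences above they would force the total number of unsafe boundary edges of $A\cap B$ and $A\cup B$ to be at least $8$ and at least $6$ respectively, contradicting the second inequality. In the case $\{1,1\}$, each of the two cuts has at least two unsafe boundary edges while their total is at most four, so each has exactly two; hence both cuts are of type (3), i.e.\ $A\cap B, A\cup B \in \calC$. In the case $\{0,2\}$, the cut with no safe boundary edge has at least four unsafe boundary edges, so by the second inequality it has exactly four unsafe and zero safe boundary edges, while the other cut is then left with zero unsafe and exactly two safe boundary edges --- which is exactly the exceptional alternative in the statement.

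I do not expect any individual step to be a genuine obstacle. The only point requiring some care is the bookkeeping in the final case analysis: one must check that the type-based lower bounds on the number of unsafe boundary edges and the submodular upper bound on their sum together leave no slack in the degenerate configurations, so that in the $\{0,2\}$ case the two boundaries are pinned down exactly (four unsafe and no safe on one side, two safe and no unsafe on the other) rather than merely bounded.
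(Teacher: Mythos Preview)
Your proof is correct and uses essentially the same ingredients as the paper's: the three-way classification of $s$-$t$ cuts from Lemma~\ref{st_22_starting_criteria} together with submodularity of the cut function, followed by a short case analysis. The only difference is organizational --- you split submodularity into the safe and unsafe parts and enumerate the possible pairs of safe-edge counts, whereas the paper conditions on whether one of $\delta_F(A\cap B),\delta_F(A\cup B)$ has at least two safe edges and applies submodularity to the total cut; the resulting deductions are the same.
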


\begin{proof}
  Suppose $A, B \in \calC$ are two sets that properly intersect. Note
  that they must each have three edges crossing them, out of which one
  is safe and two are unsafe. First, suppose neither
  $\delta_F(A \cup B)$ nor $\delta_F(A \cap B)$ contain two safe
  edges. Then, they must have at least three total edges since $F$ is
  feasible for $(2,1)$ $s$-$t$-connectivity. By submodularity,
\[|\delta_F(A)| + |\delta_F(B)| = 6 \geq |\delta_F(A \cup B)| + |\delta_F(A \cap B)|.\]
Therefore, $\delta_F(A \cup B)$ and $\delta_F(A \cap B)$ must both have exactly three total edges. Since neither of the two have two safe edges, they are both violated, i.e. $A \cup B, A \cap B \in \calC$.

Instead, suppose one of $\delta_F(A \cup B)$ and $\delta_F(A \cap B)$
has at least two safe edges; without loss of generality, we assume
$\delta_{F \cap S}(A \cap B) \geq 2$. Then, by submodularity of the
cut function,
\[|\delta_{F \cap \calS}(A)| + |\delta_{F \cap \calS}(B)| = 2 \geq |\delta_{F \cap \calS}(A \cap B)| + |\delta_{F \cap \calS}(A \cup B)|.\]
This tells us that $|\delta_{F \cap S}(A \cap B)| \leq 2$, so $A \cap B$ must be crossed by exactly 2 safe edges. Furthermore, if $|\delta_{F \cap S}(A \cap B)| = 2$, then $|\delta_{F \cap S}(A \cup B)| = 0$. Recall that all cuts in $F$ are crossed by either at least two safe edges, at least four total edges, or exactly one safe and two unsafe edges. Therefore, if $\delta_F(A \cup B)$ has no safe edges, it must be the case that $|\delta_F(A \cup B)| \geq 4$. However, we know from above that $|\delta_F(A \cup B)| + |\delta_F(A \cap B)| \leq 6$. Therefore, $|\delta_F(A \cup B)| = 4$ and $|\delta_F(A \cap B)| = 2$. Since $\delta_F(A \cup B)$ has no safe edges, it must be the case that $\delta_F(A \cup B)$ has exactly four unsafe edges (and no safe). Similarly, since $|\delta_{F \cap S}(A \cap B)| = |\delta_F(A \cap B)| = 2$, $\delta_F(A \cap B)$ has exactly two edges, both of which are safe. The argument is analogous in the case that $\delta_F(A \cup B)$ has two safe edges.
\end{proof}

\begin{figure}
  \begin{center}
    \includegraphics[width = 0.5\linewidth]{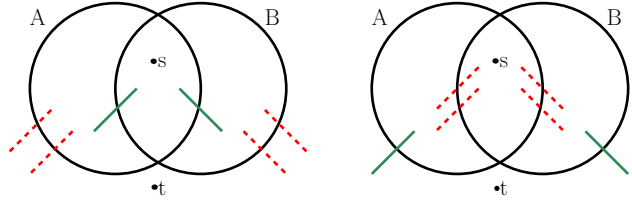}
  \end{center}
  \caption{Crossing Sets for FGC s-t (2, 2) Augmentation. Red dashed edges are unsafe, while green solid edges are safe.}
  \label{st_22_crossing_image}
\end{figure}

\begin{corollary}
\label{st_22_shared_edge}
For each safe edge $e \in \calS$, the set of all violated cuts $A \subseteq V$ with $e \in \delta_F(A)$ is a ring family.
\end{corollary}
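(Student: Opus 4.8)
Fix a safe edge $e \in \calS$ and let $\calC_e = \{A \in \calC : e \in \delta_F(A)\}$ be the family in question. My first step is bookkeeping: every $A \in \calC_e$ satisfies $s \in A$, $t \notin A$, and $\delta_F(A)$ has exactly one safe edge; since $e \in \delta_F(A)$ is safe, that unique safe edge \emph{is} $e$. In particular any two members of $\calC_e$ share $e$ as their common and only safe boundary edge. By the observation recorded after Theorem~\ref{thm:uncrossable}, an uncrossable family all of whose members contain a common vertex is automatically a ring family, so it suffices to prove that $\calC_e$ is uncrossable. Because $s$ lies in every member of $\calC_e$, for any $A,B \in \calC_e$ the set $A - B$ does not contain $s$ and hence is not in $\calC_e$; thus option (ii) of the uncrossability definition can never hold, and I only need to show that for every properly intersecting pair $A, B \in \calC_e$ we have $A \cap B, A \cup B \in \calC_e$. (If $A, B$ are nested this is immediate, and $A \cap B = \emptyset$ or $A \cup B = V$ is impossible since $s \in A \cap B$ and $t \notin A \cup B$.)

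Next I analyze a properly intersecting pair $A, B \in \calC_e$. Both $A \cap B$ and $A \cup B$ separate $s$ from $t$, so Lemma~\ref{st_22_crossing_criteria} gives two possibilities: either $A \cup B, A \cap B \in \calC$, or (after possibly swapping the two sets) $\delta_F(A \cap B)$ is exactly two safe edges and $\delta_F(A \cup B)$ is exactly four unsafe edges. The core of the argument is to kill the second possibility using the shared safe edge $e = uv$. Say $u \in A$, $v \notin A$; since $e \in \delta_F(B)$ there are two cases. If $u \in B$, then $u \in A \cap B$ and $v \notin A \cup B$, so $e \in \delta_F(A \cap B) \cap \delta_F(A \cup B)$; then both of these cuts contain the safe edge $e$, which is incompatible with one of them consisting of four unsafe edges and no safe edge, so the second possibility of Lemma~\ref{st_22_crossing_criteria} is ruled out. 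If instead $v \in B$, then $u \in A - B$ and $v \in B - A$, so $e$ is a safe edge between $A-B$ and $B-A$; applying submodularity to the safe cut function, $|\delta_{F \cap \calS}(A \cap B)| + |\delta_{F \cap \calS}(A \cup B)| \le |\delta_{F \cap \calS}(A)| + |\delta_{F \cap \calS}(B)| - 2 = 0$, so neither $A \cap B$ nor $A \cup B$ has a safe boundary edge; but then Lemma~\ref{st_22_starting_criteria} forces each of $\delta_F(A \cap B)$ and $\delta_F(A \cup B)$ to have at least four edges, contradicting submodularity $|\delta_F(A \cap B)| + |\delta_F(A \cup B)| \le |\delta_F(A)| + |\delta_F(B)| = 6$. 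Hence the case $v \in B$ is impossible, so $e$ always crosses both $A \cap B$ and $A \cup B$, and the second possibility of Lemma~\ref{st_22_crossing_criteria} never occurs.

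Combining the two conclusions: for properly intersecting $A, B \in \calC_e$ we have $A \cap B, A \cup B \in \calC$, and moreover $e \in \delta_F(A \cap B) \cap \delta_F(A \cup B)$, so $A \cap B, A \cup B \in \calC_e$. This establishes uncrossability of $\calC_e$, and since every member of $\calC_e$ contains $s$, it is a ring family. (Equivalently, one can exhibit the unique minimal element directly: all members of $\calC_e$ pairwise intersect at $s$, closure under intersection is exactly what we proved, and intersecting the finitely many members keeps us inside $\calC_e$, producing the unique minimal set.) I expect the only delicate point to be the endpoint case analysis for $e$, and in particular recognizing that the configuration where $e$ runs between $A-B$ and $B-A$ is eliminated by pitting the safe-edge submodularity bound against the $(2,1)$-feasibility of $F$ guaranteed by Lemma~\ref{st_22_starting_criteria}.
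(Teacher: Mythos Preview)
Your proof is correct and follows the same high-level strategy as the paper: reduce to showing that $\calC_e$ is closed under intersection and union by invoking Lemma~\ref{st_22_crossing_criteria} and ruling out its second alternative, then appeal to the common vertex $s$ to conclude it is a ring family.

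The difference is in how you eliminate the ``two safe / four unsafe'' alternative. You do an endpoint case analysis on $e=uv$: in one case $e$ lands in both $\delta_F(A\cap B)$ and $\delta_F(A\cup B)$, directly contradicting the four-unsafe side; in the other case you use the submodularity slack from an $(A{-}B)$--$(B{-}A)$ edge together with Lemma~\ref{st_22_starting_criteria} to force $|\delta_F(A\cap B)|,|\delta_F(A\cup B)|\ge 4$ against a total of $6$. The paper instead dispatches both cases in one line via the containment $\delta_{F\cap\calS}(A\cap B)\cup\delta_{F\cap\calS}(A\cup B)\subseteq\delta_{F\cap\calS}(A)\cup\delta_{F\cap\calS}(B)=\{e\}$, so neither $A\cap B$ nor $A\cup B$ can have two safe edges and Lemma~\ref{st_22_crossing_criteria} immediately gives $A\cap B,A\cup B\in\calC$. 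Your argument is sound, but the paper's containment observation is the cleaner shortcut and makes the second endpoint case (and the appeal to Lemma~\ref{st_22_starting_criteria}) unnecessary.
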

\begin{proof}
Let $A, B \in \calC$. Recall that this means they must each be crossed by exactly one safe edge. Suppose $\delta_{F \cap \calS}(A) = \delta_{F \cap \calS}(B) = \{e\}$. Any safe edge crossing $A \cap B$ or $A \cup B$ also crosses at least one of $A$ and $B$, since $\delta_{F \cap \calS}(A \cap B) \subseteq \delta_{F \cap \calS}(A) \cup \delta_{F \cap \calS}(B)$ and $\delta_{F \cap \calS}(A \cup B) \subseteq \delta_{F \cap \calS}(A) \cup \delta_{F \cap \calS}(B)$. Therefore, both $A \cap B$ and $A \cup B$ are crossed by at most 1 safe edge. By Lemma~\ref{st_22_crossing_criteria}, $A \cup B, A \cap B \in \calC$. Because of this, they must both be crossed by exactly one safe edge, so $e \in \delta_{F}(A \cap B)$ and $e \in \delta_F(A \cup B)$.
\end{proof}

\begin{lemma}
\label{st_22_path_uncross}
Suppose $A, B \in \calC$ and the sets properly intersect, and let
$e_1, e_2$ denote the safe edges in $\delta_F(A)$ and $\delta_F(B)$
respectively. If there is an $s$-$t$ path in the graph $(V,F)$
containing both $e_1$ and $e_2$, then $A \cup B, A \cap B \in
\calC$. Furthermore,
$\delta_{F \cap \calS}(A \cap B) \cup \delta_{F \cap \calS}(A \cup B)
\subseteq \{e_1, e_2\}$.
\end{lemma}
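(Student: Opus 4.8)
The plan is to bootstrap from Lemma~\ref{st_22_crossing_criteria}. Since $A,B\in\calC$ properly intersect, that lemma already tells us that either $A\cup B, A\cap B\in\calC$ --- which is exactly the conclusion we want --- or we are in the ``bad case'' where one of $\delta_F(A\cap B),\delta_F(A\cup B)$ consists of exactly two safe edges (and no unsafe edge) while the other consists of exactly four unsafe edges (and no safe edge). So the whole task reduces to using the hypothesized $s$-$t$ path $P$ through $e_1$ and $e_2$ to rule out the bad case.

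First I would record the one structural fact that drives everything: because $A\in\calC$, its boundary contains exactly one safe edge, so $\delta_{F\cap\calS}(A)=\{e_1\}$, and similarly $\delta_{F\cap\calS}(B)=\{e_2\}$. Combined with the elementary containments $\delta(A\cap B),\delta(A\cup B)\subseteq\delta(A)\cup\delta(B)$, this gives $\delta_{F\cap\calS}(A\cap B)\cup\delta_{F\cap\calS}(A\cup B)\subseteq\{e_1,e_2\}$, which already establishes the ``furthermore'' clause unconditionally. It also pins down the bad case: the side whose boundary has ``exactly two safe edges'' must have safe boundary exactly $\{e_1,e_2\}$, which in particular forces $e_1\neq e_2$ (if $e_1=e_2$ the bad case cannot arise at all, and Corollary~\ref{st_22_shared_edge} already gives $A\cup B, A\cap B\in\calC$). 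Call this side $X$; it is one of $A\cap B$ or $A\cup B$, and since each of these separates $s$ from $t$ and $\delta_F(X)=\delta_F(V\setminus X)$, we may treat $X$ as an $s$-$t$ cut with $\delta_F(X)=\{e_1,e_2\}$ and no unsafe boundary edge, so $|\delta_F(X)|=2$.

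The contradiction is then the standard parity argument: a simple $s$-$t$ path crosses any $s$-$t$ cut an odd number of times, so $|P\cap\delta_F(X)|$ is odd, and since $|\delta_F(X)|=2$ it must equal $1$. But $e_1$ and $e_2$ are two distinct edges of $\delta_F(X)$ and both lie on $P$, so $|P\cap\delta_F(X)|\ge 2$ --- a contradiction. Hence the bad case is impossible, and $A\cup B, A\cap B\in\calC$. I do not anticipate a genuine obstacle; the only step that needs a little care is the bookkeeping that identifies the ``two-safe-edge side'' of the bad case as exactly $\{e_1,e_2\}$ (hence two distinct edges of $P$), which is where the uniqueness of the safe boundary edge of each of $A$ and $B$, together with the cut-containment inclusions, gets used.
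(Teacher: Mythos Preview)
Your proof is correct and follows essentially the same approach as the paper: reduce to the bad case via Lemma~\ref{st_22_crossing_criteria}, pin down the two-safe-edge side as having boundary exactly $\{e_1,e_2\}$, and derive a contradiction from how the path $P$ interacts with that two-edge cut. The paper spells out the path-tracing explicitly (after crossing $e_1$ you are outside, after crossing $e_2$ you are back inside, with no remaining boundary edge to reach $t$), whereas your parity observation --- a simple $s$-$t$ path meets an $s$-$t$ cut in an odd number of edges --- packages the same idea more concisely.
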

\begin{proof}
 Let $A, B \in \calC$ such that the sets properly intersect,
  $\{e_1\} = \delta_{F \cap \calS}(A)$ and
  $\{e_2\} = \delta_{F \cap \calS}(B)$, and $P$ be an $s$-$t$ path in
  the graph $(V,F)$ containing both $e_1$ and $e_2$.   If $e_1 = e_2$, then by Corollary~\ref{st_22_shared_edge}, we are done. Else, without loss of
  generality, suppose $e_1$ comes before $e_2$ on this path
  $P$. Suppose for the sake of contradiction, $A \cap B \notin \calC$
  or $A \cup B \notin \calC$. By Lemma~\ref{st_22_crossing_criteria},
  either $|\delta_{F \cap \calS}(A \cap B)| = 2$ or
  $|\delta_{F \cap \calS}(A \cup B)| = 2$. Without loss of generality,
  suppose $|\delta_{F \cap \calS}(A \cap B)| = 2$. Since
  $\delta_{F \cap \calS}(A \cap B) \subseteq \delta_{F \cap \calS}(A)
  \cup \delta_{F \cap \calS}(B)$,
  $\delta_{F \cap \calS}(A \cap B) = \{e_1, e_2\}$.

  Notice that Lemma~\ref{st_22_crossing_criteria} also tells us that
  there are no other edges in $\delta_F(A \cap B)$. Let
  $e_1 = \{x_1, y_1\}, e_2 = \{x_2, y_2\}$ where $x_i \in A \cap B$,
  $y_i \notin A \cap B$ for $i = 1, 2$. Then, since $s \in A \cap B$
  and $e_1$ comes before $e_2$ in $P$, $x_1$ must be visited before
  $y_1$ in $P$. All nodes in $P$ in between $y_1$ and $e_2$ must be
  outside $A \cap B$, since the only way to re-enter $A \cap B$ is
  through $e_2$. Therefore, $y_2$ must be visited before $x_2$ in
  $P$. This implies that $x_2$ is visited after $x_1, y_1$, and $y_2$,
  and since $P$ is a path, none of them can be revisited. Therefore,
  all remaining nodes in $P$ must be in $A \cap B$, since there are no
  remaining edges in $\delta_F(A \cap B)$. This is a contradiction,
  since $t \notin A \cap B$.

  The case where $e_1, e_2 \in \delta_F(A \cup B)$ is analogous: $P$
  must traverse $e_1$ to leave $A \cup B$, and traverse $e_2$ to
  re-enter $A \cup B$, but $t \notin A \cup B$.

  Therefore, $A \cup B$, $A \cap B$ in $\calC$. Note that any edge
  crossing $A \cap B$ or $A \cup B$ must also cross $A$ or $B$. Since
  $\delta_{F \cap \calS}(A) \cup \delta_{F \cap \calS}(B) \subseteq
  \{e_1, e_2\}$,
  $\delta_{F \cap \calS}(A \cap B) \cup \delta_{F \cap \calS}(A \cup
  B) \subseteq \{e_1, e_2\}$ as well.
\end{proof}

Let $\calS' \subseteq F \cap \calS$ be the subset of safe edges cross at least one violated cut. Notice that Corollary~\ref{st_22_shared_edge} proves that all violated cuts
crossed by the same safe edge form a ring family, and the above
Lemma~\ref{st_22_path_uncross} proves that all violated cuts crossed
by safe edges on the same $s$-$t$ path form a ring family. Therefore,
it suffices to show that there are three $s$-$t$ paths whose union
covers all edges in $\calS'$.

\begin{lemma}
\label{st_22_flow}
There exist three $s$-$t$ paths $P_1, P_2, P_3$ in $(V, F)$ s.t.
$S' \subseteq \bigcup_{i=1}^3 P_i$. Furthermore, we can find these
paths in polynomial time.
\end{lemma}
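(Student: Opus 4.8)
The plan is to show that the safe edges in $\calS'$ can be covered by three $s$-$t$ paths in $(V,F)$. The natural tool is max-flow/min-cut combined with the structural guarantee from Lemma~\ref{st_22_starting_criteria}: every $s$-$t$ cut $\delta_F(A)$ either has at least two safe edges, at least four edges total, or exactly one safe and two unsafe edges. In \emph{all} three cases, if we count only the \emph{safe} edges we might get as few as one, so counting safe-edge connectivity directly is not enough. The key realization is that $\calS'$ consists precisely of safe edges lying in the ``thin'' cuts (the violated ones), and each such cut has exactly three edges. I would like to argue that $\calS'$ has small ``width'' in the sense that it can be decomposed along few $s$-$t$ paths.

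\textbf{Main approach via a flow argument.} First I would contract: consider the graph $(V,F)$ and build an auxiliary capacitated graph. The cleanest route is to observe that since $F$ is feasible for $(2,1)$-Flex-ST, every $s$-$t$ cut has at least $3$ edges, so there are $3$ edge-disjoint $s$-$t$ paths in $(V,F)$ by Menger's theorem. The task is to choose these three paths so that together they hit \emph{every} edge of $\calS'$. For a fixed edge $e \in \calS'$, let $A_e$ be a violated cut with $e \in \delta_F(A_e)$; then $\delta_F(A_e) = \{e\} \cup \{u_1, u_2\}$ with $u_1, u_2$ unsafe. Since $|\delta_F(A_e)| = 3$ and there are three edge-disjoint $s$-$t$ paths, each of the three paths must use exactly one edge of this cut — so $e$ lies on one of the three paths in \emph{any} maximum edge-disjoint $s$-$t$ path packing that is ``laminar-compatible'' with these cuts. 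The subtlety is that a single packing of three paths need not route through \emph{all} the violated cuts simultaneously in a consistent way; different violated cuts may force different routings.

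\textbf{Handling consistency — the main obstacle.} This is where I expect the real work. I would set it up as follows: direct the problem by orienting. Because every violated cut has exactly $3$ edges and every $s$-$t$ cut has $\geq 3$ edges, the edges of $F$ that participate in \emph{some} minimum ($3$-edge) $s$-$t$ cut form a structured ``core.'' I would invoke the structure of minimum cuts: the family of $3$-edge $s$-$t$ cuts (all $s$-$t$ mincuts of $(V,F)$ when the mincut value is exactly $3$) forms a well-understood lattice, and the cactus/tree-of-mincuts representation shows all of $\calS' $ sits on a short ``path-like'' part of this structure. Concretely, I would pick three edge-disjoint $s$-$t$ paths $P_1, P_2, P_3$ and argue by an exchange/augmenting-path argument that if some $e \in \calS'$ is missed, one can reroute: since the cut $\delta_F(A_e)$ has exactly three edges, one of $P_1, P_2, P_3$ crosses it via an unsafe edge that can be swapped for $e$ without destroying edge-disjointness, as the remaining two paths already occupy the other two cut edges. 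Iterating this exchange (with a suitable potential, e.g.\ the number of edges of $\calS'$ covered, which is monotone) terminates with all of $\calS'$ covered. The delicate point — the main obstacle — is ensuring the exchange is globally consistent: a swap fixing one missed safe edge must not un-cover another; here I would rely on the ring-family structure of the violated cuts crossed by a single safe edge (Corollary~\ref{st_22_shared_edge}) to show that once $e$ is on a path, it stays coverable, and on the fact that $A\cup B \neq V$ for $A,B\in\calC$ to control how these cuts nest.

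\textbf{Polynomial-time implementation.} Finally, for the algorithmic claim I would phrase the whole thing as a single min-cost or max-flow computation: contract each ``bundle'' appropriately, or more simply, repeatedly compute a max edge-disjoint $s$-$t$ flow of value $3$ and use standard augmenting-path machinery to perform the exchanges, each of which is a flow augmentation in a residual network; since $|\calS'| \leq |E|$ and each exchange strictly increases the number of covered safe edges, there are at most $|E|$ iterations, each polynomial, giving an overall polynomial-time algorithm. I expect the cleanest write-up actually replaces the iterative exchange by a direct reduction — e.g.\ set up a flow instance on $(V,F)$ where each safe edge in $\calS'$ is forced (made mandatory by splitting it and requiring unit flow through it), and argue feasibility of routing $3$ units of flow through all forced edges using exactly the cut structure above — but the exchange argument is the conceptual heart and the part I would develop first.
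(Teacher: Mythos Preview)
Your approach has a genuine gap at the very first step. You claim that since $F$ is feasible for $(2,1)$-Flex-ST, every $s$-$t$ cut in $(V,F)$ has at least three edges, and then invoke Menger to obtain three edge-disjoint $s$-$t$ paths. But $(2,1)$-feasibility only says each cut has at least two safe edges \emph{or} at least three total edges; in particular a cut consisting of exactly two safe edges and nothing else is allowed by Lemma~\ref{st_22_starting_criteria} (case~(1)). So the edge-connectivity of $(V,F)$ may be only $2$, and three edge-disjoint $s$-$t$ paths need not exist. Everything downstream --- the exchange argument, the potential function, the forced-flow reduction --- rests on having three such paths, so the proof does not go through.

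The paper sidesteps this by putting capacity $2$ on safe edges and capacity $1$ on unsafe edges. Then every $s$-$t$ cut has capacity at least $4$ (this is exactly how $F$ was constructed), so there is an integral flow of value $4$, which decomposes into four $s$-$t$ paths that are \emph{not} edge-disjoint (a safe edge may appear in two of them). Since each violated cut has capacity exactly $4$, every edge in it is saturated; in particular each $e\in\calS'$ carries flow $2$ and hence lies on at least two of the four paths. Dropping any one path still covers all of $\calS'$, giving the three paths.

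It is also worth noting that your ``main obstacle'' is a phantom. Had three edge-disjoint paths actually existed, your own observation would finish the proof immediately with no exchange argument: each violated cut has exactly three edges, the three edge-disjoint paths must each use one of them, so the unique safe edge of that cut lies on one of the three paths --- and this holds for every violated cut simultaneously, for \emph{any} choice of three edge-disjoint paths. There is no consistency issue to resolve. The real difficulty is not routing through the violated cuts; it is getting enough paths past the non-violated $2$-safe-edge cuts, which is precisely what the capacity trick handles.
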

\begin{proof}
  We can assume that $\calC$ is not empty and hence there is at least
  one violated cut.  Consider a flow network on the graph $(V, F)$,
  where each safe edge is given a capacity of 2, and each unsafe edge
  is given a capacity of 1. Note that all violated cuts have a total
  capacity of exactly 4, and since violated cuts separate $s$ from
  $t$, the maximum $s$-$t$ flow is at most 4. If the maximum flow was
  less than 4, then by the max-flow min-cut theorem, there would be an
  $s$-$t$ cut with total capacity strictly less than 4. However, we
  constructed $F$ ensuring that all $s$-$t$ cuts have capacity at
  least 4. Therefore, the maximum flow must be exactly 4.

  All capacities are integral, hence there exists some integral
  max-flow; let $g$ be such a flow. Since the violated cuts have total
  capacity exactly 4, all edges crossing them must be fully
  saturated. In particular, all safe edges $e \in S'$ must have
  $g(e) = 2$. By flow decomposition, there are four $s$-$t$ paths
  $P_1, \dots, P_4$ each carrying a flow of 1, where all $e \in E$ with $g(e) > 0$ are in at
  least one path. Since each safe edge $e \in \calS'$ has $g(e) = 2$,
  $e$ belongs to at least two of the paths $P_1,\ldots,P_4$. Thus,
  choosing any three of them would cover all edges in $e \in S'$,
  hence $\calS' \subseteq \cup_{i=1}^3 P_i$.
\end{proof}

\begin{claim}
There is a $5$-approximation for $(2,2)$-Flex-ST.
\end{claim}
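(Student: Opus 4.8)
The plan is to pay $2\opt$ for the starting solution $F$ of Lemma~\ref{st_22_starting_criteria} and an additional $3\opt$ to cover the violated family $\calC$, by writing $\calC$ as a union of three ring families and covering each one exactly.

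First I would dispose of the degenerate case: if $\calC = \emptyset$ then $F$ is already feasible for $(2,2)$-Flex-ST, so assume $\calC \neq \emptyset$ and invoke Lemma~\ref{st_22_flow} to fix three $s$-$t$ paths $P_1,P_2,P_3$ in $(V,F)$ with $\calS' \subseteq P_1 \cup P_2 \cup P_3$. For $i=1,2,3$ set $\calC_i = \{A \in \calC : \text{the unique safe edge of } \delta_F(A) \text{ lies on } P_i\}$. Since every $A \in \calC$ has its safe edge in $\calS'$ and hence on some $P_i$, we get $\calC = \calC_1 \cup \calC_2 \cup \calC_3$ (the $\calC_i$ need not be disjoint). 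By Lemma~\ref{st_22_path_uncross}, whenever $A,B \in \calC_i$ properly intersect, both $A \cup B$ and $A \cap B$ lie in $\calC$ and their safe edges are among the (at most two) safe edges of $A$ and $B$, hence on $P_i$; so $A \cup B, A \cap B \in \calC_i$, i.e.\ $\calC_i$ is an uncrossable family, and since $s$ belongs to every member it is in fact a ring family (Section~\ref{sec:prelim}). Finally, every violated cut is crossed by exactly three edges of $F$, so $\calC$ --- and each $\calC_i$ --- can be listed in polynomial time (enumerate the $\binom{|F|}{3}$ triples of edges of $F$ and keep those forming an $s$-$t$ cut of the right type), which supplies the description/oracle needed by the polynomial-time exact algorithm for minimum-cost covering of a ring family \cite{Frank1979,Nutov10,Nutov12}.

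To charge the three covers, I first observe that $F^* \setminus F$ covers $\calC$: for $A \in \calC$, $\delta_F(A)$ has exactly one safe and two unsafe edges, while feasibility of $F^*$ forces $|\delta_{F^* \cap \calS}(A)| \ge 2$ or $|\delta_{F^*}(A)| \ge 4$; in the first case $\delta_{F^*}(A)$ contains a safe edge different from the single safe edge of $\delta_F(A)$, and in the second it contains more than the three edges of $\delta_F(A)$, so either way $\delta_{F^* \setminus F}(A) \neq \emptyset$. Consequently, for each $i$ the minimum cost $\opt_i$ of a subset of $E \setminus F$ covering $\calC_i$ satisfies $\opt_i \le \cost(F^* \setminus F) \le \cost(F^*) = \opt$; let $F_i' \subseteq E \setminus F$ attain it.

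Then $\hat F := F \cup F_1' \cup F_2' \cup F_3'$ has $F_1' \cup F_2' \cup F_3'$ covering every member of $\calC$, hence (by complementation symmetry) the augmentation function $f$, so by the simple augmentation lemma $\hat F$ is feasible for $(2,2)$-Flex-ST, with $\cost(\hat F) \le \cost(F) + \sum_i \cost(F_i') \le 2\opt + 3\opt = 5\opt$. The same bound holds against the LP relaxation of Section~\ref{sec:prelim}: Lemma~\ref{lemma:augmentation_lp} shows the LP restricted to $E \setminus F$ fractionally covers $f$, and the cut-covering polytope of a ring family is integral. The only point needing care is the accounting --- using the exact ring-cover routine rather than the generic $2$-approximation for uncrossable functions (which would cost $6\opt$ over the three families) is precisely what keeps the total at $5\opt$; the structural content has already been established in Corollary~\ref{st_22_shared_edge} and Lemmas~\ref{st_22_path_uncross}--\ref{st_22_flow}.
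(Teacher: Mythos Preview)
Your proposal is correct and follows essentially the same argument as the paper: decompose $\calC$ into three ring families $\calC_i$ along the three paths from Lemma~\ref{st_22_flow}, cover each exactly, and add the $2\opt$ cost of $F$. If anything, you are a bit more careful than the paper in explicitly verifying that $F^*\setminus F$ covers $\calC$ and in handling the degenerate case $\calC=\emptyset$.
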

\begin{proof}
  Let $P_i$ be the paths defined by Lemma~\ref{st_22_flow}, and let
  $\calC_i$ be the set of all violated cuts whose corresponding safe
  edge is on the path $P_i$, i.e.
  $\calC_i = \{A \in \calC: \delta_{F \cap \calS}(A) \subseteq
  P_i\}$. By definition of $\calS'$, for all $A \in \calC$, $\delta_{F \cap \calS}(A) \subseteq S' \subseteq \cup_{i=1}^3 P_i$. Therefore,
\begin{align*}
  \calC = = \{A \in \calC: \delta_{F \cap S}(A) \subseteq \cup_{i=1}^3 P_i\} = \bigcup_{i=1}^3 \calC_i.
\end{align*}

We can solve the augmentation problems of finding the minimum cost
subset $F_i \subseteq E \setminus F$ s.t. $\delta_{F_i}(A) \geq 1$ for
all $A \in \calC_i$ for each $i = 1,2,3$. Since
$\calC \subseteq \cup_{i=1}^3 \calC_i$, $F' = \cup_{i=1}^3 F_i$ is a
feasible solution to the augmentation problem of finding the minimum
cost subset of $E \setminus F$ s.t. $\delta_{F'}(A) \geq 1$
for all $A \in \calC$. Therefore, $F \cup F'$ is a feasible solution
to $(2,2)$-Flex-ST.

As an immediate corollary of Claim~\ref{st_22_path_uncross}, each
$\calC_i$ is a ring family. From the discussion in
Section~\ref{sec:prelim}, we see that the three corresponding
augmentation problems can be solved exactly, so
$\cost(F_i) \leq \cost(F^*)$.  Thus,
$\cost(F') \leq 3 \cdot \cost(F^*)$. By
Lemma~\ref{st_22_starting_criteria},
$\cost(F) \leq 2\cdot \cost(F^*)$. Overall,
$\cost(F \cup F') \leq 5 \cdot \cost(F^*)$.

Finally, note that we can find $F$ in polynomial time by
Lemma~\ref{st_22_starting_criteria}. For each $F_i$, we can find the
minimal violated set by finding all cuts in in $F$ with $p+q$ (in this case, 4) edges,
as explained in Section~\ref{sec:prelim}.
\end{proof}

\mypara{An approach to approximate $(p,q)$-Flex-Steiner:} In the
appendix (Section \ref{sec:steiner}) we outline a candidate approximation algorithm for
$(p,q)$-Flex-Steiner that leverages an approximation algorithm for
$(p,q)$-Flex-ST. We state a conjecture that can lead to a provable
approximation guarantee.

\section{Approximation Algorithms for FGC}
\label{sec:fgc}
In this section we prove Theorem~\ref{thm:introfgc}. The theorem encapsulates two results which share the high-level approach but differ in some of the
details. We first prove an $O(q)$ approximation for $(2,q)$-FGC and then prove our results for $(p,q)$-FGC for $q \le 4$.
As discussed in Section~\ref{sec:prelim}, most of the proofs in this
section rely on proving that certain families of violated cuts are
uncrossable. Suppose $A,B \in \calC$ and the sets properly intersect.
We say that $A,B$ uncross if $A-B, B-A \in \calC$ or $A \cap B, A \cup B \in \calC$. Otherwise we say that $A,B$ do not uncross.
Note that if $A \cup B = V$ and $A$ and $B$ are violated, then by
symmetry, $V - A$ and $V - B$ are violated as well. In this case,
however, $V - A = B - A$ and $V - B = A - B$, so once again, $A$ and
$B$ uncross. Therefore, when proving uncrossability of two properly
intersecting sets $A$ and $B$, we assume  without loss  of generality
that $(A \cup B) \neq V$.

\subsection{An $O(q)$-approximation for $(2, q)$-FGC}
We prove that the augmentation problem for increasing flex-connectivity from $(2,q)$ to $(2,q+1)$, for any $q \ge 0$ corresponds to covering an uncrossable function.

\begin{lemma}
The set of all violated cuts when augmenting from $(2, q)$-FGC to $(2, q+1)$-FGC is uncrossable.
\end{lemma}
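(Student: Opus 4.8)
The plan is to consider two properly intersecting violated cuts $A,B \in \calC$ with $(A\cup B)\neq V$ and show that $A\cap B, A\cup B$ are both violated (i.e.\ we will always land in the ``$\cap/\cup$'' case of uncrossability). Recall that $F_1$ is feasible for $(2,q)$-FGC, so $A$ violated means $|\delta_{F_1\cap\calS}(A)|\le 1$ and $|\delta_{F_1}(A)| = 2+q$; the same holds for $B$. The main tool is submodularity of the cut function together with symmetry/posimodularity of the safe-edge cut function. First I would use submodularity on the full cut function to get $|\delta_{F_1}(A\cup B)| + |\delta_{F_1}(A\cap B)| \le |\delta_{F_1}(A)| + |\delta_{F_1}(B)| = 2(2+q)$, and separately submodularity on the safe-edge sub-cut-function to get $|\delta_{F_1\cap\calS}(A\cup B)| + |\delta_{F_1\cap\calS}(A\cap B)| \le |\delta_{F_1\cap\calS}(A)| + |\delta_{F_1\cap\calS}(B)| \le 2$.

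Next I would invoke the fact that $F_1$ is feasible for $(2,q)$-FGC applied to the cuts $A\cap B$ and $A\cup B$ (these separate some pair of vertices since $A,B$ properly intersect and $A\cup B \neq V$): each of them has either at least $2$ safe edges or at least $2+q$ total edges. The safe-edge inequality above already forces at least one of $A\cap B$, $A\cup B$ to have at most $1$ safe edge; I would like to argue both have at most $1$ safe edge, because $\delta_{F_1\cap\calS}(A\cap B)\subseteq \delta_{F_1\cap\calS}(A)\cup\delta_{F_1\cap\calS}(B)$ and similarly for $A\cup B$, and since $|\delta_{F_1\cap\calS}(A)|,|\delta_{F_1\cap\calS}(B)|\le 1$, any safe edge in $\delta(A\cap B)$ or $\delta(A\cup B)$ is one of at most two specific edges --- but I actually just need that each of $A\cap B,A\cup B$ is crossed by at most $1$ safe edge, which follows because a single safe edge cannot simultaneously be in $\delta(A)\setminus\delta(B)$ and $\delta(B)\setminus\delta(A)$, so the safe edges of $A\cap B$ and of $A\cup B$ are drawn from the at most two safe edges crossing $A$ or $B$, but each such edge crosses $A\cap B$ (or $A\cup B$) only together with crossing $A$ or $B$. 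I expect the clean statement to be: since $F_1$ is $(2,q)$-feasible, any cut not crossed by $2$ safe edges must have $\ge 2+q$ total edges; and I will show neither $A\cap B$ nor $A\cup B$ is crossed by $2$ safe edges.

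Granting that, both $A\cap B$ and $A\cup B$ have $\ge 2+q$ total $F_1$-edges, so $|\delta_{F_1}(A\cup B)| + |\delta_{F_1}(A\cap B)| \ge 2(2+q)$, which combined with the submodular upper bound forces equality: $|\delta_{F_1}(A\cup B)| = |\delta_{F_1}(A\cap B)| = 2+q$. Since each is crossed by at most $1$ safe edge and has exactly $2+q$ total edges, each is violated, so $A\cap B, A\cup B \in \calC$ and $A,B$ uncross. The remaining gap --- and the step I expect to need the most care --- is the claim that neither $A\cap B$ nor $A\cup B$ is crossed by $2$ safe edges; I would handle this by a short case analysis: if, say, $|\delta_{F_1\cap\calS}(A\cap B)| \ge 2$, then the safe-edge submodular inequality gives $|\delta_{F_1\cap\calS}(A\cup B)| = 0$ and $|\delta_{F_1\cap\calS}(A\cap B)| = 2 = |\delta_{F_1\cap\calS}(A)| + |\delta_{F_1\cap\calS}(B)|$, forcing $|\delta_{F_1\cap\calS}(A)| = |\delta_{F_1\cap\calS}(B)| = 1$ with the two safe edges distinct and both crossing $A\cap B$; but then $A\cup B$ has $0$ safe edges hence $\ge 2+q$ total edges, while $A\cap B$ has $\ge 2$ total edges, and submodularity gives $|\delta_{F_1}(A\cup B)| + |\delta_{F_1}(A\cap B)| \le 2(2+q)$, which is consistent --- so I instead need to rule this out using that $A\cap B$ having exactly $2$ edges (both safe) contradicts it being a violated-eligible cut only if $q\ge 1$; for $q=0$ the statement is about augmenting to $(2,1)$ and this configuration corresponds exactly to $A\cap B$ already satisfying the $(2,1)$ requirement, hence $A\cap B$ need not be in $\calC$, but then I must re-examine whether $A,B$ still uncross via $A-B,B-A$. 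I would resolve this by noting that in that degenerate situation one can instead show $|\delta_{F_1}(A\cup B)| = 2+q$ with $\le 1$ safe edge so $A\cup B$ is violated, and handle $A\cap B$ by observing it is not required to be in $\calC$ only when it is already $(2,q+1)$-satisfied, contradicting... — the cleanest route is to show directly that the ``$2$ safe edges'' case cannot occur for $A\cap B$ and $A\cup B$ simultaneously and that whenever it occurs for one, the total-edge count is still forced to $2+q$ on the other and the first is automatically in the family-to-cover or trivially satisfied, so uncrossability (possibly degenerately) still holds.
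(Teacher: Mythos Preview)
Your plan has a genuine gap: it is \emph{not} true that $A\cap B$ and $A\cup B$ are always both violated, so the strategy ``always uncross via $\cap/\cup$'' cannot succeed. Here is a concrete counterexample for the augmentation from $(2,1)$ to $(2,2)$. Take four vertices $v_1,v_2,v_3,v_4$ and let $F_1$ consist of safe edges $v_1v_2$, $v_1v_3$ and two parallel unsafe edges each between $v_2v_4$ and $v_3v_4$. One checks $F_1$ is $(2,1)$-feasible. With $A=\{v_1,v_2\}$ and $B=\{v_1,v_3\}$, both $A$ and $B$ are violated (one safe, two unsafe). But $A\cap B=\{v_1\}$ has exactly two safe edges and is not violated, while $A\cup B=\{v_1,v_2,v_3\}$ has four unsafe edges and hence $|\delta_{F_1}(A\cup B)|=4>q+2=3$, so it too is not violated. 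Your hoped-for fallback (``the total-edge count is still forced to $2+q$ on the other'') fails exactly here. What rescues uncrossability in this example is that $A-B=\{v_2\}$ and $B-A=\{v_3\}$ are both violated.

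The paper's proof handles this by treating the $\cap/\cup$ and the $-/-$ branches symmetrically. It first observes (as you do) that if both $A\cap B$ and $A\cup B$ have at most one safe edge, submodularity forces both to have exactly $q+2$ edges and hence to be violated; the identical argument via posimodularity shows that if both $A-B$ and $B-A$ have at most one safe edge, they are both violated. The remaining case is that one of $\{A\cap B,A\cup B\}$ \emph{and} one of $\{A-B,B-A\}$ each has at least two safe edges. Say $|\delta_{F_1\cap\calS}(A\cap B)|\ge 2$ and $|\delta_{F_1\cap\calS}(A-B)|\ge 2$. Since $\delta_{F_1\cap\calS}(A\cap B)\cup\delta_{F_1\cap\calS}(A-B)\subseteq \delta_{F_1\cap\calS}(A)\cup\delta_{F_1\cap\calS}(B)$ and the right side has at most two elements, both safe-edge sets equal the same two-element set; hence two safe edges go between $A\cap B$ and $A-B$, which means both lie in $\delta_{F_1\cap\calS}(B)$, contradicting $|\delta_{F_1\cap\calS}(B)|\le 1$. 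The missing idea in your proposal is precisely this use of posimodularity on the $A-B,B-A$ side and the combinatorial contradiction when both branches fail.
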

\begin{proof}
Let $F \subseteq E$ such that $F$ is a feasible solution to $(2, q)$-FGC. Suppose $A$, $B$ are two violated cuts that properly intersect. Since $A$ is a violated cut, $\delta(A)$ must contain at most $1$ safe edge and at most $q+2$ total edges, similarly $\delta(B)$. Since they must satisfy the requirements of $(2, q)$-FGC, they must have exactly $q+2$ total edges.

First, we claim that if both $\delta_F(A \cap B)$ and
$\delta_F(A \cup B)$ have less than $2$ safe edges each then $A, B$
uncross. To see this, suppose $|\delta_{F \cap \calS}(A \cap B)| < 2$
and $|\delta_{F \cap \calS}(A \cup B)| < 2$. Since $F$ is feasible for
$(2,q)$-FGC, $|\delta_F(A \cap B)| \geq q+2$, and
$|\delta_F(A \cup B)| \geq q+2$. By submodularity of the cut function,
$|\delta_F(A \cap B)| + |\delta_F(A \cup B)| \leq |\delta_F(A)| +
|\delta_F(B)| = 2(q+2)$. Therefore,
$|\delta_F(A \cap B)| = |\delta_F(A \cup B)| = q+2$. Since neither of
them have two safe edges and both have exactly $q+2$ edges, they are
violated, and hence $A$ and $B$ uncross.

By the same reasoning, using posimodularity of the cut function, we
can argue that at least one of $\delta_F(A - B)$ and $\delta_F(B - A)$
must have at least $2$ safe edges otherwise both $A-B,B-A$ are
violated which implies that $A,B$ uncross. To see this, suppose
$|\delta_{F \cap \calS}(A - B)| < 2$ and
$|\delta_{F \cap \calS}(B - A)| < 2$. Since $F$ is feasible for
$(2,q)$-FGC, $|\delta_F(A - B)| \geq q+2$, and
$|\delta_F(B - A)| \geq q+2$. By posimodularity of the cut function,
$|\delta_F(A - B)| + |\delta_F(B - A)| \leq |\delta_F(A)| +
|\delta_F(B)| = 2(q+2)$. Therefore,
$|\delta_F(A - B)| = |\delta_F(B - A)| = q+2$. Since neither of them
have two safe edges and both have exactly $q+2$ edges, they are
violated.

Thus, we can assume that one of $\delta_F(A \cap B)$ and
$\delta_F(A \cup B)$ has $2$ safe edges and one of $\delta_F(A - B)$
and $\delta_F(B - A)$ has $2$ safe edges.  Without loss of generality
assume that $|\delta_{F \cap \calS}(A \cap B)| \geq 2$ and
$|\delta_{F \cap \calS}(A - B)| \geq 2$, the other cases are
similar. Recall that $\delta_F(A)$ and $\delta_F(B)$ each have at most
one safe edge. Therefore,
$|\delta_{F \cap \calS}(A \cap B) \cup \delta_{F \cap \calS}(A - B)|
\leq 2$, so there must be two safe edges in
$\delta_F(A - B) \cap \delta_F(A \cap B)$, i.e. both safe edges must
be shared between the boundaries of $A - B$ and $A \cap B$. However
this implies that $\delta_F(B)$ has two safe edges contradicting that
$B$ was violated.
\end{proof}

The preceding lemma yields an $2(q+1)$-approximation for $(2,q)$-FGC
as follows. We start with a $2$-approximation for $(2,0)$-FGC that can
be obtained by using an algorithm for $2$-ECSS. Then for we augment in
$q$-stages to go from a feasible solution to $(2,0)$-FGC to
$(2,q)$-FGC. The cost of augmentation in each stage is at most $\opt$
where $\opt$ is the cost of an optimum solution to $(2,q)$-FGC. We can
use the known $2$-approximation algorithm in each augmentation stage
since the family is uncrossable. Recall from Section~\ref{sec:prelim}
that the the violated cuts can be enumerated in polynomial time, and
hence the primal-dual $2$-approximation for covering an uncrossable
function can be implemented in polynomial-time. This leads to the
claimed approximation and running time.

\subsection{Approximating $(p, q)$-FGC for $q \leq 4$}
We have seen that the augmenting problem from $(2,q)$-FGC to
$(2,q+1)$-FGC leads to covering an uncrossable function. Boyd et
al. \cite{BoydCHI22} showed that augmenting from $(p,0)$-FGC to $(p,1)$-FGC
also leads to an uncrossable function for any $p \ge 1$. However this approach fails in
general for most cases of augmenting from $(p,q)$-FGC to
$(p,q+1)$-FGC. We give an example for $(3,1)$ to $(3,2)$ in
Figure~\ref{fgc_32_counterex} in Section~\ref{sec:counterexamples}.
However, in certain cases, we can argue that we can take a more
sophisticated approach where we solve the augmentation problem by
considering the violated cuts in a small number of stages: in each
stage we choose a subfamily of the violated cuts that is
uncrossable.  In such cases, we can get obtain a $2k$-approximation
for the augmentation problem, where $k$ is the upper bound on the
number of stages. Here we show that this approach works to augment
from $(p,q)$ to $(p,q+1)$ whenever $q \le 2$ and also for $q=3$ when
$p$ is even. We show in Section~\ref{sec:counterexamples} limitations
of our approach for $q \geq 4$ and for $q =3$ when $p$ is odd.

Suppose we want to augment from $(p, q)$-FGC to $(p, q+1)$-FGC. Let $G = (V, E)$ be the original input graph, and let $F$ be the set of edges we have already included. Note that in $(V, F)$, all cuts fall into one of the following categories:
\begin{itemize}
    \item $\geq p$ safe edges
    \item $\geq p + q + 1$ total edges
    \item Exactly $p+q$ total edges, with at most $p-1$ safe edges
\end{itemize}
Note that the first two satisfy the constraints of $(p, q+1)$-FGC, so
the last category is exactly the set of all violated cuts. Instead of
attempting to cover all violated sets at once, we do it in stages
where in each stage we consider the violated cuts based on the number of
safe edges. We begin by covering all violated sets with no safe edges,
then with one safe edge, and iterate until all violated sets are
covered. This is explained in Algorithm~\ref{spanning_general_algo}
below.

\begin{algorithm}[H]
\caption{Augmenting from $(p,q)$ to $(p,q+1)$ in stages}
\label{spanning_general_algo}
\begin{algorithmic}[1]
    \State $H \gets F$
    \For{$i = 0, \dots, p-1$}
        \State $\calC_i \gets \{S : \emptyset \neq S \subsetneq V, |\delta_{H}(S)| = p+q, |\delta_{H \cap \calS}| = i\}$
        \State $F_i \gets$ approximation algorithm to cover cuts in $\calC_i$
        \State $H \gets H \cup F_i$
    \EndFor\\
    \Return H
\end{algorithmic}
\end{algorithm}

The only unspecified part of the algorithm is to cover cuts in
$\calC_i$ in the $i$'th stage.  If we can prove that $\calC_i$ forms
an uncrossable family then we can obtain a $2$-approximation in each
stage. One could also come up with a more elaborate algorithm to cover
$\calC_i$. First, we prove a generic and useful lemma regarding cuts
in $\calC_i$.

For the remaining lemmas, we let $H_i \subseteq E$ denote the set of edges $H$ at the start of iteration $i$. In other words, $H_i$ is a set of edges such that for all $\emptyset \neq S \subsetneq V$, if $|\delta_{H_i}(S)| = p+q$, then $|\delta_{H_i \cap \calS}(S)| \geq i$.

\begin{lemma}
\label{spanning_case1_lemma}
Fix an iteration $i \in \{0, \dots, p-1\}$. Let $\calC_i$ be as defined in Algorithm~\ref{spanning_general_algo}. Then, if $A, B \in \calC_i$ and
\begin{enumerate}
    \item $|\delta_{H_i}(A \cap B)| = |\delta_{H_i}(A \cup B)| = p+q$, or
    \item $|\delta_{H_i}(A - B)| = |\delta_{H_i}(B - A)| = p+q$
\end{enumerate}
then $A$ and $B$ uncross, i.e. $A \cap B, A \cup B \in \calC_i$ or $A - B, B - A \in \calC_i$.
\end{lemma}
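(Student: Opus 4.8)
The plan is to observe that membership in $\calC_i$ requires two things of a set $S$: that $|\delta_{H_i}(S)| = p+q$ and that $|\delta_{H_i\cap\calS}(S)| = i$. In both cases of the lemma the first condition is handed to us by hypothesis, so the only work is to pin down the number of safe edges on the relevant derived cuts. The invariant maintained by Algorithm~\ref{spanning_general_algo} (recorded just before the lemma) gives the lower bound: any nonempty proper $S$ with $|\delta_{H_i}(S)| = p+q$ satisfies $|\delta_{H_i\cap\calS}(S)| \ge i$, since stages $0,\dots,i-1$ have already killed all violated cuts with fewer than $i$ safe edges. The matching upper bound comes from (sub/posi)modularity of the cut function of the subgraph $(V, H_i\cap\calS)$.

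Concretely, first treat case~1. By the section-level convention we may assume $A\cup B \neq V$, so that $A\cap B$ and $A\cup B$ are both nonempty proper subsets of $V$ (nonemptiness of $A\cap B$ and properness of $A\cup B$ follow from $A,B$ properly intersecting and $A\cup B\ne V$; properness of $A\cap B$ from $A\cap B\subseteq A\subsetneq V$). Submodularity of the cut function of $(V,H_i\cap\calS)$ gives
\[
|\delta_{H_i\cap\calS}(A\cap B)| + |\delta_{H_i\cap\calS}(A\cup B)| \le |\delta_{H_i\cap\calS}(A)| + |\delta_{H_i\cap\calS}(B)| = 2i,
\]
using $A,B\in\calC_i$. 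The invariant applied to $A\cap B$ and $A\cup B$ (both of which have exactly $p+q$ edges by hypothesis) forces each safe-edge count to be at least $i$, hence both are exactly $i$. Together with $|\delta_{H_i}(A\cap B)| = |\delta_{H_i}(A\cup B)| = p+q$ this shows $A\cap B, A\cup B \in \calC_i$, so $A$ and $B$ uncross.

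Case~2 is the same argument with posimodularity of the cut function of $(V,H_i\cap\calS)$ replacing submodularity:
\[
|\delta_{H_i\cap\calS}(A-B)| + |\delta_{H_i\cap\calS}(B-A)| \le |\delta_{H_i\cap\calS}(A)| + |\delta_{H_i\cap\calS}(B)| = 2i,
\]
and here $A-B$, $B-A$ are nonempty (proper intersection) proper subsets of $V$, so the invariant again yields each safe-edge count $\ge i$, hence $=i$, and with the hypothesized total count $p+q$ we get $A-B, B-A\in\calC_i$.

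There is no real obstacle here: this is the ``clean'' subcase, and the only things to be careful about are (a) confirming that each derived set is a nonempty proper subset of $V$ so the Algorithm~\ref{spanning_general_algo} invariant is applicable — which is exactly where the standing assumption $A\cup B\ne V$ is used — and (b) noting that this invariant is precisely what the earlier stages of the algorithm establish. The heavier lifting is pushed to the complementary situation, where one of the derived cuts has strictly more than $p+q$ edges (or strictly fewer than $i$ safe edges), which the subsequent lemmas will need to handle problem-specifically.
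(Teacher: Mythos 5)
Your proof is correct and follows essentially the same route as the paper's: submodularity (resp.\ posimodularity) of the cut function restricted to $H_i\cap\calS$ gives the upper bound of $2i$ on the safe-edge counts of the two derived cuts, and the invariant maintained by the earlier stages of Algorithm~\ref{spanning_general_algo} gives the matching lower bound of $i$ on each. Your additional care in checking that the derived sets are nonempty proper subsets (so the invariant applies) is a detail the paper leaves implicit but does not change the argument.
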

\begin{proof}
By submodularity and posimodularity of the cut function, $|\delta_{H_i}(A \cap B)| + |\delta_{H_i}(A \cup B)|$ and $|\delta_{H_i}(A - B)| + |\delta_{H_i}(B - A)|$ are both at most $2(p+q)$.
First, suppose $|\delta_{H_i}(A \cap B)| = |\delta_{H_i}(A \cup B)| = p+q$. By submodularity of the cut function applied to the safe edges $H_i \cap \cal S$,
\[
2i = |\delta_{H_i \cap \calS}(A)| + |\delta_{H_i \cap \calS}(B)| \geq |\delta_{H_i \cap \calS}(A \cup B)| + |\delta_{H_i \cap \calS}(A \cap B)|.
\]
Since we are in iteration $i$, neither $\delta_{H_i}(A \cup B)$ nor $\delta_{H_i}(A \cap B)$ can have less than $i$ safe edges. Therefore, $|\delta_{H_i \cap \calS}(A \cup B)| = |\delta_{F_i \cap \calS}(A \cap B)| = i$. Then, $A \cap B, A \cup B \in \calC_i$, so $A$ and $B$ uncross.
In the other case, where $|\delta_{H_i}(A - B)| = |\delta_{H_i}(B - A)| = p+q$, we can use posimodularity of the cut function to once again show that $|\delta_{H_i \cap \calS}(A - B)| = |\delta_{H_i \cap \calS}(B - A)| = i$, so they are both in $\calC_i$.
\end{proof}

Note that the preceding lemma holds for the high-level approach. Now we focus on
cases where we can prove that $\calC_i$ is uncrossable.

\begin{lemma}
\label{spanning-k3-lemma}
Fix an iteration $i \in \{0, \dots, p-1\}$. Let $\calC_i$ be as defined in Algorithm~\ref{spanning_general_algo}. Then, for $q \leq 2$, $\calC_i$ is uncrossable.
\end{lemma}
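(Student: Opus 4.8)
The plan is to fix two properly intersecting sets $A, B \in \calC_i$ and, using Lemma~\ref{spanning_case1_lemma}, reduce to the case where neither $\{A \cap B, A \cup B\}$ nor $\{A - B, B - A\}$ can be handled directly by submodularity/posimodularity. Concretely, if $|\delta_{H_i}(A \cap B)| = |\delta_{H_i}(A \cup B)| = p+q$ we are done by case~1 of that lemma, and similarly for case~2. So I would assume that one of $\delta_{H_i}(A \cap B), \delta_{H_i}(A \cup B)$ has at least $p+1$ total edges (forcing, by submodularity and the fact that both are $\ge p+q$, that the other has exactly $p+q$), and likewise one of $\delta_{H_i}(A-B), \delta_{H_i}(B-A)$ has at least $p+1$ edges with the other having exactly $p+q$. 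The goal is then to derive a contradiction from the constraint $q \le 2$, i.e. from $|\delta_{H_i}(A)| = |\delta_{H_i}(B)| = p+q \le p+2$ with at most $i \le p-1$ safe edges each.

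The key quantitative step is to track the safe edges. Since $A, B \in \calC_i$, each of $\delta_{H_i}(A)$ and $\delta_{H_i}(B)$ has exactly $i$ safe edges (at most $p-1$). By submodularity applied to $H_i \cap \calS$, $|\delta_{H_i \cap \calS}(A \cap B)| + |\delta_{H_i \cap \calS}(A \cup B)| \le 2i$, and by posimodularity $|\delta_{H_i \cap \calS}(A-B)| + |\delta_{H_i \cap \calS}(B-A)| \le 2i$; moreover no set that equals some $\delta_{H_i}$-cut with $p+q$ edges can have fewer than $i$ safe edges (that is the invariant maintained by the algorithm). So in the ``hard'' configuration — say $|\delta_{H_i}(A \cup B)| \ge p+1$ strictly and $|\delta_{H_i}(A \cap B)| = p+q$, and $|\delta_{H_i}(B-A)| \ge p+1$ strictly and $|\delta_{H_i}(A - B)| = p+q$ — I would argue that the $p+q$-edge cuts $A \cap B$ and $A - B$ each have exactly $i$ safe edges, while the larger cuts $A \cup B$ and $B - A$ absorb the remaining edge ``slack''. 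The standard trick is then to count edges by location: an edge in $\delta_{H_i}(A) + \delta_{H_i}(B)$ (with multiplicity) lands in exactly one of the four regions $A \cap B$, $A \cup B$, $A - B$, $B - A$, so $|\delta_{H_i}(A \cap B)| + |\delta_{H_i}(A \cup B)| + |\delta_{H_i}(A-B)| + |\delta_{H_i}(B-A)| + 2 e_{AB} = 2(p+q) \cdot (\text{something})$ — more precisely, one uses the two inequalities together: their sum is $\le 4(p+q)$, but the four left-hand terms are each $\ge p+q$ except we have forced two of them to be $\ge p+q+1$, giving $\ge 4(p+q) + 2$, a contradiction, UNLESS the two ``large'' cuts are on the same side of the partition (both among $\{A\cap B, A\cup B\}$ or both among $\{A-B, B-A\}$), which is exactly the configuration Lemma~\ref{spanning_case1_lemma} already rules out. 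Wait — I need to be careful: submodularity gives $\le 2(p+q)$ for the $\{A\cap B, A\cup B\}$ pair separately from posimodularity's $\le 2(p+q)$ for $\{A-B,B-A\}$, so I cannot simply add all four. Instead the contradiction must come from the \emph{safe}-edge count: having $|\delta_{H_i \cap \calS}(A \cap B)| = i$ and $|\delta_{H_i \cap \calS}(A - B)| = i$ with the two sets of safe edges both contained in $\delta_{H_i \cap \calS}(A) \cup \delta_{H_i \cap \calS}(B)$ (of total size $\le 2i$), I would show that $2i$ safe edges must be ``used up'' in a way that forces $\delta_{H_i \cap \calS}(B)$ (or $\delta_{H_i \cap \calS}(A)$) to contain more than $i$ safe edges, contradicting $B \in \calC_i$. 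This is precisely the argument structure from the $(2,q)$ lemma and from Lemma~\ref{st_22_path_uncross}, now carried out with $p,q$ general but $q \le 2$.

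The main obstacle is verifying that the case analysis actually closes when $q \le 2$ and \emph{fails} when $q \ge 3$ (or $q = 3$, $p$ odd) — the bound $q \le 2$ must enter essentially, presumably because with $q+2 \le 4$ the cut $\delta_{H_i}(A)$ is small enough that the ``at most $i$ safe, exactly $p+q$ total'' structure leaves very few unsafe edges to distribute, so a large $A \cup B$ cut (with $\ge p+q+1$ edges) cannot be balanced against the small $A \cap B$ cut without violating either the safe-edge submodularity bound or the total-edge submodularity bound simultaneously. I would lay out the (at most four) sub-cases according to which of $\{A \cap B, A \cup B\}$ is the large one and which of $\{A-B, B-A\}$ is the large one, handle the two ``aligned'' sub-cases via Lemma~\ref{spanning_case1_lemma} (they can't occur), and in each of the remaining ``crossed'' sub-cases derive the contradiction that one of $A$ or $B$ has $> i$ safe boundary edges. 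The routine calculation is the edge bookkeeping; the insight is choosing to case on the \emph{safe} edges and exploiting that a $(p+q)$-edge cut is forced to have exactly $i$ (not more) safe edges at stage $i$.
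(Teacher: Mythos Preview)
There is a genuine gap. Your argument rests on the premise that each of $A\cap B$, $A\cup B$, $A-B$, $B-A$ has at least $p+q$ edges in $H_i$ (you invoke ``the fact that both are $\ge p+q$''). That is false: $H_i \supseteq F$ is feasible for $(p,q)$-FGC, which only guarantees that each cut has \emph{either} $\ge p$ safe edges \emph{or} $\ge p+q$ total edges. A cut with many safe edges may have far fewer than $p+q$ edges in total. Once this premise is dropped, the configuration you analyze (one member of each pair with $\ge p+q+1$ edges, the other exactly $p+q$) is not what arises, and the safe-edge bookkeeping you sketch---forcing ``exactly $i$'' safe edges on the $(p+q)$-edge cuts and then overcounting in $\delta(B)$---does not produce a contradiction. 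Indeed, your own premises are internally inconsistent: if both $A\cap B$ and $A\cup B$ had $\ge p+q$ edges and were not both equal to $p+q$, submodularity (sum $\le 2(p+q)$) would already be violated.

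The paper's argument runs in the opposite direction. If $A,B$ do not uncross, then submodularity together with Lemma~\ref{spanning_case1_lemma} forces one of $\{A\cap B, A\cup B\}$ to have at most $p+q-1$ edges, and likewise (via posimodularity) one of $\{A-B, B-A\}$; say $A\cap B$ and $A-B$. Having fewer than $p+q$ total edges, each of these two cuts must carry at least $p$ safe edges by $(p,q)$-feasibility, hence at most $q-1$ unsafe edges. Now the containment $\delta_{H_i}(A)\subseteq \delta_{H_i}(A\cap B)\cup\delta_{H_i}(A-B)$ bounds the number of unsafe edges in $\delta_{H_i}(A)$ by $2(q-1)$. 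The hypothesis $q\le 2$ enters exactly here: $2(q-1) < q+1$, so $\delta_{H_i}(A)$ has more than $p-1$ safe edges, contradicting $A\in\calC_i$. In your write-up the role of $q\le 2$ remains ``presumably''; this inequality is the missing concrete step.
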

\begin{proof}
  Suppose $A, B \subseteq V$ such that $\delta_{H_i}(A)$ and
  $\delta_{H_i}(B)$ both have exactly $i$ safe and $p+q-i$ unsafe
  edges. Suppose for the sake of contradiction that they do not
  uncross. By Lemma~\ref{spanning_case1_lemma}, one of
  $\delta_{H_i}(A \cap B)$ and $\delta_{H_i}(A \cup B)$ must have at
  most $p+q-1$ edges, and the same holds for $\delta_{H_i}(A - B)$ and
  $\delta_{H_i}(B - A)$. Without loss of generality, suppose
  $\delta_{H_i}(A \cap B)$ and $\delta_{H_i}(A - B)$ each have at
  most $p+q-1$ edges. By the assumptions on $H_i$, they must both have
  at least $p$ safe edges, hence they each have at most $q-1$ unsafe
  edges. Note that
  $\delta_{H_i}(A) \subseteq \delta_{H_i}(A - B) \cup \delta_{H_i}(A
  \cap B)$, hence $\delta_{H_i}(A)$ can have at most $2(q-1)$ unsafe
  edges. When $q \leq 2$, $2(q-1) < q+1$, which implies that
  $\delta_{H_i}(A)$ has strictly more than $p-1$ safe edges, a
  contradiction. Notice that
  $\delta_{H_i}(A) \subseteq \delta_{H_i}(B - A) \cup \delta_{H_i}(A
  \cup B)$,
  $\delta_{H_i}(B) \subseteq \delta_{H_i}(A - B) \cup \delta_{H_i}(A
  \cup B)$, and
  $\delta_{H_i}(B) \subseteq \delta_{H_i}(B - A) \cup \delta_{H_i}(A
  \cap B)$; therefore the same argument follows regardless of which
  pair of sets each have strictly less than $p+q$ edges.
\end{proof}

\begin{corollary}
  For any $p \ge 2$ there is a $(2p+4)$-approximation for $(p,2)$-FGC
  and a $(4p+4)$-approximation for $(p,3)$-FGC.
\end{corollary}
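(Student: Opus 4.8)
The plan is to combine the augmentation framework set up in Section~\ref{sec:prelim} with Lemma~\ref{spanning-k3-lemma}, and for the $q=3$ case to bootstrap from the $q=2$ case via the parity trick used implicitly in the even-$p$ portion of Theorem~\ref{thm:introfgc}. Concretely, for $(p,2)$-FGC I would start with a $2$-approximate solution $F_0$ to $(p,0)$-FGC, which is exactly $p$-ECSS and admits a $2$-approximation. Then I run Algorithm~\ref{spanning_general_algo} twice in sequence: once to augment from $(p,0)$ to $(p,1)$ and once to augment from $(p,1)$ to $(p,2)$. In each invocation the algorithm proceeds through $p$ stages $i=0,\dots,p-1$, and by Lemma~\ref{spanning-k3-lemma} (with $q\le 2$) each $\calC_i$ is uncrossable, so each stage can be solved with the $2$-approximation of Theorem~\ref{thm:uncrossable}. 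The violated cuts can be enumerated in polynomial time by the lemma of Boyd et al.\ quoted in Section~\ref{sec:prelim}, so the oracle hypothesis is met. The key accounting point is that within a single augmentation invocation the \emph{total} cost paid across all $p$ stages is at most $2\,\opt$, not $2p\,\opt$: the standard observation is that $\opt$ itself (restricted to $G - H_i$) is a feasible cover for the residual family $\calC_i$ at every stage, and since the families $\calC_0,\dots,\calC_{p-1}$ are ``disjointly chargeable'' — once a cut has $i$ safe edges it is permanently removed from consideration — a single copy of $\opt$'s edge set pays for all stages simultaneously, giving $2\,\opt$ per augmentation. (This is exactly the argument implicit in the $(2,q)$-FGC proof and in Boyd et al.'s $(p,1)$-FGC bound.)

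Granting that, the cost tally for $(p,2)$-FGC is: $2\,\opt$ for the $p$-ECSS start, $2\,\opt$ for augmenting $(p,0)\to(p,1)$, and $2\,\opt$ for augmenting $(p,1)\to(p,2)$, for a total of $6\,\opt$ — but this is weaker than the claimed $(2p+4)$. The right bookkeeping is to bound the starting solution by $2p\cdot\opt$ in the worst case only when we cannot do better; however, since a $(p,2)$-FGC optimum is trivially feasible for $(p,0)$-FGC and the $p$-ECSS $2$-approximation is with respect to the \emph{cheapest} $p$-ECSS (which costs at most $\opt$), the start costs $\le 2\,\opt$. I would therefore expect the honest bound to read $6\,\opt$; the statement $(2p+4)\opt$ is the bound one gets if instead each of the $p$ stages is charged $2\,\opt$ separately (i.e.\ $2p\,\opt$ for the two-stage augmentation plus $4\,\opt$ — wait, that is not it either). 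The cleanest route matching the stated constant is: for $(p,2)$-FGC, take $F_0$ a $2$-approximate $p$-ECSS (cost $\le 2\,\opt$), then run \emph{one} call of Algorithm~\ref{spanning_general_algo} from $(p,1)$; but we must first get to $(p,1)$, and Boyd et al.\ give a $4$-approximation for $(p,1)$-FGC — so start instead from a $4$-approximate $(p,1)$-FGC solution (cost $\le 4\,\opt$, since a $(p,2)$ optimum is $(p,1)$-feasible), then augment once from $(p,1)$ to $(p,2)$ paying $2p\cdot\opt$ across the $p$ stages. Total $(4+2p)\opt = (2p+4)\opt$, as claimed. For $(p,3)$-FGC, augment once more from $(p,2)$ to $(p,3)$; by Lemma~\ref{spanning-k3-lemma} this needs $q\le 2$ which fails, so instead invoke the genuinely new case: Lemma~\ref{spanning-k3-lemma} as stated only covers $q\le2$, so the $(p,3)$ bound must use a separate structural lemma (the one for $q=3$). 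I would augment $(p,2)\to(p,3)$ paying $2p\,\opt$, giving $(2p+4)+2p = (4p+4)\opt$.

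The steps in order, then: (1) obtain a $4$-approximate $(p,1)$-FGC solution via Boyd et al.; (2) for $(p,2)$: run Algorithm~\ref{spanning_general_algo} to augment $(p,1)\to(p,2)$, using Lemma~\ref{spanning-k3-lemma} to certify each $\calC_i$ uncrossable and Theorem~\ref{thm:uncrossable} to cover it at cost $\le 2\,\opt$ per stage, for $2p\,\opt$ total, summing to $(2p+4)\opt$; (3) for $(p,3)$: additionally run Algorithm~\ref{spanning_general_algo} to augment $(p,2)\to(p,3)$, invoking the $q=3$ uncrossability lemma for each $\calC_i$, adding $2p\,\opt$, for $(4p+4)\opt$ total. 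Correctness of the output is immediate from the augmentation lemma chain in Section~\ref{sec:prelim}: covering the violated family at each step turns a $(p,j)$-feasible solution into a $(p,j+1)$-feasible one. Feasibility of $\opt$ as a charging witness at each stage follows because a $(p,q)$-FGC optimum is $(p,j)$-feasible for all $j\le q$, hence covers every violated cut of the intermediate solution.

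The main obstacle I anticipate is the per-stage charging: arguing that $\opt$ (or rather its LP value) is a valid fractional cover for $\calC_i$ \emph{after} stages $0,\dots,i-1$ have already added edges. This requires Lemma~\ref{lemma:augmentation_lp}-style reasoning applied not once but iteratively, tracking that a cut surviving into stage $i$ still has $\le i \le p-1$ safe edges and exactly $p+q$ total edges in the current $H_i$, so it is still ``short'' $p-i \ge 1$ safe edges and any $(p,q+1)$-feasible (in particular $(p,q)$-optimal extended) edge set must put an edge across it. Making this precise — and making sure the enumeration oracle of Boyd et al.\ applies to the intermediate graphs $H_i$, not just to $F$ — is the delicate part; everything else is the routine submodular/posimodular casework already rehearsed in Lemma~\ref{spanning-k3-lemma}.
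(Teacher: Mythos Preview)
Your proposal, after the exploratory detour, lands on exactly the paper's argument: start from Boyd et al.'s $4$-approximate $(p,1)$-FGC solution, run Algorithm~\ref{spanning_general_algo} once to augment $(p,1)\to(p,2)$ at cost $2\,\opt$ per stage over $p$ stages for $2p\,\opt$, giving $(2p+4)\,\opt$; then for $(p,3)$ augment once more, adding another $2p\,\opt$, for $(4p+4)\,\opt$.

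There is one indexing slip that would create a real gap. In Algorithm~\ref{spanning_general_algo} and Lemma~\ref{spanning-k3-lemma}, the parameter $q$ names the \emph{starting} level: one augments from $(p,q)$ to $(p,q+1)$. Hence the step $(p,2)\to(p,3)$ is the $q=2$ case, and Lemma~\ref{spanning-k3-lemma} applies directly, for all $p$. You instead propose to invoke ``the one for $q=3$,'' but that lemma (the even-$p$ lemma following Lemma~\ref{spanning-k3-lemma}) governs the $(p,3)\to(p,4)$ step and is only proved for even $p$; relying on it here would leave the Corollary unproved for odd $p\ge 3$. The paper simply uses Lemma~\ref{spanning-k3-lemma} for both augmentation rounds.

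The ``obstacle'' you anticipate about charging $\opt$ against each $\calC_i$ is routine and the paper dispatches it in one line: if $S\in\calC_i$ then $|\delta_{H_i}(S)|=p+q$ and $|\delta_{H_i\cap\calS}(S)|=i<p$, while any $(p,q+1)$-feasible set (in particular $F^*$) has either $\ge p$ safe or $\ge p+q+1$ total edges across $S$; either way $F^*\setminus H_i$ contributes at least one edge across $S$, so $F^*\setminus H_i$ covers $\calC_i$ and each stage costs at most $2\,\opt$.
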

\begin{proof}
  Via \cite{BoydCHI22} we have a $4$-approximation for $(p,1)$-FGC. We
  can start with a feasible solution $F$ for $(p,1)$-FGC and use
  Algorithm~\ref{spanning_general_algo} to augment from $(p,1)$ to
  $(p,2)$; each stage can be approximated to within a factor of $2$
  since $\calC_i$ is an uncrossable family. Since there are $p$ stages
  and the cost of each stage can be upper bounded by the cost of
  $F^*$, an optimum solution to the $(p,2)$ problem, the total cost of
  augmentation is $2p \opt$. This leads to the desired
  $(2p+4)$-approximation for $(p,2)$-FGC. For $(p,3)$-FGC we can augment
  from $(p,2)$ to $(p,3)$ paying an additional cost of $2p \opt$. This
  leads to the claimed $(4p+4)$-approximation. Following the discussion
  in Section~\ref{sec:prelim}, covering the uncrossable families that
  arise in Algorithm~\ref{spanning_general_algo} can be done in
  polynomial time.
\end{proof}

Can we extend the preceding lemma for $q = 3$? It turns out that it
does work when $p$ is even but fails for odd $p \ge 3$.

\begin{lemma}
  Fix an iteration $i \in \{0, \dots, p-1\}$. Let $\calC_i$ be as
  defined in Algorithm~\ref{spanning_general_algo}. Then, for $q = 3$
  and any \emph{even} integer $p$, $\calC_i$ is uncrossable.
\end{lemma}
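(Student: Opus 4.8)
The plan is to follow the same template as Lemma~\ref{spanning-k3-lemma}, taking $A, B \in \calC_i$ that properly intersect and assuming for contradiction that they do not uncross. By Lemma~\ref{spanning_case1_lemma}, at most one of $\delta_{H_i}(A \cap B)$, $\delta_{H_i}(A \cup B)$ has exactly $p+q = p+3$ edges, and similarly for $\delta_{H_i}(A - B)$, $\delta_{H_i}(B - A)$; so without loss of generality (after possibly swapping roles, using the four containment relations $\delta_{H_i}(A) \subseteq \delta_{H_i}(A-B) \cup \delta_{H_i}(A \cap B)$ etc.\ listed at the end of that lemma's proof) both $\delta_{H_i}(A \cap B)$ and $\delta_{H_i}(A - B)$ have at most $p+2$ edges. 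Since $H_i$ satisfies the $(p,q)$-FGC invariant, each of these two cuts then has at least $p$ safe edges, hence at most $q - 1 = 2$ unsafe edges. The naive bound from $\delta_{H_i}(A) \subseteq \delta_{H_i}(A - B) \cup \delta_{H_i}(A \cap B)$ gives at most $2(q-1) = 4 = q+1$ unsafe edges in $\delta_{H_i}(A)$, which is exactly one too many to reach a contradiction — this is the gap that the $q\le 2$ argument exploited and that fails here.

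To close that gap I would count safe edges more carefully using submodularity restricted to $H_i \cap \calS$. We have $|\delta_{H_i \cap \calS}(A)| = |\delta_{H_i \cap \calS}(B)| = i$, so $|\delta_{H_i \cap \calS}(A-B)| + |\delta_{H_i \cap \calS}(A \cap B)| \le 2i$ by posimodularity, yet each of the two small cuts has at least $p$ safe edges; hence $2p \le 2i$, forcing $i = p$ — impossible since $i \le p-1$. Wait: this already seems to give the contradiction without using parity, so the real subtlety must be in which pair of the four small cuts can simultaneously be ``small,'' and the parity must enter because in the bad configuration the two small cuts are \emph{not} a submodular pair $(A-B, A\cap B)$ but rather a ``mixed'' pair like $(A \cap B, B - A)$, for which no single sub/posimodular inequality bundles them. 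For such a mixed pair I would instead argue via the edge sets directly: split each of $\delta_{H_i}(A)$ and $\delta_{H_i}(B)$ into the four groups according to the other endpoint's membership in the Venn diagram of $A,B$, write $p+3 = |\delta_{H_i}(A)|$ as a sum of four nonnegative integers (edges to $A\cap B^c$-region etc.), do the same for $B$, and track safe/unsafe counts in each of the (at most) $\binom{4}{2}$-ish region-to-region edge classes. The assumption that both small cuts in the mixed pair have $\le 2$ unsafe edges, combined with $\delta_{H_i}(A)$ and $\delta_{H_i}(B)$ each having $\le i \le p-1$ safe edges, should over-constrain the unsafe-edge budget of $p+3$ in each of $A, B$; and the leftover slack — a single unit — gets killed by the observation that $p$ even means $p+q = p+3$ is odd, so one cannot split the crossing edges of $A \cap B$ (say) into $\lfloor (p+2)/2\rfloor$ safe and an equal number of unsafe, i.e.\ parity rules out the exactly-balanced split that the $q=3$ arithmetic would otherwise permit.

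The main obstacle I expect is precisely pinning down \emph{why} the mixed pair is the only surviving case and then executing the region-by-region edge accounting cleanly: one must be careful that in asserting ``without loss of generality both small cuts are $(A\cap B)$ and $(A-B)$'' one has genuinely covered, via the symmetry of the four containments, every way the non-uncrossing can happen, and that the remaining genuinely-mixed configuration is handled by a separate (parity-sensitive) argument rather than being silently folded into the first. A secondary obstacle is bookkeeping: with $q=3$ there are several near-tight inequalities (cuts of size $p+2$ with exactly $p$ safe and $2$ unsafe, or $p+1$ safe with $1$ unsafe, etc.), and I would want to enumerate these sub-cases by the pair $(|\delta_{H_i}(A\cap B)|, |\delta_{H_i}(A-B)|) \in \{p, p+1, p+2\}^2$ and show each either contradicts the $i \le p-1$ bound directly or violates parity, rather than trying to handle them uniformly. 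Once the parity contradiction is extracted in the one remaining configuration, feasibility of the algorithm and the resulting $(6p+4)$-approximation for even $p$ (start from a $4$-approximate $(p,1)$-FGC solution, then augment $(p,1)\!\to\!(p,2)\!\to\!(p,3)\!\to\!(p,4)$, each stage costing $2p\cdot\opt$) follow exactly as in the preceding corollary.
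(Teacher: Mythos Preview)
Your proposal contains a genuine gap, and the diagnosis you give after noticing it points you in the wrong direction.

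The inequality you invoke, $|\delta_{H_i \cap \calS}(A-B)| + |\delta_{H_i \cap \calS}(A \cap B)| \le 2i$, is not a consequence of posimodularity (nor of submodularity). Submodularity bounds $|\delta(A\cap B)| + |\delta(A\cup B)|$; posimodularity bounds $|\delta(A-B)| + |\delta(B-A)|$. The pair $(A\cap B, A-B)$ is neither of these, so there is no such bound, and that is precisely why your argument ``proves too much.'' Your reaction --- that the WLOG reduction to the pair $(A\cap B, A-B)$ must have been illegitimate and that a ``mixed'' pair is the real case --- is therefore misdirected: the WLOG is perfectly valid, and the paper's proof works entirely with the pair $(A\cap B, A-B)$.

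What you are missing is a direct shared-edge count. Every safe edge in $\delta_{H_i\cap\calS}(A\cap B)\cup\delta_{H_i\cap\calS}(A-B)$ lies either in $\delta_{H_i\cap\calS}(A)$ (of which there are exactly $i$) or in the intersection $\delta_{H_i\cap\calS}(A\cap B)\cap\delta_{H_i\cap\calS}(A-B)$; call the latter's size $\ell$. Since each of the two small cuts has at least $p$ safe edges, this yields $2p - 2\ell \le i$, hence $\ell \ge \lceil (2p-i)/2\rceil$. Edges in $\delta_{H_i}(A\cap B)\cap\delta_{H_i}(A-B)$ go between $A\cap B$ and $A-B$ and so do not cross $A$; therefore
\[
p+3 = |\delta_{H_i}(A)| \le |\delta_{H_i}(A\cap B)| + |\delta_{H_i}(A-B)| - 2\ell \le 2(p+2) - 2\left\lceil \tfrac{2p-i}{2}\right\rceil.
\]
This is where parity actually enters: for $i \le p-2$ the right side is at most $4+i \le p+2$, a contradiction; for $i = p-1$ it equals $2(p+2) - 2\lceil (p+1)/2\rceil$, which is $p+2$ when $p$ is even (contradiction) but only $p+3$ when $p$ is odd (no contradiction). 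Your proposed parity mechanism --- that $p+3$ being odd forbids an equal safe/unsafe split --- is not the operative one.
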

\begin{proof}
  Suppose
  $A, B \subseteq V$ such that $\delta_{H_i}(A)$ and $\delta_{H_i}(B)$
  both have exactly $i$ safe and $p+3-i$ unsafe edges. Suppose for the
  sake of contradiction that they do not uncross. By
  Lemma~\ref{spanning_case1_lemma}, one of $\delta_{H_i}(A \cap B)$
  and $\delta_{H_i}(A \cup B)$ must have at most $p+2$ edges, and the
  same holds for $\delta_{H_i}(A - B)$ and $\delta_{H_i}(B -
  A)$. Without loss of generality, suppose $|\delta_{H_i}(A \cap B)|$
  and $|\delta_{H_i}(A - B)|$ each have at most $p+2$ edges. By the
  assumptions on $H_i$, they must both have at least $p$ safe
  edges. By submodularity of the cut function,
\[|\delta_{H_i \cap \calS}(A)| + |\delta_{H_i \cap \calS}(B)| = 2i \geq |\delta_{H_i \cap \calS}(A \cap B) + |\delta_{H_i \cap \calS}(A \cup B) \geq p + |\delta_{H_i \cap \calS}(A \cup B).\]
Similarly, by posimodularity of the cut function,
\[|\delta_{H_i \cap \calS}(A)| + |\delta_{H_i \cap \calS}(B)| = 2i \geq |\delta_{H_i \cap \calS}(A - B)| + |\delta_{H_i \cap \calS}(B - A)| \geq p + |\delta_{H_i \cap \calS}(B - A)|.\]
Therefore, $\delta_{H_i}(A \cup B)$ and $\delta_{H_i}(B - A)$ can each have at most $2i-p < i$ safe edges, so by the requirement on $H_i$, they must each have at least $p+4$ total edges. Once again applying submodularity of the cut function,
\[|\delta_{H_i}(A)| + |\delta_{H_i}(B)| = 2p + 6 \geq |\delta_{H_i}(A \cap B)| + |\delta_{H_i}(A \cup B)| \geq p + 4 + |\delta_{H_i}(A \cap B)|.\]
Similarly, by posimodularity of the cut function,
\[|\delta_{H_i}(A)| + |\delta_{H_i}(B)| = 2p+6 \geq |\delta_{H_i}(A -
  B)| + |\delta_{H_i}(B - A)| \geq p + 4+ |\delta_{H_i \cap \calS}(A -
  B)|.\]

Thus $\delta_{H_i}(A \cap B)$ and $\delta_{H_i}(A - B)$ can each have a total of at most $p+2$ edges.

Notice that each safe edge on $\delta_{H_i}(A \cap B)$ or
$\delta_{H_i}(A - B)$ must be in either $\delta_{H_i}(A)$ or
$\delta_{H_i}(A \cap B) \cap \delta_{H_i}(A - B)$. Let
$\ell = |\delta_{H_i \cap \calS}(A \cap B) \cap \delta_{H_i \cap
  \calS}(A - B)|$, i.e. the number of safe edges crossing both
$A \cap B$ and $A - B$. Since $|\delta_{H_i \cap \calS}(A)| = i$,
\[i = |\delta_{H_i \cap \calS}(A)| \geq |\delta_{H_i \cap \calS}(A \cap B)| + |\delta_{H_i \cap \calS}(A - B)| - 2\ell \geq 2p - 2\ell,\]
implying that $\ell \geq \frac{2p-i}2$. Note that $\delta_{H_i}(A) \subseteq \delta_{H_i}(A \cap B) \cup \delta_{H_i}(A - B)$. Furthermore, if an edge crosses both $A \cap B$ and $A - B$, then it must have one endpoint in each (since they are disjoint sets), and therefore does not cross $A$. Therefore,
\begin{align*}
    |\delta_{H_i}(A)| \leq |\delta_{H_i}(A \cap B)| + |\delta_{H_i}(A - B)| - 2\ell \leq 2(p+2) - 2 \cdot \left \lceil \frac{2p-i}{2} \right \rceil \leq 4+i
\end{align*}
When $i < p-1$, this is at most $p+2$. When $i = p-1$, since $p$ is
even, $2 \cdot \left \lceil \frac{2p-i}{2} \right \rceil = 2 \cdot
\left \lceil \frac{p+1}{2} \right \rceil = p+2$, and $2(p+2) - p+2 =
p+2$. In either case, we get $|\delta_{H_i}(A)| = p+2$, a
contradiction to the assumption on $A$.
\end{proof}

The preceding lemma leads to an $(6p+4)$-approximation for $(p, 4)$-FGC
when $p$ is even by augmenting from a feasible solution to
$(p,3)$, since we pay an additional cost of $2p\opt$. The reasoning in the preceding lemma also shows that
$\calC_i$ is uncrossable when $p$ is odd as well for all $i < p -
1$. Therefore, the bottleneck for odd $p$ is in covering
$\calC_{p-1}$.  We show, via an example that the family is indeed not
uncrossable (see Section~\ref{sec:counterexamples}).  However, it may
be possible to show that $\calC_{p-1}$ separates into a constant
number of uncrossable families leading to an $O(p)$-approximation for
$(p,4)$-FGC for all $p$. The first non-trivial case is when $p=3$.

\subsection{Examples}
\label{sec:counterexamples}
We describe several examples that demonstrate that violated cuts
that arise in augmentation do not form an uncrossable family.

\mypara{Augmenting from $(3,1)$-FGC to $(3,2)$-FGC:}
Consider the simple example in Figure~\ref{fgc_32_counterex}.
Let $A = \{x_1, x_2\}, B = \{x_2, x_3\}$. Notice that $A$ and $B$ are
violated, since they each have two safe and two unsafe edges, but
$A \cup B = \{x_1, x_2, x_3\}$ and $B - A = \{x_3\}$ are not, since
they each have three safe edges.

\begin{figure}
  \begin{center}
    \includegraphics[width = 0.3\linewidth]{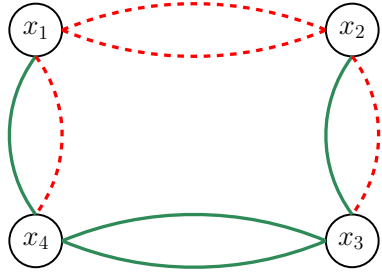}
  \end{center}
  \caption{Example graph where violated cuts in augmentation problem from $(3,1)$-FGC to $(3,2)$-FGC are not uncrossable. Red dashed edges are unsafe, while green solid edges are safe.}
  \label{fgc_32_counterex}
\end{figure}

\mypara{Augmenting from $(p,3)$-FGC to $(p,4)$-FGC when $p$ is odd:}
We show that the family of cuts with exactly $p-1$ safe edges
when augmenting from $(p, 3)$-FGC to $(p, 4)$-FGC is not uncrossable
when $p$ is odd. See Figure~\ref{fgc_k4odd_counterex}. Let
$A = \{x_1, x_2\}$, $B = \{x_2, x_3\}$. Notice that both $A$ and $B$
have exactly $p-1$ safe edges and $4$ unsafe edges. However,
$A \cup B$ and $B - A$ each have $p$ safe edges, and $A \cap B$ and
$A - B$ each have $p+4$ total edges. Note that unlike in the above
case, none of the cuts $A \cup B, A \cap B, A - B, B - A$ are
violated.

\begin{figure}
  \begin{center}
  \includegraphics[width = 0.4\linewidth]{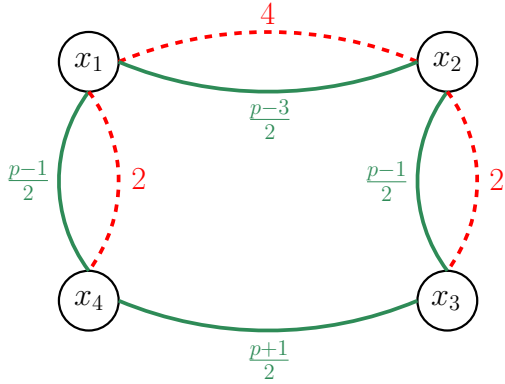}
  \end{center}
  \caption{Example graph where violated cuts in augmentation problem from $(p, 3)$-FGC to $(p, 4)$-FGC are not uncrossable when $p$ is odd. The numbers on edges denote the number of parallel edges. Red dashed edges are unsafe, while green solid edges are safe.}
  \label{fgc_k4odd_counterex}
\end{figure}

\mypara{Augmenting from $(4,4)$-FGC to $(4,5)$-FGC:} Finally, we
demonstrate an example where two violated cuts are not uncrossable
when augmenting from $(4, 4)$-FGC to $(4, 5)$-FGC, in
Figure~\ref{fgc_k5_counterex}. As above, let $A = \{x_1, x_2\}$, let
$B = \{x_2, x_3\}$. Note that $A$ and $B$ each only have 3 safe edges
crossing them, so the argument that $\calC_i$ is uncrossable for
$i < p-1$ fails. $A \cup B$ and $B - A$ each have at least $4$ safe
edges, and $A \cap B$ and $A - B$ each have $9$ total edges, so none
are violated.

\begin{figure}
  \begin{center}
    \includegraphics[width = 0.4\linewidth]{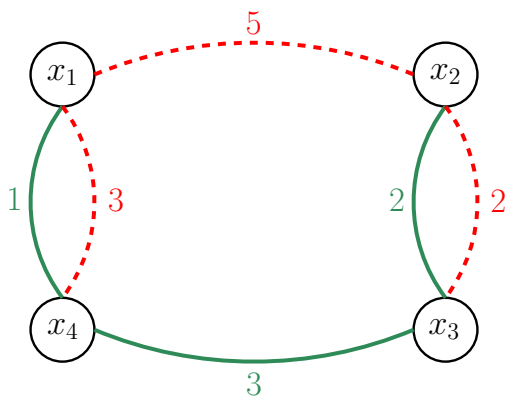}
  \end{center}
  \caption{Example graph where violated cuts in augmentation problem from $(4,4)$-FGC to $(4,5)$-FGC are not uncrossable. The numbers on edges denote the number of parallel edges. Red dashed edges are unsafe, while green solid edges are safe. }
  \label{fgc_k5_counterex}
\end{figure}

\bibliographystyle{plainurl}
\bibliography{fgc_paper}

\section{Appendix}
\label{sec:appendix_prelim_proofs}

\subsection{Proofs Omitted from Section~\ref{sec:prelim}}

\begin{proof}[Proof of Lemma~\ref{lemma:lp_polytime_fgc}]
  We show a polynomial time separation oracle for the given
  LP. Suppose we are given some vector $x \in [0,1]^{|E|}$. We first
  check if the capacitated min-cut constraints are satisfied in
  polynomial-time. This can be done in polynomial time by giving every
  safe edge a weight of $p+q$ and every unsafe edge a weight of $p$,
  and checking that the min-cut value is at least $p(p+q)$. If it is
  not, we can find the minimum cut and output the corresponding
  violated constraint. Suppose all capacitated constraints are
  satisfied. Now consider the first set of constraints.  If one of
  them is not satisfied there must must be some $S \subset V$ and some
  $B \subseteq \calU$, $|B| \le q$, such that
  $\sum_{e \in \delta(S) - B} x_e < p$. In particular,
  $\sum_{e \in \calS \cap \delta(S) - B} x_e < p$ and
  $\sum_{e \in \calU \cap \delta(S) - B} x_e < p$. We claim that the
  total weight (according to weights $(p+q)$ for safe edges and $p$
  for unsafe edges) going across $\delta(S)$ is at most $2p(p+q)$: at
  most $(p+q)p$ from $\calS \cap (\delta(S) - B)$, at most $p^2$ from
  $\calU \cap (\delta(S) - B)$, and at most $pq$ from $B$. Recall that
  the min-cut of the graph according the weights has already been
  verified to be at least $p(p+q)$. Hence, any violated cut from the
  first set of constraints corresponds to $2$-approximate min-cut.  It
  is known via Karger's theorem that there are at most $O(n^4)$
  $2$-approximate min-cuts in a graph, and moreover they can also be
  enumerated in polynomial time \cite{karger1993global,Karger00}. We
  can enumerate all $2$-approximate min-cuts and check each of them to
  see if they are violated. To verify whether a candidate cut $S$ is
  violated we consider the unsafe edges in $\delta(S) \cap \calU$ and
  sort them in decreasing order of $x_e$ value. Let $B'$ be a prefix
  of this sorted order of size $\min\{q, |\delta(S) \cap \calU|\}$. It
  is easy to see that that $\delta(S)$ is violated iff it is violated
  when $B = B'$. Thus, we can verify all candidate cuts efficiently.
\end{proof}

\begin{proof}[Proof of Lemma~\ref{lemma:augmentation_lp}]
  Suppose $S \subseteq V$ is violated in $(V,F_1)$, then
  $|\delta_{F_1}(S)| = p+q-1$, and $|\delta_{F_1 \cap \calS}(S)| <
  p$. Therefore, there are at least $q$ unsafe edges in
  $\delta_{F_1}(S)$. Let $B \subseteq \delta_{F_1}(S) \cap \calU$ such that $|B| =
  q$. Then, $|\delta_{F_1}(S) \setminus B| < p$, and since
  $x_e \leq 1$ for all $e \in E$,
  $\sum_{e \in \delta_{F_1}(S) - B} x_e < p$. By the first LP
  constraint, $\sum_{e \in \delta(S) - B} x_e \geq p$. Therefore,
  $\sum_{e \in \delta(S) \setminus F_1} x_e \geq 1$ as desired.
\end{proof}

\subsection{$\Omega(k)$ integrality gap for $(1,k)$-Flex-ST}
\label{sec:integrality_gap_1k}
For $(1,k)$-Flex-ST, Adjiashvili et al.\ \cite{AdjiashviliHMS20} showed an integrality gap
of $(k+1)$, in directed graphs, for an LP relaxation similar to the
one we described in this paper. They also showed a
poly-logarithmic factor inapproximability via a reduction from
directed Steiner tree. It is natural to ask whether the undirected
version of $(1,k)$-Flex-ST is super-constant factor hard when $k$ is large. As an
indication of potential hardness, we show an $\Omega(k)$-factor
integrality gap for the LP relaxation for $(1,k)$-Flex-ST. This is via
a simple modification of an example from \cite{ChakCKK15} that showed
an $\Omega(R)$-factor integrality gap for single-pair capacitated
network design, where $R$ is the connectivity requirement.

Consider a graph with $V = \{s, t\} \cup \{v_i: i \in [k+1]\}$, we add two parallel unsafe edges from $s$ to each $v_i$, and one safe edge from each $v_i$ to $t$. We give each unsafe edge a cost of $\frac 1 2$, and each safe edge a cost of $k+1$. See Figure ~\ref{1k_integrality_image}.

\begin{figure}
  \begin{center}
    \includegraphics[width = 0.4\linewidth]{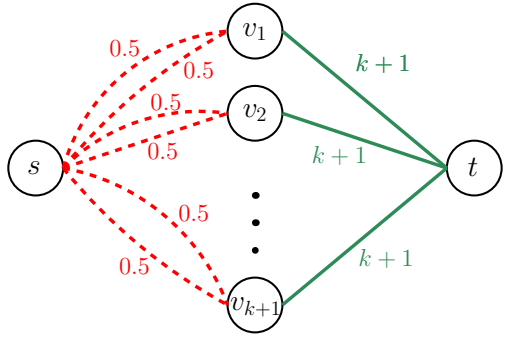}
  \end{center}
  \caption{Integrality gap for $(1,k)$-Flex-ST}
  \label{1k_integrality_image}
\end{figure}

\begin{claim}
Any optimal integral solution needs at least $\frac {k+1}2$ safe edges.
\end{claim}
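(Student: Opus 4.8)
The plan is to show something slightly stronger than stated: \emph{every} feasible solution $H$ (not just an optimal one) must use at least $\frac{k+1}{2}$ safe edges. The idea is to produce, directly from $H$ itself, one cleverly chosen $s$-$t$ cut whose $H$-boundary consists entirely of unsafe edges, so that the ``$\ge 1$ safe edge'' clause of the $(1,k)$ requirement cannot be the one satisfied, and then to count the edges of that boundary in terms of the number of safe edges of $H$.

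First I would fix a feasible $H \subseteq E$ and set $A = \{i \in [k+1] : v_i t \in H\}$, the set of indices whose (unique) safe edge is selected; note that $|A|$ is exactly the number of safe edges in $H$. The cut I will use is $S = \{s\} \cup \{v_i : i \in [k+1]\setminus A\}$. Since $s \in S$ and $t \notin S$, the set $S$ separates $s$ from $t$, so the $(1,k)$-flex-connectivity requirement applies to $\delta_H(S)$, i.e.\ $\delta_H(S)$ must contain a safe edge or at least $k+1$ edges in total.

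The key step is the observation that $\delta_H(S)$ contains \emph{no} safe edge and at most $2|A|$ edges overall. Indeed, the only safe edges in the instance are the $v_i t$, and $v_i t$ is cut by $S$ exactly when $v_i \in S$, i.e.\ when $i \notin A$ --- but then $v_i t \notin H$ by definition of $A$. The only unsafe edges are the two parallel copies of $s v_i$ for each $i$, and $s v_i$ is cut by $S$ exactly when $v_i \notin S$, i.e.\ when $i \in A$, contributing at most two edges per index of $A$. Hence $\delta_H(S)$ has no safe edge, so feasibility forces $|\delta_H(S)| \ge k+1$; together with $|\delta_H(S)| \le 2|A|$ this yields $|A| \ge \frac{k+1}{2}$, which is exactly the claim (an optimal integral solution is in particular feasible).

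I do not anticipate a genuine obstacle: the argument is a single cut-counting step, and the only point requiring care is the bookkeeping that $S$ is crossed by no safe edge of $H$ and by at most $2|A|$ unsafe edges, both immediate from the structure of the instance. I would then observe that the same reasoning applied to the fractional point placing value $\frac12$ on every unsafe edge and value $\Theta(1/k)$ on every safe edge shows the LP is feasible at cost $O(k)$, while the claim just proved forces every integral solution to cost $\Omega(k^2)$ (at least $\frac{k+1}{2}$ safe edges, each of cost $k+1$); combining these gives the advertised $\Omega(k)$ integrality gap.
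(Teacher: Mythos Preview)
Your proof is correct and follows essentially the same approach as the paper: both choose the cut $S=\{s\}\cup\{v_i:\text{the safe edge }v_it\text{ is not selected}\}$, observe that $\delta_H(S)$ contains no safe edges and at most twice as many unsafe edges as there are selected safe edges, and conclude via the $(1,k)$ requirement. The only cosmetic difference is that the paper argues by contradiction (assuming fewer than $\tfrac{k+1}{2}$ safe edges and exhibiting a violated cut) whereas you argue directly from feasibility; your closing paragraph about the fractional solution and the $\Omega(k)$ gap correctly anticipates the paper's next claim.
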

\begin{proof}
Consider any integral solution $F \subseteq E$ where $|F \cap \calS| < \frac{k+1}2$. Let $S = \{s\} \cup \{v_i: \{v_i, t\} \notin F\}$. Then, $\delta_F(S)$ is exactly the set of all unsafe edges from $s$ to $v_i$ such that $\{v_i, t\} \in F$. Thus, $|\delta_F(S)| \leq 2 \cdot |F| < k+1$. Since $\delta_F(S)$ has no safe edges, $S$ is a violated cut, so $F$ is not a feasible solution.
\end{proof}

\begin{claim}
Consider the following fractional solution $x \in [0,1]^{|E|}$: $x_e = \begin{cases}
1 & e \in \calU \\
\frac{2}{k+1} &e\in \calS
\end{cases}$. Then, $x$ is a feasible LP solution.
\end{claim}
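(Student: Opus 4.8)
The plan is to verify directly that $x$ satisfies the three families of constraints of the LP relaxation, instantiated with $p=1$, $q=k$, and $T=\{s,t\}$. The box constraints $x_e\in[0,1]$ are immediate: every unsafe edge receives value $1$ and every safe edge receives $\frac{2}{k+1}\le 1$ for $k\ge 1$. For the cut constraints I would first record a clean description of an arbitrary $s$-$t$ cut. Since $\delta(S)=\delta(V\setminus S)$, the constraints for $S$ and $V\setminus S$ coincide, so it suffices to consider $S\subset V$ with $s\in S$, $t\notin S$; such an $S$ is determined by the set $J=\{i\in[k+1]: v_i\notin S\}$ together with $I=[k+1]\setminus J$. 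The edges leaving $S$ are exactly the two parallel unsafe copies of $\{s,v_i\}$ for each $i\in J$ (so $|\delta(S)\cap\calU|=2|J|$, each of $x$-value $1$) together with the safe edge $\{v_i,t\}$ for each $i\in I$ (so $|\delta(S)\cap\calS|=|I|$, each of $x$-value $\frac{2}{k+1}$).

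Given this description, the capacitated constraint is a one-line computation:
\[
(k+1)\sum_{e\in\delta(S)\cap\calS} x_e + \sum_{e\in\delta(S)\cap\calU} x_e = (k+1)\cdot|I|\cdot\frac{2}{k+1} + 2|J| = 2|I|+2|J| = 2(k+1)\ge k+1 .
\]
For the first family of constraints, fix $B\subseteq\calU$ with $|B|\le k$; I would split into two cases on the size of $J$. If $|J|\ge \frac{k+1}{2}$, then $|\delta(S)\cap\calU|=2|J|\ge k+1$, so after deleting the at most $k$ edges of $B$ at least one unsafe edge of $x$-value $1$ survives in $\delta(S)\setminus B$, giving $\sum_{e\in\delta(S)\setminus B}x_e\ge 1$. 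If instead $|J|<\frac{k+1}{2}$, then $|I|=k+1-|J|>\frac{k+1}{2}$, and since $B$ contains no safe edges all $|I|$ safe edges of the cut survive; using integrality of $|I|$ (so $2|I|\ge k+2$) we get $\sum_{e\in\delta(S)\setminus B}x_e\ge |I|\cdot\frac{2}{k+1}\ge\frac{k+2}{k+1}>1$. This exhausts all cases and completes the verification that $x$ is feasible.

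The only slightly delicate point, and the closest thing to an obstacle, is the choice of the threshold $\frac{k+1}{2}$ in the case analysis for the first constraint family: it must be picked so that the two cases interlock, i.e. for $|J|$ just below the threshold the safe edges alone already contribute strictly more than $1$, while for $|J|$ at or above the threshold enough unsafe edges survive the removal of any $k$ of them. Everything else reduces to the routine arithmetic above. (As a side remark, this feasible $x$ has cost $3(k+1)$, whereas the preceding claim forces every integral solution to use at least $\frac{k+1}{2}$ safe edges and hence to cost at least $\frac{(k+1)^2}{2}$, which yields the promised $\Omega(k)$ integrality gap.)
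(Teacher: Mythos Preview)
Your proof is correct and follows essentially the same approach as the paper: parametrize an $s$--$t$ cut by which $v_i$ lie inside, verify the capacitated constraint by the identical arithmetic $2|I|+2|J|=2(k+1)$, and handle the first family of constraints by the same two-case split (your threshold $|J|\gtrless\frac{k+1}{2}$ is equivalent to the paper's $|S|\lessgtr\frac{k+3}{2}$ via $|J|=k+2-|S|$). The only cosmetic difference is that you invoke integrality of $|I|$ to get $2|I|\ge k+2$ in the safe-edge case, whereas the paper is content with $|I|\ge\frac{k+1}{2}$ giving exactly $1$; both suffice.
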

\begin{proof}
Let $S \subseteq V$ be any arbitrary cut such that $s \in S$, $t \notin S$. The capacitated constraint for $(1, k)$-flex connectivity is:
\[ (k+1) \sum_{e \in \delta_\calS(S)} x_e + \sum_{e \in \delta_\calU(S)} \geq k+1.\]
Substituting the values of $x_e$, we get
\[(k+1) \sum_{e \in \delta_\calS(S)} x_e + \sum_{e \in \delta_\calU(S)} = (k+1) \left(\frac 2 {k+1}\right)|\delta_\calS(S)| + |\delta_\calU(S)| = 2|\delta_\calS(S)| + |\delta_\calU(S)|.\]
Notice that $\delta(S)$ has one safe edge for each $v_i \in S$, since all safe edges are from $v_i$ to $t$ and $t \notin S$, so $|\delta_\calS(S)| = |S| - 1$. Similarly, $\delta(S)$ has two unsafe edges for every $v_i \notin S$, since $s \in S$. Thus $|\delta_\calU(S)| = 2(k+1 - (|S| - 1)) = 2(k - |S| + 2)$ Thus $2|\delta_\calS(S)| + |\delta_\calU(S)| = 2(|S| - 1) + 2(k - |S| + 2) = 2(k+1) \geq k + 1$.

For the remaining set of constraints, let $B \subseteq \calU$, $|B| = k$. We want to show that $\sum_{e \in \delta(S) - B} x_e \geq 1$. We case on $|S|$. If $|S| \geq \frac{k+3}2$, then by the analysis above, $|\delta_\calS(S)| = |S| - 1 \geq \frac{k+1}2$, so $\sum_{e \in \delta_\calS(S)} x_e \geq \frac{2}{k+1} \cdot \frac{k+1}{2} = 1$. Since $B \subseteq \calU$, $\sum_{e \in \delta(S) - B} x_e \geq 1$. Suppose instead that $|S| < \frac{k+3}2$. Then, since $x_e = 1$ for all unsafe edges,
\begin{align*}
    \sum_{\delta_\calU(S) - B} x_e \geq |\delta_\calU(S)| - |B| = 2(k - |S| + 2) - k = k + 4 - 2|S| > k+4 - (k+3) = 1
\end{align*}
In either case, we get our desired result.
\end{proof}

Notice that the total cost of any optimal integral solution is at least $(k+1) \frac {k+1}2 = \frac{(k+1)^2}2$. However, the optimal fractional solution has cost $k+1$ from the unsafe edges and $2(k+1)$ from the safe edges, for a total cost of $3(k+1)$. Thus, we get an $\Omega(k)$ integrality gap.

\subsection{An approach to approximate $(p,q)$-Flex-Steiner via $(p,q)$-Flex-ST}
\label{sec:steiner}

We consider the $(p,q)$-Flex-Steiner problem. The input is an
edge-weighted graph $G=(V,E)$ and a set $T \subseteq V$ of terminals
and our goal is to find a min-cost subset of edges $F \subseteq E$
such that in $(V,F)$ the terminals are $(p,q)$-flex-connected. We
first observed that $(p,q)$-flex-connectivity is symmetric and
transitive. Symmetry is easy to see since we are working in undirected
graphs. For transivity, if $u,v$ are $(p,q)$-flex-connected and $v,w$
are $(p,q)$-flex-connected then we claim that $u,w$ are
$(p,q)$-flex-connected. This too easily follows from the cut condition
for $(p,q)$-flex-connectivity. Thus, $(p,q)$-Flex-Steiner problem is
equivalent to the \emph{rooted} problem. In the rooted problem we are
given $G$, a root vertex $r \in V$ and set $T \subseteq V$ of
terminals and the goal is to find a min-cost subgraph $H=(V,F)$ such
that each $t \in T$ is $(p,q)$-flex-connected to $r$.

One can of course try to solve $(p,q)$-Flex-Steiner problem
directly. However, other than for $(1,q)$ and $(p,1)$ cases we do not
have any non-trivial results. It does not seem easy to generalize our
result for $(2,2)$-Flex-ST to $(2,2)$-Flex-Steiner
problem. Nevertheless, we can ask whether the single-pair algorithm
can somehow be used to develop a candidate algorithm for the Steiner
problem.  Suppose we have an $\alpha$-approximation for
$(p,q)$-Flex-ST.  Given a rooted $(p,q)$-Flex-Steiner instance we can
easily obtain an $ |T|\alpha$-approximation: for each $t \in T$ use
the approximation algorithm for $(p,q)$-Flex-ST to connect $t$ to $r$
and take the union of the solutions. Can we do better than this naive
approach?

Inspired by the success of a simple random permutation based greedy
algorithm for single-source vertex-connectivity problem
\cite{ChakrabortyCK08,ChuzhoyK08,ChekuriK08,Korula-thesis} we propose
the following algorithm which randomly permutes the terminals and
greedily connects the terminals to the root or previous terminals.

\begin{algorithm}[H]
\caption{Rooted $(p,q)$-Flex-Steiner via $(p,q)$-Flex-ST}
\label{steiner-algo}
\begin{algorithmic}[1]
    \State $F \gets \emptyset$
    \State Let $t_{i_1},t_{i_2},\ldots,t_{i_h}$ be a \emph{random} permutation of the terminal set $T \subset V$ where $h = |T|$
    \For{$j= 1, \dots, h$}
        \State $G_j$ is graph obtained by contracting $t_{i_1},\ldots,t_{i_{j-1}}$ into root $r$
        \State Use $(p,q)$-Flex-ST approx algorithm to connect $t_{i_j}$ to root $r$ in $G_j$. Let $F_j$ be the set of chosen edges.
        \State $F \gets F \cup F_j$
    \EndFor\\
    \Return $F$
\end{algorithmic}
\end{algorithm}

\begin{lemma}
Algorithm \ref{steiner-algo} outputs a feasible solution.
\end{lemma}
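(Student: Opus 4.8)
The plan is to show that at the end of Algorithm~\ref{steiner-algo}, every terminal $t_{i_j}$ is $(p,q)$-flex-connected to the root $r$ in $(V,F)$. The algorithm processes terminals in random order; in iteration $j$ it works in the graph $G_j$ obtained by contracting the already-processed terminals $t_{i_1},\ldots,t_{i_{j-1}}$ into $r$, and adds edges $F_j$ so that $t_{i_j}$ becomes $(p,q)$-flex-connected to $r$ in $G_j$ restricted to $F_j$ (or rather so that the $(p,q)$-Flex-ST instance on $G_j$ is satisfied). The feasibility claim does not depend on the random order at all — it holds for any permutation — so randomness plays no role here and can be ignored.

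The key observation is a monotonicity/transitivity property: contracting a set of vertices into $r$ can only help, and flex-connectivity to $r$ is preserved as we add more edges. Concretely, I would argue by induction on $j$ that after iteration $j$ the edge set $F^{(j)} := F_1 \cup \cdots \cup F_j$ makes each of $t_{i_1},\ldots,t_{i_j}$ $(p,q)$-flex-connected to $r$ in $(V,F^{(j)})$. For the inductive step, fix a cut $S$ in $(V,F^{(j)})$ with, say, $r \notin S$ and $t_{i_j} \in S$; I must show $|\delta_{F^{(j)} \cap \calS}(S)| \ge p$ or $|\delta_{F^{(j)}}(S)| \ge p+q$. If $S$ contains no previously-processed terminal, then in $G_j$ the set $S$ is still a valid $t_{i_j}$–$r$ separating set (the contracted vertices all sit on the $r$-side), and $\delta_{F_j}(S) \subseteq \delta_{F^{(j)}}(S)$ with the same safe/unsafe labels, so the cut condition for $S$ follows from the fact that $F_j$ is feasible for the $(p,q)$-Flex-ST instance on $G_j$. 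If instead $S$ does contain some previously processed terminal $t_{i_\ell}$ with $\ell < j$, then $S$ separates $t_{i_\ell}$ from $r$ in $(V,F^{(j-1)}) \subseteq (V,F^{(j)})$, and the cut condition holds by the induction hypothesis applied to $t_{i_\ell}$ (using that adding edges only increases both counts). The remaining case — $S$ contains neither $r$ nor $t_{i_j}$, i.e. both are on the outside — cannot make $S$ a violated cut for the pair $(t_{i_j}, r)$, so there is nothing to check.

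There is one subtlety to handle cleanly: the cut condition for $(p,q)$-flex-connectivity of a pair is a statement about \emph{all} cuts separating that pair, and the argument above must be made for an arbitrary such cut, being careful about the correspondence between cuts in the contracted graph $G_j$ and cuts in $G$. The clean way to phrase it: a cut $S \subseteq V$ separating $t_{i_j}$ from $r$ either "respects" the contraction (no processed terminal on the $t_{i_j}$-side) — in which case its image in $G_j$ is a $t_{i_j}$–$r$ cut with an identical edge boundary — or it does not, in which case it already separates some earlier terminal from $r$ and is handled by induction. The main (and only real) obstacle is making sure this dichotomy is exhaustive and that the safe/unsafe status of edges is preserved under contraction, which it is since contraction neither deletes edges nor relabels them. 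Everything else is the routine observation that $\delta_{F'}(S) \supseteq \delta_{F}(S)$ when $F' \supseteq F$, so the three connectivity categories are monotone under edge addition.
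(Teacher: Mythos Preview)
Your proposal is correct and follows essentially the same approach as the paper: both arguments split on whether the offending cut $S$ contains an earlier-processed terminal, handling one case by the feasibility of $F_j$ in the contracted graph $G_j$ and the other by appealing to an earlier terminal. The only cosmetic difference is that you phrase it as forward induction on $j$ while the paper uses a minimal-counterexample contradiction; the dichotomy and the underlying observations (cuts respecting the contraction have the same boundary in $G_j$, and edge addition is monotone for the cut condition) are identical.
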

\begin{proof}
  Suppose for the sake of contradiction that some terminal is not
  $(p, q)$-flex-connected to the root $r$ in $(V, F)$, and let $j$ be
  the smallest index such that $t_{i_j}$ is not connected to the root
  $r$. Then, there must be some cut $S$ with $t_{i_j} \in S$,
  $r \notin S$, where $\delta_F(S)$ has $< p$ safe edges and $< p+q$
  total edges.

  First, suppose $t_{i_k} \in S$ for some $k < j$. Then, $S$ is an
  unsatisfied cut separating $t_{i_k}$ from $r$, implying that
  $t_{i_k}$ is not $(p, q)$-flex-connected to $r$. This contradicts
  the minimality of $j$.

  Therefore, $t_{i_k} \notin S$ for all $k < j$. Then, in the $j$th
  iteration of Algorithm~\ref{steiner-algo}, $G_j$ would not have
  contracted any edges in $\delta_F(S)$, and $S$ would have separated
  $t_{i_j}$ from the contracted root $r$, a contradiction.
\end{proof}

The analysis for the preceding random greedy algorithm for
vertex-connectivity is based on a cost sharing argument. We set up the
relevant notion and state a conjecture and its implication.  Given an
instance of rooted $(p,q)$-Flex-Steiner instance on $G=(V,E)$ with
root $r$ and terminal set $T$ we define the quantity $\beta_t$ for
each $t \in T$ as follows: $\beta_t$ is the minimum-cost to
$(p,q)$-flex-connect $t$ to $(T-t)+r$: in other words this is the cost
of $(p,q)$-flex-connecting $t$ to the root in the graph $G_t$ where we
contract all the terminal in $T-t$ into $r$.

\begin{conjecture}
\label{conj:cost-share}
There exists an integer valued function $f(p,q)$ such that
$\sum_{t \in T} \beta_t \le f(p,q) \opt$ where $\opt$ is the optimum
cost for the given rooted $(p,q)$-Flex-Steiner instance.
\end{conjecture}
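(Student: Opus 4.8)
The plan is to prove the conjecture by a charging argument against a fixed optimal solution. Let $H^* = (V, F^*)$ be an optimal subgraph that $(p,q)$-flex-connects every terminal to the root $r$. The trivial bound $\beta_t \le \opt$ holds for every $t$, since $F^*$ already $(p,q)$-flex-connects $t$ to the contracted vertex $X_t := (T - t) + r$: any cut $S$ with $t \in S$ and $X_t \cap S = \emptyset$ separates $t$ from $r$ and so is not violated. The content of the conjecture is to replace $\opt$ by a \emph{local} quantity. Concretely, I would assign to each terminal $t$ a \emph{witness} edge set $W_t$, consisting of edges of $F^*$ near $t$ together with a bounded number of cheap auxiliary edges, such that $W_t$ is feasible for $(p,q)$-flex-connecting $t$ to $X_t$ and such that $\sum_{t \in T} \cost(W_t) \le f(p,q)\,\cost(F^*)$. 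Since $\beta_t \le \cost(W_t)$, taking $f(p,q)$ to be the resulting constant (rounded up to an integer) gives the claim.

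To construct the sets $W_t$ I would use a ball-growing / core decomposition of $F^*$ relative to the terminals, cleanest through the capacitated reformulation from Section~\ref{sec:prelim}: give safe edges capacity $p+q$ and unsafe edges capacity $p$, so that $(p,q)$-flex-connectivity of $t$ to $r$ certifies $p(p+q)$ units of $t$-$r$ flow in $F^*$. Root $F^*$ at $r$ and, for each terminal $t$, let $R_t$ be the minimal connected set containing $t$ across whose boundary $F^*$ already carries the full $p(p+q)$ capacity toward $r$; set $W_t$ to be the edges of $F^*$ inside $R_t$ plus at most $O(p+q)$ edges linking the boundary of $R_t$ to the nearest already-selected terminal or to $r$. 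Two things must be verified: (i) $W_t$ genuinely $(p,q)$-flex-connects $t$ to $X_t$, which should follow from a Menger/flow-rerouting argument inside $R_t$ together with the transitivity of flex-connectivity observed in Section~\ref{sec:steiner}; and (ii) the regions $R_t$, and their auxiliary edges, overlap on any edge of $F^*$ at most $f(p,q)$ times --- this is where the dependence on $p,q$ enters.

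The main obstacle, and the reason this remains a conjecture, is step (ii). Unlike edge- or vertex-connectivity, $(p,q)$-flex-connectivity has no clean laminar or uncrossable cut structure --- circumventing this is the whole point of the paper --- so there is no off-the-shelf decomposition of $F^*$ into $O(p+q)$ near-disjoint objects to bound the overlap; worse, the safe/unsafe asymmetry means a single safe edge can simultaneously substitute for $p$ unsafe edges in the witnesses of many terminals, which a priori threatens an $\Omega(|T|)$ blowup rather than an $f(p,q)$ one. I expect that proving the conjecture requires either a new structural theorem --- for instance, that a minimal subgraph $(p,q)$-flex-connecting $t$ to a set decomposes into $O(p+q)$ pieces of bounded mutual overlap --- or an indirect argument that analyzes Algorithm~\ref{steiner-algo} directly via a potential function over the random permutation together with the LP dual of the relaxation in Section~\ref{sec:prelim}, bypassing any per-terminal bound. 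The smallest informative test cases are $(1,2)$ and $(2,2)$, where the local structure is small enough to analyze by hand and should reveal the correct form of $f(p,q)$.
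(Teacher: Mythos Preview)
The statement you are addressing is a \emph{conjecture} in the paper, not a theorem: the paper offers no proof and explicitly leaves it open, using it only to motivate Algorithm~\ref{steiner-algo} and to state the conditional approximation guarantee in the subsequent lemma. So there is no ``paper's own proof'' to compare against.

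Your proposal is accordingly not a proof but a research plan, and you say so yourself: the charging scheme via witness sets $W_t$ and ball-growing regions $R_t$ is a reasonable heuristic, but step~(ii) --- bounding the overlap of the regions by a function of $p,q$ alone --- is exactly the missing ingredient, and you correctly identify why it is hard (no uncrossable/laminar structure for flex-cuts, and a single safe edge can serve many terminals simultaneously). One additional concern you do not flag: even step~(i) is not obviously true as stated. Having $p(p+q)$ capacitated flow across $\partial R_t$ does \emph{not} certify $(p,q)$-flex-connectivity of $t$ to $X_t$ through $W_t$; the capacitated relaxation is only a necessary condition, and the paper's examples (e.g.\ the $(2,2)$-Flex-ST analysis in Section~\ref{sec:22st}) show that the gap between the two notions is precisely where the difficulty lies. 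So both the feasibility of $W_t$ and the bounded-overlap claim are genuine gaps, and the conjecture remains open.
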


Using a simple analysis that has been used previously
\cite{ChakrabortyCK08,ChuzhoyK08,ChekuriK08,Korula-thesis}, we obtain
the following.

\begin{lemma}
  If Conjecture~\ref{conj:cost-share} is true then
  Algorithm~\ref{steiner-algo} yields a randomized
  $O(f(p,q) \alpha(p,q) \ln |T|)$ approximation for rooted
  $(p,q)$-Flex-Steiner problem where $\alpha(p,q)$ is the
  approximation available for $(p,q)$-Flex-ST. Moroever there is also
  a deterministic variant of Algorithm~\ref{steiner-algo} that
  achieves the same approximation bound.
\end{lemma}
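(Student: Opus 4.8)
The plan is to combine the random-permutation greedy algorithm (Algorithm~\ref{steiner-algo}) with the cost-sharing bound of Conjecture~\ref{conj:cost-share} via the standard ``online Steiner''-style analysis that has been used for single-source vertex-connectivity. Let $\opt$ denote the optimum cost for the rooted $(p,q)$-Flex-Steiner instance, let $h = |T|$, and for $j = 1, \dots, h$ let $F_j$ be the edge set chosen in iteration $j$, so the output cost is $\sum_{j=1}^h \cost(F_j)$. The key observation is that in iteration $j$, the terminal $t_{i_j}$ must be $(p,q)$-flex-connected to the contracted root $r$ in $G_j$, which is the graph obtained from $G$ by contracting $t_{i_1}, \dots, t_{i_{j-1}}$ into $r$. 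In particular, connecting $t_{i_j}$ to $(T - t_{i_j}) + r$ in $G$ is a feasible solution to this single-pair instance (contracting $t_{i_1},\ldots,t_{i_{j-1}}$ is only a subset of the contractions defining $\beta_{t_{i_j}}$'s instance, so that cheaper instance has smaller optimum), hence the $(p,q)$-Flex-ST optimum in $G_j$ for the pair $(t_{i_j}, r)$ is at most $\beta_{t_{i_j}}$. Since our approximation algorithm is an $\alpha = \alpha(p,q)$-approximation, $\cost(F_j) \le \alpha \beta_{t_{i_j}}$.

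Next I would take the expectation over the random permutation. Condition on which set $S_j$ of terminals occupies the first $j$ positions; within that conditioning, $t_{i_j}$ is a uniformly random element of $S_j$, so $\mathbb{E}[\beta_{t_{i_j}} \mid S_j] = \frac{1}{|S_j|}\sum_{t \in S_j}\beta_t \le \frac{1}{j}\sum_{t \in T}\beta_t$, using $|S_j| = j$ and nonnegativity of $\beta_t$. Summing over $j$ and using linearity of expectation,
\[
\mathbb{E}[\cost(F)] \le \alpha \sum_{j=1}^h \frac{1}{j}\sum_{t\in T}\beta_t = \alpha\, H_h \sum_{t\in T}\beta_t \le \alpha\, H_h\, f(p,q)\,\opt,
\]
where $H_h = \sum_{j=1}^h 1/j = O(\ln h)$ and the last inequality is exactly Conjecture~\ref{conj:cost-share}. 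This gives the claimed randomized $O(f(p,q)\,\alpha(p,q)\ln|T|)$ approximation.

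For the deterministic variant, I would derandomize using the method of conditional expectations (or equivalently the greedy choice of Pitt, as in the online Steiner tree literature): fix the terminals into the permutation one position at a time, always choosing the next terminal so as to keep the conditional expected cost bound from above, or alternatively use a greedy rule that always appends the terminal $t$ minimizing the incremental cost of connecting $t$ to the current contracted graph. The analysis showing this greedy rule achieves the same $H_h$-factor overhead against $\sum_t \beta_t$ is the classical argument; I would cite \cite{ChakrabortyCK08,ChuzhoyK08,ChekuriK08,Korula-thesis} rather than reproduce it.

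The main obstacle here is not the lemma itself — modulo the conjecture, the argument is entirely routine — but rather establishing Conjecture~\ref{conj:cost-share}, which is left open. The cost-sharing inequality $\sum_{t}\beta_t \le f(p,q)\opt$ is the real content: one needs to charge the ``marginal'' flex-connection cost of each terminal against a global optimum solution, which for vertex-connectivity relies on delicate structural facts about how an optimal subgraph decomposes. Whether an analogous structural decomposition holds in the flexible-connectivity model — where cuts must have $p$ safe or $p+q$ total edges — is exactly the difficulty, and it is why the paper states it only as a conjecture rather than proving the $O(\log|T|)$ bound outright.
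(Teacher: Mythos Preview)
Your monotonicity claim in the first paragraph is backwards. You assert that the $(p,q)$-Flex-ST optimum in $G_j$ is at most $\beta_{t_{i_j}}$, reasoning that $G_j$ involves only a subset of the contractions defining $\beta_{t_{i_j}}$. But contracting \emph{more} terminals into the root can only \emph{lower} the cost of flex-connecting a terminal to it: any feasible solution in the less-contracted graph remains feasible after further contraction. Hence the single-pair optimum in $G_j$ (where only $t_{i_1},\dots,t_{i_{j-1}}$ are contracted) is \emph{at least} $\beta_{t_{i_j}}$ (where all of $T\setminus\{t_{i_j}\}$ is contracted), not at most. The inequality $\cost(F_j)\le \alpha\,\beta_{t_{i_j}}$ is therefore unjustified, and the rest of your chain collapses.

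The standard fix --- and the one implicit in the references the paper cites --- is to invoke Conjecture~\ref{conj:cost-share} not once on the full terminal set $T$ but separately on each random prefix. Condition on $S_j=\{t_{i_1},\dots,t_{i_j}\}$ and let $\beta_t^{S_j}$ denote the minimum cost of $(p,q)$-flex-connecting $t$ to $(S_j\setminus\{t\})\cup\{r\}$ after contraction; then the single-pair optimum in $G_j$ is exactly $\beta_{t_{i_j}}^{S_j}$, so $\cost(F_j)\le \alpha\,\beta_{t_{i_j}}^{S_j}$. Since $t_{i_j}$ is uniform in $S_j$,
\[
\mathbb{E}\bigl[\beta_{t_{i_j}}^{S_j}\mid S_j\bigr]=\frac{1}{j}\sum_{t\in S_j}\beta_t^{S_j}\;\le\;\frac{f(p,q)}{j}\,\opt_{S_j}\;\le\;\frac{f(p,q)}{j}\,\opt,
\]
the first inequality by the conjecture applied to the sub-instance with terminal set $S_j$, the second because any feasible solution for $T$ is feasible for $S_j\subseteq T$. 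Summing over $j$ recovers the $H_h$ factor, and your discussion of the deterministic variant then applies unchanged.
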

We omit the proof of the preceding lemma since it is an easy
consequence of the ideas in the previous work that we mentioned. We
believe that Conjecture~\ref{conj:cost-share} may be useful in
understanding the structure of $(p,q)$-Flex-Steiner problem.

\end{document}